\crefname{figure}{Figure}{Figure}
\begin{document}

\title{Dynamic Race Detection with O(1) Samples}

\author{Mosaad Al Thokair*}
\email{mosaada2@illinois.edu}
\affiliation{%
  \institution{University of Illinois at Urbana-Champaign}
  \city{Urbana}
  \country{USA}
}

\author{Minjian Zhang*\textsuperscript{†}}
\email{minjian2@illinois.edu}
\affiliation{%
  \institution{University of Illinois at Urbana-Champaign}
  \city{Urbana}
  \country{USA}
}

\author{Umang Mathur}
\email{umathur@comp.nus.edu.sg}
\affiliation{%
  \institution{National University of Singapore}
  \city{Singapore}
  \country{Singapore}
}

\author{Mahesh Viswanathan}
\email{vmahesh@illinois.edu}
\affiliation{%
  \institution{University of Illinois at Urbana-Champaign}
  \city{Urbana}
  \country{USA}
}
\footnotetext[1]{These authors contributed equally to this work.}
\footnotetext[2]{Corresponding author: minjian2@illinois.edu}

%!TEX root = main.tex

\begin{abstract}
Happens before-based dynamic analysis is the go-to technique for detecting data races in large scale software projects due to the absence of false positive reports. However, such analyses are expensive since they employ expensive vector clock updates at each event, rendering them usable only for in-house testing. In this paper, we present a sampling-based, randomized race detector that processes only \emph{constantly many} events of the input trace even in the worst case. This is the first \emph{sub-linear} time (i.e., running in $o(n)$ time where $n$ is the length of the trace) dynamic race detection algorithm; previous sampling based approaches like {\pacer} run in linear time (i.e., $O(n)$). Our algorithm is a property tester for {\acrhb}-race detection --- it is sound in that it never reports any false positive, and on traces that are far, with respect to hamming distance, from any race-free trace, the algorithm detects an {\acrhb}-race with high probability. Our experimental evaluation of the algorithm and its comparison with state-of-the-art deterministic and sampling based race detectors shows that the algorithm does indeed have significantly low running time, and detects races quite often.
\end{abstract}

%% This command processes the author and affiliation and title
%% information and builds the first part of the formatted document.

\begin{CCSXML}
<ccs2012>
   <concept>
       <concept_id>10011007.10011074.10011099.10011102.10011103</concept_id>
       <concept_desc>Software and its engineering~Software testing and debugging</concept_desc>
       <concept_significance>500</concept_significance>
       </concept>
 </ccs2012>
\end{CCSXML}
\ccsdesc[500]{Software and its engineering~Software testing and debugging}

\keywords{Concurrency - Shared memory, Dynamic program analysis, Property testing, Happens-before race detection}

\maketitle

%!TEX root = ./main.tex

\section{Introduction}
\seclabel{intro}

% Race detection is important, dynamic race detection, sound techniques, HB, and overhead

A concurrent program is said to have a data race (or simply a race) if it has an execution in which a pair of threads access a shared memory location consecutively and in which one of the accesses writes a value to the shared memory location. Data races are one of the most common source of bugs in concurrent programs and are the cause of more serious problems like data corruption~\cite{boehmbenign2011,racemob,Narayanasamy2007}. Absence of data races is often a pre-requisite for the semantics of programs to be well defined~\cite{boehm-adve08,SoftwareErrors2009} and for compiler optimizations to be sound~\cite{sa08,sevcik11}. Sound dynamic data race prediction is a popular approach to identify such bugs in concurrent programs. Here one observes a single execution of the program, and the goal of the analysis is to see if the execution provides evidence for the presence of a data race in the program. This requires reasoning about alternate re-orderings of the events of the observed execution to determine if an execution with a data race is possible.

The simplest and most commonly used dynamic data race detection technique is based on Lamport's happens-before (\acrhb) partial order~\cite{lamport1978time}. The idea is to (implicitly) compute the {\acrhb} partial order on the events of the observed execution and check if there are a pair of conflicting accesses to a shared memory location that are unordered by {\acrhb}. Happens before based data race detection is known to be \emph{sound} --- the presence of {\acrhb} unordered memory accesses is proof that the program has a data race. Early vector clock based algorithms~\cite{Mattern1988,Fidge:1991:LTD:112827.112860} for happens before race detection have been improved over the years to the extent that it is the go-to data race detection in practice~\cite{threadsanitizer,threadsanitizerLLVM}. However, the analysis has high runtime costs~\cite{marino2009literace,bond2010pacer,racechaser} due to expensive metadata updates at each event, despite optimizations introduced~\cite{fasttrack,Pozniansky:2003:EOD:966049.781529}. This makes dynamic race detection suitable only for in-house testing, and makes an otherwise lucrative premise of data races \emph{exceptions}~\cite{elmas2007goldilocks,adve2010data} rather impractical.

% Sampling based approaches

One approach that attempts to reduce the analysis cost in {\acrhb} race detection uses \emph{sampling}~\cite{bond2010pacer}. The informal idea behind this approach is to sample some events of the observed trace, and analyze only the sampled subset of events to determine if the program has data races, with the hope being that the sampled subset will be small compared to the whole trace and that often it will be sufficient to find a race. Though the algorithm presented in~\cite{bond2010pacer} ({\pacer}) is experimentally shown to run faster than known deterministic race detection algorithms, its expected running time is linear and in the worst case it can be shown to be no better than a deterministic algorithm --- there are examples on which, with non-zero probability, {\pacer} will analyze \emph{all} the events in the trace.

The motivation behind this work is to explore the possibility of a sound race detection algorithm that analyzes, even in the worst case, a \emph{sub-linear} number of events in an execution (i.e., ``little $o$'' of the length of the trace), but nonetheless has provable mathematical guarantees of precision. The hope is that such an algorithm will help scale sound dynamic race detection beyond in-house testing.

% Property testing

To achieve our goal, we investigate the design of a \emph{property tester}~\cite{prop-test-book} for {\acrhb} race detection. A (one-sided) $(\epsilon,\delta)$-property tester for a decision problem $L$ is an algorithm $A$  that meets the following obligation: on an input $x \in L$, $A$ answers ``yes'' with probability $1$, while on an input that is ``$\epsilon$-far'' from any input in $L$, $A$ answers ``no'' with probability at least $1-\delta$. Notice that the definition of a property tester is based on a distance metric on input strings. The standard notion of distance used in property testing is hamming distance. Therefore, rephrasing and specializing to {\acrhb} race detection, we have the following. A property tester for {\acrhb} race detection is an algorithm $A$ such that on any {\acrhb} race-free execution $\tr$, $A$ answers ``yes'', and on any execution $\tr$ in which at least $\epsilon$ fraction of the events must be modified to obtain a race-free execution, $A$ answers ``no'' with probability at least $1-\delta$. Any sound and complete algorithm for {\acrhb} race detection is, by definition, a property tester for race detection since it distinguishes between race-free and racy executions. However, a property tester has weaker obligations and solves a decision problem approximately --- on executions $\tr$ that have an {\acrhb} race but at the same time are very close to a race-free execution, a property tester has no obligation to correctly classify them as having an {\acrhb} race. This flexibility has enabled computer scientists to design extremely fast, but nonetheless useful, algorithms for a variety of decision problems~\cite{prop-test-book} and has led to sub-linear algorithm design being a vibrant field of study for the past 25 years.

% Our results

In this paper we present a $(\epsilon,\delta)$-property tester \RND~(\underline{\sf R}ace \underline{\sf P}roperty \underline{\sf T}ester) for {\acrhb}-race detection that provably examines only \emph{constantly many} events in the observed program execution. More precisely, let $t$ be an upper bound on the number of threads and $\lh$ be the maximum number of locks held at any point in the trace of a concurrent program. Our property tester analyzes only $\widetilde{O}(t+\lh)$ events of a trace $\tr$ and correctly classifies them as having an {\acrhb}-race with probability at least $1-\delta$, when $\tr$ is $\epsilon$-far from race-free executions. Here $\widetilde{O}(\cdot)$ hides constant, $\ln(1/\delta)$ and $(1/\epsilon)$ factors. Notice that the number of events examined by our property tester only depends on the parameters $t$ and $\lh$,
which are often very small, regarded as constants, 
and is independent of the length of the input trace $\tr$. 
Also note that, by design, \RND is a sound race detector ---- whenever it flags the presence of a race,
the execution has a real race.

Our property tester is a very simple, almost na\"{i}ve, algorithm, which maybe a feature when it comes to implementing it and deploying it in practice. It works as follows. If the input trace is ``short'' (defined precisely in \algoref{outline-property-tester}), run a deterministic race detector such as {\fasttrack}~\cite{fasttrack}. 
Otherwise, sample, uniformly at random, $O(\frac{\ln (1/\delta)}{\epsilon})$ sub-traces of input $\tr$, each of length $O(\frac{t+\lh}{\epsilon})$, and check that none of the sampled sub-traces contain an {\acrhb} race. If they do, the algorithm declares the input trace $\tr$ to be racy, and otherwise, declares it to be race-free. To check whether any of the sampled sub-traces contain an {\acrhb}-race, we could use any {\acrhb} race detection algorithm. In our experiments, we use the {\fasttrack} algorithm~\cite{fasttrack} that uses vector clocks and employs the \emph{epoch} optimization. 

The challenge, as for most randomized algorithms, is to prove that this simple algorithm is correct. This means we need to show that, if the input $\tr$, observed by running a multi-threaded program, is $\epsilon$-far from every race-free execution, then our algorithm will find an {\acrhb} race with high probability. The crux of our correctness proof is in the following observation. We show that any trace $\tr$ that is $\epsilon$-far from every race-free execution, has \emph{many, short} (of length $\widetilde{O}(t+\lh)$) sub-traces that contain an {\acrhb} race. Thus, by sampling a few different sub-traces independently, using standard arguments, one can show that the algorithm's answer is correct with high probability. 

% epsilon, delta and how to use in practice

We expect that the promise of a constant runtime overhead race detector will be useful for practitioners.
Given that the formal guarantee of our property tester is parameterized by $\epsilon$ and $\delta$, it is natural to ask how a practitioner should use our algorithm. After all, on the face of it, it seems like we need to know how far an observed execution $\tr$ is from race-free traces! \RND, like most dynamic techniques, is primarily an approach to find bugs. Therefore, our recommendation is to view $\epsilon$ and $\delta$ as adjustable parameters that a software developer can progressively decrease based on resource availability and past experiences rather than obligated parameters that one must decide for each program. If at any stage a race is discovered then debugging can begin. On the other hand, if no race is discovered even as $\epsilon$ and $\delta$ are decreased, then the software developer can be more confident about the reliability of the code based on the mathematical statements that back the correctness of the property tester. Finally, our experimental results show that even when the $\epsilon$ used in the algorithm is a poor measure of the actual distance of the input $\tr$ from race-free executions, the algorithm detects {\acrhb} races reasonably often.

% Experimental results

\RND has been implemented. We have evaluated the performance of \RND on benchmark examples and compare it against the state-of-the-art deterministic (\fasttrack) and sampling-based (\pacer) \acrhb race detector, to see if the theoretical promises hold. We choose not to compare against techniques which employ a two-phase hybrid analysis~\cite{racemob,Choi02,razzer2019} because our innovations are primarily in dynamic analysis which is orthogonal to these approaches. Our techniques can be modularly plugged into hybrid race detection techniques to reduce the running time of the dynamic analysis phase and like RaceMob, can benefit from an additional static analysis phase (see \secref{related} for more details). 

Preliminary results suggest that \RND is a promising approach. When compared with \fasttrack and \pacer, \RND has the lowest running time among the $3$. Moreover, \RND's competitive advantage grows as the length of the trace increases. In fact, \RND's running time flattens out as the trace length grows in our experiments. Despite that, our results show that \RND reports a race quite often. This is especially true when considering traces that have a large proportion of race warnings --- when the number of race warnings reported by \fasttrack divided by trace length is at least $10^{-5}$, \RND detects races with at least the same probability if not better than \pacer. This is despite the fact that \RND in these experiments was run with a large value for $\epsilon$, namely $0.01$. Detailed experimental results are presented in \secref{experiments}.
%!TEX root = ./main.tex

\section{Background and Preliminaries}
\seclabel{prelim}

In this section we discuss preliminary notations and also
recap relevant background on data race detection and property testing.

\subsection{Traces and Data Races}

\myparagraph{Concurrent Program Traces and Events}{
	The focus of our work is dynamic race detection,
	where one monitors the execution \emph{trace} of a concurrent program,
	observing \emph{events} generated by different threads,
	and analyzing it to infer the presence of a data race.
	Each event is labeled with a tuple $\ev{t, o}$ 
	(denoted simply as $e = \ev{t, o}$), 
	where $t$
	is the unique identifier of the thread that performs $e$
	and $o$ represents the operation associated with $e$.
	% \ucomment{we abuse notation and write  $e = \ev{t, o}$.}
	For our exposition, the operation $o$ can be one of~\footnote{We omit other synchronizations like forks and joins or wait-notify, for simplicity of presentation. It is straightforward to accommodate them, and all our results apply to the more general setting too. Further, our experiments do account for such events in the benchmarks.}:
	\begin{enumerate*}[label=(\alph*)]
		\item read/write access to a memory location $x$
		(i.e., $o = \rd(x)$ or $o = \wt(x)$), or
		% \item program-order synchronization --- fork/join of another thread $u$
		% (i.e., $o = \fork(u)$ or $o = \join(u)$),
		\item lock-based synchronization --- acquire/release of a lock $\lk$
		(i.e., $o = \acq(\lk)$ or $o = \rel(\lk)$)
	\end{enumerate*}.
	We use the notation $\ThreadOf{e} = t$ and $\OpOf{e} = o$ for the event $e = \ev{t, o}$.
	A trace $\tr$ can thus be viewed as a sequence of such events (denoted $\events{\tr}$).
	%\footnote{As with most prior works on dynamic race detection, we assume sequential consistency, in which a trace can be assumed to be a totally ordered sequence of events. Most dynamic race detectors essentially linearize the trace when observing events.}.
	% Traces are assumed to be well-formed --- 
	% a thread $t$ is forked before any event is performed by $t$,
	% a thread $t$ is joined only after all events of $t$ have been performed,
	% and critical sections on the same lock (boundary between an acquire and its matching release)
	% do not overlap.
	We denote by $\threads{\tr}$, $\locks{\tr}$ and $\vars{\tr}$
	to denote the set of threads, locks and memory locations that appear in the trace $\tr$.
	We use $|\tr|$ to denote the length of $\tr$.
	}
	
	\myparagraph{Sub-traces}{
	Traces, as mentioned, are a sequence of events. 
	We will adopt the convention that the first event in the sequence has index $0$.
	Thus, a trace of length $n$ is of the form $\tr = e_0e_1\cdots e_{n-1}$. 
	The $i^\text{th}$ event of trace $\tr$ (namely $e_i$) will also be denoted as $\ith{\tr}{i}$. A \emph{sub-trace} $\substr{\tr}{i}{j} = e_ie_{i+1}\cdots e_{j-1}$ is the subsequence of $\tr$ of length $j-i$ from index $i$ to index $j-1$. When $j \leq i$, we adopt the convention that $\substr{\tr}{i}{j}$ is the empty sequence $\varepsilon$. The \emph{concatenation} of traces $\tr_1 = e_0\cdots e_{n-1}$ and $\tr_2 = f_0\cdots f_{m-1}$ is the sequence $e_0\cdots e_{n-1}f_0\cdots f_{m-1}$ of length $n+m$ and will be denoted by $\tr_1 \concat \tr_2$.
	% \ucomment{define $\suffix{\tr}{i}$ or dont use it.}
	}
	
	\myparagraph{Well formed Traces and Sub-traces}{
	Executions of concurrent programs, in addition to being a 
	sequence of events of the form described above, satisfy some properties. 
	A trace $\tr = e_0 \cdots e_{n-1}$ is \emph{well formed} if
	critical sections on the same lock do not overlap.
	That is, for every $j < n$, if $e_j = \ev{t,\rel(\lk)}$ releases lock $\lk$,
	then there is an $i < j$ such that $e_i = \ev{t,\acq(\lk)}$ and
	further, for every $i < k < j$, $\OpOf{e_k} \not\in \set{\acq(\lk), \rel(\lk)}$\footnote{We assume that locks are not re-entrant;
	all our results can nevertheless be extended in the presence of such locks.}.
	\begin{comment}
	 it satisfies the following:
	\begin{description}
	\item[\it(mutual-exclusion)] If $e_j = \ev{t,\rel(\lk)}$ then there is an $i < j$ such that $e_i = \ev{t,\acq(\lk)}$ and for every $i < k < j$, $\OpOf{e_k} \not\in \set{\acq(\lk), \rel(\lk)}$\footnote{We assume that locks are not re-entrant;
	all our results can nevertheless be extended in the presence of such locks.}.
	\item[\it(forks and joins)] 
	If $e_i = \ev{t,\fork(u)}$ then for all $k \leq i$, $\ThreadOf{e_k} \neq u$.
	Further, if $e_j = \ev{t,\join(u)}$ then there is an $i < j$ such that 
	$e_i = \ev{t,\fork(u)}$, for every $i < k < j$, 
	$\OpOf{e_k} \not\in \set{\fork(u), \join(u)}$. 
	and, for every $k \geq j$, $\ThreadOf{e_k} \neq u$. 
	\end{description}
	\ucomment{Why write it formally? Besides, we are not using wellformedness on fork and join operations.}
	\end{comment}
	Henceforth, we will assume traces to be well formed.
	A sequence $\eta$ is a \emph{well formed sub-trace} if there is a well formed trace $\tr$ and indices $i$ and $j$ such that $\eta = \substr{\tr}{i}{j}$. Finally, in a well formed sub-trace $\eta = e_0e_1\cdots e_{n-1}$, we say that lock $\lk$ is \emph{held} by thread $t$ at $j$ (for $0 \leq j \leq n$) if either
	\begin{enumerate*}[label=(\alph*)]
	\item there is an $i < j$ such that $e_i = \ev{t,\acq(\lk)}$ 
	and for every $i < k < j$, $\OpOf{e_k} \neq \rel(\lk)$, or
	\item there is an $i \geq j$ such that $e_i = \ev{t,\rel(\lk)}$ 
	and for every $j \leq k < i$, $\OpOf{e_k} \neq \acq(\lk)$.
	\end{enumerate*}
	We say that lock $\lk$ is held at the beginning (resp. end)
	of a non-empty well-formed sub-trace $\eta$ if 
	$\lk$ is held at $0$ 
	(resp. $|\eta|$).	
	}

	%\ucomment{Example might be okay here. Illustrate well-formed subtraces and locks held, etc.}

\myparagraph{Data Races}{
	A trace 
	is said to have a data race if two different threads access
	the same memory location without explicit synchronization in between.
	This is formalized in terms of 
	Lamport's Happens-Before (\acrhb) partial order~\cite{lamport1978time},
	which we recap next, while generalizing this notion 
	to well formed sub-traces.
	\begin{definition}[Happens-Before]
	For a well formed sub-trace $\tr$, the happens-before partial order induced by $\tr$,
	denoted $\stricthb{\tr}$, is the smallest binary relation on $\events{\tr}$ 
	such that for any two events
	$e_1 \neq e_2 \in \events{\tr}$, we have 
	$e_1 \stricthb{\tr} e_2$ if $e_1$ occurs before $e_2$ in $\tr$, and
	one of the following holds:
	% \begin{enumerate}[label=(\alph*)]
	% 	\item ({\bf program-order}) $\ThreadOf{e_1} = \ThreadOf{e_2}$, 
	% 	$\OpOf{e_1} = \fork(\ThreadOf{e_2})$, or 
	% 	$\OpOf{e_2} = \join(\ThreadOf{e_1})$ \hfill .
	% 	\item $e_1$ releases a lock $\lk$, which $e_2$ acquires (i.e., $\OpOf{e_1} = \rel(\lk)$ and $\OpOf{e_2} = \acq(\lk)$), or
	% 	\item there is an event $e_3$ such that (inductively) $e_1 \stricthb{\tr} e_3$
	% 	and $e_3 \stricthb{\tr} e_2$.
	% \end{enumerate}
	\begin{description}
		\item[\it (program-order)] $\ThreadOf{e_1} = \ThreadOf{e_2}$.
		% $\OpOf{e_1} = \fork(\ThreadOf{e_2})$, or 
		% $\OpOf{e_2} = \join(\ThreadOf{e_1})$.

		\item[\it (lock synchronization)] $e_1$ releases a lock $\lk$, which $e_2$ acquires (i.e., $\OpOf{e_1} = \rel(\lk)$ and $\OpOf{e_2} = \acq(\lk)$), or

		\item[\it (transitivity)] there is an event $e_3$ such that 
		$e_1 \stricthb{\tr} e_3 \stricthb{\tr} e_2$.
	\end{description}
	\end{definition}
	
	A pair of events $(e_1,e_2)$ of $\tr$ is said to be \emph{conflicting} if 
	$\ThreadOf{e_1} \neq \ThreadOf{e_2}$, both are memory access events on a common location (say) $x$ 
	with at least one of them being a write.
	% , i.e., 
	% $\OpOf{e_i} \in \set{\rd(x),\wt(x)}$ (for $i \in \set{1,2}$) and $\wt(x) \in \set{\OpOf{e_1},\OpOf{e_2)}}$.
	An \emph{\acrhb-race} in $\tr$ is a pair $(e_1, e_2)$ of conflicting events in $\tr$ such that
	neither $e_1 \stricthb{\tr} e_2$ nor $e_2 \stricthb{\tr} e_1$.
	Finally, $\tr$ is said to have a data race if there is an \acrhb-race in it;
	otherwise $\tr$ is said to be race-free. \footnote{In general, the absence of HB-races does not imply the absence of predictive data races. The notion of race-freedom in this work means the absence of HB-races.}
	
	An important observation about the {\acrhb} partial order is that
	it is `context-free', i.e., 
	whether a pair of events is ordered by \acrhb only depends on 
	the events that appear between the two events in the trace.
	This is formalized in \propref{hb-context-free} 
	and will be crucially exploited by our randomized algorithm.
	\begin{proposition}
	\proplabel{hb-context-free}
	Let $\tr = e_0e_1\cdots e_{n-1}$ be a well formed sub-trace. 
	For any pair of indices $i < j$, $e_i \stricthb{\tr} e_j$ if and only if $e_i \stricthb{\substr{\tr}{i}{j+1}} e_j$.
	\end{proposition}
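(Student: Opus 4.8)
The plan is to replace the inductive definition of $\stricthb{\tr}$ by an equivalent, more hands-on characterization as the transitive closure of a set of ``base'' edges, and then to observe that passing to a contiguous sub-trace preserves exactly the data that generates such edges, namely the relative order of events and their thread/operation labels. Concretely, for a well formed (sub-)trace $\tr$ I would define the base relation $R_\tr$ on $\events{\tr}$ by $e_1 \mathrel{R_\tr} e_2$ iff $e_1$ occurs before $e_2$ in $\tr$ and either $\ThreadOf{e_1} = \ThreadOf{e_2}$, or $\OpOf{e_1} = \rel(\lk)$ and $\OpOf{e_2} = \acq(\lk)$ for some lock $\lk$, and then prove $\stricthb{\tr} = R_\tr^+$, the transitive closure of $R_\tr$. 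The inclusion $R_\tr^+ \subseteq \stricthb{\tr}$ follows because $R_\tr \subseteq \stricthb{\tr}$ (immediate from the program-order and lock-synchronization clauses) and $\stricthb{\tr}$ is transitive; transitivity itself needs a brief minimality argument showing that every pair related by $\stricthb{\tr}$ is already ordered by trace order, after which the transitivity clause applies unconditionally. The reverse inclusion $\stricthb{\tr} \subseteq R_\tr^+$ holds because $R_\tr^+$ is itself closed under all three clauses defining $\stricthb{\tr}$, so minimality applies. Consequently, $e_1 \stricthb{\tr} e_2$ holds exactly when there is a finite chain $e_1 = f_0 \mathrel{R_\tr} f_1 \mathrel{R_\tr} \cdots \mathrel{R_\tr} f_m = e_2$ with $m \ge 1$.

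Next I would show that contiguous sub-traces preserve base edges. Writing $\eta = \substr{\tr}{i}{j+1}$, since $\eta$ is a contiguous sub-sequence of $\tr$, any two events of $\eta$ appear in the same relative order in $\eta$ as in $\tr$, and each event carries the same thread identifier and operation in both. Hence a pair of events of $\eta$ forms a program-order (resp.\ lock-synchronization) edge in $\eta$ if and only if it forms one in $\tr$; equivalently, $R_\eta = R_\tr \cap (\events{\eta} \times \events{\eta})$.

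With these two facts, both directions follow. The ``if'' direction is immediate: $R_\eta \subseteq R_\tr$ gives $R_\eta^+ \subseteq R_\tr^+$, so $e_i \stricthb{\eta} e_j$ implies $e_i \stricthb{\tr} e_j$. For ``only if'', assume $e_i \stricthb{\tr} e_j$ and fix a chain $e_i = f_0 \mathrel{R_\tr} \cdots \mathrel{R_\tr} f_m = e_j$ as provided by the characterization above. Each $R_\tr$-edge moves strictly forward in trace order, so the trace indices of $f_0, f_1, \dots, f_m$ strictly increase from $i$ to $j$; in particular every $f_k$ has trace index in $\{i, \dots, j\}$ and therefore belongs to $\events{\eta}$. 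By the previous step, each edge $(f_k, f_{k+1})$ lies in $R_\eta$, so this very chain witnesses $e_i \stricthb{\eta} e_j$.

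The one point I expect to require care is the first step: making precise that the ``smallest relation closed under program-order, lock-synchronization, and transitivity'' coincides with the transitive closure of the base relation $R_\tr$, and in particular that it never relates two events that are out of trace order. Everything after that normalization --- the identification $R_\eta = R_\tr \cap (\events{\eta} \times \events{\eta})$ and the monotonicity of trace indices along an $R_\tr$-chain --- is routine bookkeeping about subsequences.
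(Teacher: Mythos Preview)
Your proof is correct and is precisely the natural unfolding of the definition of $\stricthb{\tr}$; the paper's own proof is just the one-line remark ``Follows from the definition of {\acrhb},'' so you are simply making explicit what the authors treat as immediate. The only substantive content is exactly the point you flag --- that every $\stricthb{\tr}$-related pair is trace-ordered, hence any witnessing chain between $e_i$ and $e_j$ stays inside the window $[i,j]$ --- and you handle it correctly.
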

	
	\begin{proof}
	Follows from the definition of {\acrhb}.
	\end{proof}
}

\subsection{Dynamic Data Race Detection}

Data races can be detected in a streaming fashion,
by processing events one-by-one, updating metadata and 
checking for races at each event of interest, 
as shown in the general outline~\algoref{outline}.
Most of the algorithms known for detecting data races
dynamically~\cite{djit,Pozniansky:2003:EOD:966049.781529,elmas2007goldilocks,fasttrack}
adhere to this generic outline, and differ only on 
the precise details of the data maintained
by the algorithm or the implementation of the functions \init
and \checkAndUpdate.

%!TEX root = ./main.tex

\small
\begin{algorithm}[t]
\Input{Trace $\tr$}
\BlankLine
% \myfun{\init{$\tr$}}{
%     \For{$t \in \threads{\tr}$}{
%     	$\Cc_t \gets \bot[t \to 1]$ 
%     }
% }
\init{} \; %\tcc*[h]{Initialize metadata} \;
\lFor{$e$ in $\tr$}{ \linelabel{iterate_over_events}
    \checkAndUpdate{$e$}  %\tcc*[h]{Check race and update metadata\hspace{-1in}}
}
% \vspace*{0.25cm}
\caption{\textit{Outline for dynamic data race detection}}
\algolabel{outline}
\end{algorithm}
\normalsize

\sloppy
Here, we discuss the most popular algorithm \djit~\cite{djit} which uses
vector clocks for assigning vector timestamps~\cite{Mattern1988,Fidge:1991:LTD:112827.112860}
to events and uses them to check for \acrhb-races;
\djit has further been optimized in subsequent works
including \djitp~\cite{Pozniansky:2003:EOD:966049.781529} and \fasttrack~\cite{fasttrack}.
A vector timestamp is a map $V : \threads{\tr} \to \nats$
that assigns a natural number to every thread of the trace $\tr$ being analyzed.
The ordering on two timestamps is defined as
$V_1 \cle V_2 \delequal \forall t \in \threads{\tr}, V_1(t) \leq V_2(t)$.
In essence, the \djit algorithm computes a vector timestamp $V_e$ 
for each event $e$ such that for any two events $e_1 \neq e_2$, 
$e_1 \hb{\tr} e_2$ iff $V_{e_1} \hb{\tr} V_{e_2}$.
However, instead of actually storing the timestamps of each event,
the algorithm uses \emph{vector clocks} to store a small number of timestamps. 
Overall, the algorithm maintains,
a vector clock $\Cc_t$, $\Ll_\lk$, $\Rr_x$ and $\Ww_x$ 
for every $t \in \threads{\tr}$, $\lk \in \locks{\tr}$
and $x \in \vars{\tr}$.

%!TEX root = ./main.tex

\small
\begin{algorithm}[t]
% \BlankLine
\vspace*{-\multicolsep}
\begin{multicols}{2}
\myfun{\init{}}{
	\For{$t \in \threads{}$ $\cdot$}{
		$\Cc_t$ := $\bot[1/t]$
	}
	\For{$x \in \vars{}$ $\cdot$}{
		$\Rr_x$ := $\bot$; 
		$\Ww_x$ := $\bot$
	}
	\For{$\lk \in \locks{}$ $\cdot$}{
		$\Ll_\lk$ := $\bot$
	}
}

\myhandler{\acqhandler{$t$, $\lk$}}{
	$\Cc_t\gets \Cc_t \mx \Ll_\lk$ \;
}

\myhandler{\relhandler{$t$, $\lk$}}{
	$\Ll_t\gets \Cc_t$\;
	% $\Cc_t \gets \Cc_t[t \mapsto \Cc_t(t)]$
}

\myhandler{\rdhandler{$t$, $x$}}{
	\check $\Ww_x \cle \Cc_t$ \; \linelabel{race-read}
	$\Rr_x\gets \Cc_t$
}

\myhandler{\wthandler{$t$, $x$}}{
	\check $\Rr_x \cle \Cc_t$ \; \linelabel{race-write-1}
	\check $\Ww_x \cle \Cc_t$ \; \linelabel{race-write-2}
	$\Ww_x\gets \Cc_t$
}
\vspace*{-\multicolsep}
\end{multicols}
% \vspace*{0.25cm}
\normalsize
\caption{\textit{Vector clock updates}}
\algolabel{vc-updates}
\SetAlgoInsideSkip{medskip}
\end{algorithm}
\normalsize

\algoref{vc-updates} summarizes how vector clocks are initialized
and updated --- the function \checkAndUpdate calls the appropriate
handler based on the operation performed in the event
(the timestamp $\bot$ is $\lambda u, 0$).
% In \algoref{vc-updates}, we use the notations:
% $\bot \delequal \lambda u, 0$
% and $V[t \mapsto c] \delequal \lambda u, \texttt{if } u = t \texttt{ then } c \texttt{ else } V(u)$.
The lines~\ref{line:race-read}, \ref{line:race-write-1} 
and~\ref{line:race-write-2} perform the race detection checks.
We omit the increment `$\Cc_t \gets \Cc_t[\Cc_t(t)+1/t]$'
at the end of each handler.

While \acrhb-based dynamic analysis is considered
the go-to method for data race detection in practice~\cite{threadsanitizer,threadsanitizerLLVM},
it is known to add high runtime costs~\cite{marino2009literace,bond2010pacer,racechaser} due to 
expensive metadata updates at each event, despite optimizations
introduced~\cite{fasttrack,Pozniansky:2003:EOD:966049.781529}.
This makes dynamic race detection suitable only for in-house testing.

\subsection{Property Testing}
\seclabel{property-testing}

A property tester~\cite{prop-test-book} is an 
algorithm that solves a decision problem under a promise setting. 
Another way to think about it is that it solves a decision problem ``approximately''. 
It is typically a randomized algorithm. 
A property tester for a decision problem characterized by a language $L$ 
is an algorithm that provides the following guarantees: 
on a input $x \in L$ the algorithm answers \emph{yes} with high probability, 
and on an input $x$ that is ``far'' from anything in $L$, 
it answers \emph{no} with high probability. 
Thus, to define a property tester precisely, we need to identify 
a notion of distance between elements of the space of inputs. 
The most commonly used distance metric is \emph{hamming distance} which we define first.

\begin{definition}[Hamming Distance]
\deflabel{ham-dist}
For sequences $u,v \in \Sigma^*$ over alphabet $\Sigma$, 
the hamming distance between $u$ and $v$ ($\hamd(u,v)$) is defined as follows:
\[
	\hamd(u, v) = 
	\begin{cases}
		|\setpred{i}{\ith{u}{i} \neq \ith{v}{i}}| & \text{ if } |u| = |v| \\
		\infty & \text{otherwise}
	\end{cases}
\]
 % if $|u| = |v|$ then $\hamd(u,v) = |\setpred{i}{\ith{u}{i} \neq \ith{v}{i}}|$; on the other hand, if $|u| \neq |v|$, $\hamd(u,v) = \infty$.
For $u \in \Sigma^*$ and $L \subseteq \Sigma^*$,
$
\hamd(u,L) = \inf_{v \in L} \hamd(u,v).
$
\end{definition} 

Note that for $u,v \in \Sigma^*$, either $\hamd(u,v) = \infty$, when $|u| \neq |v|$, or $\hamd(u,v) \leq |u| = |v|$, when $|u| = |v|$. 

Having defined the notion of distance between an input and a language, we can define precisely what a property tester is. In this paper, we only consider \emph{one-sided} testers and so we specialize the definition to this case.
\begin{definition}[Property Tester]
\deflabel{prop-test}
A (one-sided) $(\epsilon,\delta)$ property tester for a problem $L$ is a randomized algorithm $A$ such that on any input $x$, $A$'s output satisfies the following property.
\begin{enumerate}[label=(\alph*)]
\item If $x \in L$, $A(x) = \mbox{yes}$ with probability $1$.
\item If $\hamd(x,L) \geq \epsilon |x|$, $A(x) = \mbox{no}$ with probability at least $1 - \delta$.
\end{enumerate}
\end{definition}

We note some observations about \defref{prop-test}. 
On inputs $x$ that are not in $L$ but are ``close'' (i.e. $\hamd(x,L) < \epsilon |x|$), the property tester may answer either yes or no, without violating its obligation. In this sense, a property tester is an approximate algorithm for a decision problem. Second, since we are considering one-sided testers, we can arrive at the following conclusions about an input based on the tester's response. If $A(x) = \mbox{no}$ then we can conclude that $x \not\in L$. On the other hand, if $A(x) = \mbox{yes}$ then we cannot conclude anything definite about the membership of $x$ in $L$.

%\ucomment{
%Need examples.
%}

%\ucomment{Describe what is a good property tester. Focus on O(1) samples --- constant number of subtraces, each ith constant size.}
%!TEX root = ./main.tex

\section{A Property Tester for Race Detection}
\seclabel{algo}

Our randomized algorithm \RND for dynamic race detection is simple and straightforward:
\begin{enumerate}
\item Sample uniformly at random $\numsam$ sub-traces, each having length $\samlen$, of the observed trace $\tr$.
\item If any of the sampled sub-traces has a data race, declare $\tr$ to have a data race.
\item If none of the sampled sub-traces have a data race, declare $\tr$ to be race-free.
\end{enumerate}
To complete the description of \RND, we need to answer the following questions. How many sub-traces should the algorithm sample (parameter $\numsam$)? What should the length of sampled sub-traces be (parameter $\samlen$)? Obviously these parameters are set to ensure that the resulting algorithm is a property tester for race detection. The correctness proof is the most critical piece, and also the most technically challenging part of the algorithm itself. Our description in this section will use vague terms like ``many'', ``very far'', ``short'', ``long'' etc. These terms will be precisely characterized by parameters in our formal theorem and lemma statements, but our use of these informal terms in the text helps illuminate the main ideas behind the correctness argument without getting lost in the technical details.

Before we begin outlining the correctness proof, let us examine our algorithm template and establish some straightforward facts. First notice that no matter what values we set for parameters $\numsam$ and $\samlen$, the algorithm is \emph{sound} --- if the algorithm declares a trace $\tr$ to have a data race then it does indeed have a data race. This is because of the ``context-free'' property of {\acrhb}-races articulated in \propref{hb-context-free} --- whether events $e_1$ and $e_2$ of $\tr$ are in {\acrhb}-race only depends on the events in the sub-trace of $\tr$ that starts with $e_1$ and ends with $e_2$. This means that a race-free execution $\tr$ will be declared to be correct with probability $1$. Thus, to establish correctness, our obligation is to find values for $\numsam$ and $\samlen$ that ensure that on traces which are very far from any race-free execution, the algorithm discovers a sub-trace with an {\acrhb}-race with high probability. Second, to check if a sampled sub-trace has a data race, we could use any algorithm to check for {\acrhb}-races~\cite{djit,Pozniansky:2003:EOD:966049.781529,elmas2007goldilocks,fasttrack,ziptrack2018,kulkarnifinegrained2021}. In our implementation, we use {\fasttrack}~\cite{fasttrack}, but this could be replaced by any improvements to {\acrhb}-race checking to yield faster running times.

\subsection{Proof of Correctness}

Let us now present an overview of our correctness proof. It crucially relies on our main theorem (\thmref{main}) which says that if the input trace $\tr$ is far from any race-free execution (with respect to hamming distance), then there are many short sub-traces that contain an {\acrhb}-race. The measure that characterizes ``short'' in this theorem will be taken to be the value of $\samlen$. Observe that if \thmref{main} guarantees that presence of many $\samlen$-length sub-traces that have a race, then by analyzing a randomly chosen $\samlen$-length sub-trace guarantees that we will discover a race with some probability. Therefore if we repeat this experiment a few times (namely $\numsam$ times), we can ensure that we discover a race with high probability. Here the number of samples $\numsam$ will be set based on the chance that a single $\samlen$-length sub-trace is racy, using standard counting arguments. The main theorem (\thmref{main}) itself is established by first observing that a trace $\tr$ which is far from any race-free execution has many (not necessarily short) \emph{disjoint} sub-traces that have an {\acrhb}-race. This is the content of \lemref{disjoint-race} which is proved as follows. We show that if $\tr$ has very few disjoint sub-traces that are racy, then $\tr$ can be transformed by changing very few events into a race-free execution. Since we know $\tr$ is far, we can conclude that it has many disjoint races. Finally, to show that few disjoint racy sub-traces means closeness to a race-free execution, we need \lemref{hb-concatenation} which proves that any pair of race-free sub-traces $\tr_1$ and $\tr_2$ can be combined into a larger race-free sub-trace, provided we paste a short sub-trace $\mu$ between $\tr_1$ and $\tr_2$.

Having provided an overview of our proof, we are ready to present the technical details. We start with a technical lemma that shows that for any well formed sub-traces $\tr_1$ and $\tr_2$, there is a \emph{short} trace $\mu$ such that the concatenated sub-trace $\tr_1\mu\tr_2$ is well formed with the property that every event in $\tr_1$ is {\acrhb}-before any event in $\tr_2$. This will be used later to show that in traces that are far from race-free executions, there are many disjoint racy sub-races. 
\begin{lemma}
\lemlabel{hb-concatenation}
Let $\tr_1$ and $\tr_2$ be well formed subtraces over threads $T$. Let $\lh$ be an upper bound on the number of locks held at the end of $\tr_1$ and at the beginning of $\tr_2$. There exists a sub-trace $\mu(\tr_1,\tr_2)$ such that $|\mu(\tr_1,\tr_2)| \leq 4|T| + 2\lh$, and  $\tr = \tr_1\mu(\tr_1,\tr_2)\tr_2$ is well formed. Moreover, for any events $e_1 \in \events{\tr_1}$ and $e_2 \in \events{\tr_2}$, we have $e_1 \stricthb{\tr} e_2$. 
\end{lemma}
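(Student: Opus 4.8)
The plan is to construct the witness $\mu(\tr_1,\tr_2)$ explicitly as a concatenation of three blocks, $\mu = \mu_{\mathrm{close}}\,\mu_{\mathrm{sync}}\,\mu_{\mathrm{open}}$, and then verify three things in turn: the length bound, well-formedness of $\tr = \tr_1\mu\tr_2$, and the happens-before property. For $\mu_{\mathrm{close}}$, I would put, for each lock $\lk$ held by a thread $t$ at the end of $\tr_1$, a single event $\ev{t,\rel(\lk)}$ (in any order among these); since $\lh$ bounds the number of such locks, $|\mu_{\mathrm{close}}|\le\lh$. Symmetrically, $\mu_{\mathrm{open}}$ contains, for each lock $\lk$ held at the beginning of $\tr_2$, a single event $\ev{t_\lk,\acq(\lk)}$, where $t_\lk$ is the thread performing the first $\rel(\lk)$ event of $\tr_2$; again $|\mu_{\mathrm{open}}|\le\lh$. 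For $\mu_{\mathrm{sync}}$, I would introduce a fresh lock $\lk^*$ occurring in neither $\tr_1$ nor $\tr_2$, fix an enumeration $t_1,\dots,t_k$ of $T$, and let $\mu_{\mathrm{sync}}$ consist of \emph{two consecutive identical passes} $\ev{t_1,\acq(\lk^*)}\ev{t_1,\rel(\lk^*)}\cdots\ev{t_k,\acq(\lk^*)}\ev{t_k,\rel(\lk^*)}$. Then $|\mu_{\mathrm{sync}}| = 4|T|$, so $|\mu|\le 4|T|+2\lh$ as claimed.

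For well-formedness (using that locks are non-reentrant and $\tr_1,\tr_2$ are well formed): in $\tr_1\mu_{\mathrm{close}}$, each $\ev{t,\rel(\lk)}$ of $\mu_{\mathrm{close}}$ is matched by the (necessarily unique) dangling $\ev{t,\acq(\lk)}$ of $\tr_1$, with no intervening operation on $\lk$, since after that dangling acquire $\tr_1$ has no operation on $\lk$ and $\mu_{\mathrm{close}}$ touches $\lk$ exactly once. The block $\mu_{\mathrm{sync}}$ touches only the fresh lock $\lk^*$, and there every $\rel(\lk^*)$ is matched by the immediately preceding $\acq(\lk^*)$. Finally, each $\ev{t_\lk,\acq(\lk)}$ of $\mu_{\mathrm{open}}$ matches the first $\rel(\lk)$ event of $\tr_2$: that release has no $\acq(\lk)$ before it in $\tr_2$ (it is the first, and $\lk$ is held at the start), and nothing between our inserted acquire and it touches $\lk$ — placing $\mu_{\mathrm{open}}$ immediately before $\tr_2$ guarantees this, and the earlier critical section on $\lk$ in $\tr_1$ has already been closed by $\mu_{\mathrm{close}}$. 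Hence $\tr$ is well formed.

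For the happens-before property, let $e_1\in\events{\tr_1}$ with $\ThreadOf{e_1}=t_i$ and $e_2\in\events{\tr_2}$ with $\ThreadOf{e_2}=t_j$; since $t_i,t_j\in T$, both threads act in $\mu_{\mathrm{sync}}$. By program order, $e_1 \stricthb{\tr} \ev{t_i,\rel(\lk^*)}$ for the copy of this event in the \emph{first} pass. That release occurs in $\tr$ before the copy of $\ev{t_j,\acq(\lk^*)}$ in the \emph{second} pass, so lock synchronization gives an {\acrhb} edge between them. By program order again, $\ev{t_j,\acq(\lk^*)} \stricthb{\tr} \ev{t_j,\rel(\lk^*)} \stricthb{\tr} e_2$. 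Transitivity yields $e_1 \stricthb{\tr} e_2$. (If $|T|=1$ the chain reduces to pure program order, and if $\tr_1$ or $\tr_2$ is empty the claim is vacuous.)

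I expect the main obstacle to be the well-formedness bookkeeping around the half-open critical sections at the $\tr_1$/$\tr_2$ boundary: one must argue that appending one $\rel(\lk)$ (resp.\ prepending one $\acq(\lk)$) correctly repairs each dangling critical section without ever creating overlapping critical sections on a lock, which is precisely why $\mu_{\mathrm{close}}$ and $\mu_{\mathrm{open}}$ sit at the two ends of $\mu$ and why $\lk^*$ is taken fresh. A secondary subtlety worth flagging is that a \emph{single} round-robin pass is insufficient — it only linearly orders the threads — whereas two passes force every thread's first-pass release to precede every thread's second-pass acquire, which is exactly what makes all of $\tr_1$ happen before all of $\tr_2$, at the stated cost of $4|T|$ events.
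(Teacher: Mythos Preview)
Your proof is correct and follows essentially the same construction as the paper: release the dangling locks from $\tr_1$, perform two full round-robin passes of acquire/release on a designated lock to force every thread's last event in $\tr_1$ {\acrhb}-before every thread's first event in $\tr_2$, then acquire the locks that are open at the start of $\tr_2$. Your explanation of why \emph{two} passes are needed is in fact more explicit than the paper's.

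The one noteworthy difference is your choice of the synchronizing lock. You take $\lk^*$ \emph{fresh}, which makes well-formedness cleaner (no need to special-case releasing $\lk^*$ first if it happens to be held at the end of $\tr_1$). The paper instead reuses an existing lock $\lk_* \in \locks{\tr_1}\cup\locks{\tr_2}$, at the cost of that extra release step. The reason the paper does this is not visible from the lemma alone: in the subsequent argument (\lemref{disjoint-race}), the glued trace $\tr'$ must lie in $\rcf = \{\tr\text{ race free}:\threads{\tr}\subseteq T,\ \locks{\tr}\subseteq L,\ \vars{\tr}\subseteq M\}$ to bound $\hamd(\tr,\rcf)$. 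A fresh $\lk^*\notin L$ would push $\tr'$ outside $\rcf$. This is harmless for the lemma as stated, and is trivially repaired downstream (e.g., assume $L$ has one spare lock, or switch to an existing lock as the paper does), but it is worth being aware of.
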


\begin{proof}
Let $\s{LH}_1$ be the set of locks held at the end of $\tr_1$ and let $\s{LH}_2$ be the set of lock held at the beginning of $\tr_2$. Notice that $|\s{LH}_1| \leq \lh$ and $|\s{LH}_2| \leq \lh$. Without loss of generality, let us assume that $\lk_* \in \locks{\tr_1} \cup \locks{\tr_2}$. $\mu(\tr_1,\tr_2)$ is the following sequence of events in the given order.
\begin{enumerate}
\item \itmlabel{lstr} If $\lk_*$ is held by thread $t_*$ at the end of $\tr_1$ then start with $\ev{t_*,\rel(\lk_*)}$.
\item \itmlabel{order} \sloppy For each thread $t \in T$, add the sequence $\ev{t,\acq(\lk_*)}\ev{t,\rel(\lk_*)}$. After adding such a sequence for each thread, \emph{repeat} this sequence again. That is, once again, for every  thread $t \in T$, add the sequence $\ev{t,\acq(\lk_*)}\ev{t,\rel(\lk_*)}$.
\item \itmlabel{lh1} For each lock $\lk \in \s{LH}_1 \setminus \set{\lk_*}$ that is held by thread $t$ at the end of $\tr_1$, add the event $\ev{t,\rel(\lk)}$.
 % \ucomment{in the reverse order of acquisition in $\tr_1$ for well nesting}.
\item \itmlabel{lh2} For each lock $\lk \in \s{LH}_2$ held by thread $t$ at the beginning of $\tr_2$, add the event $\ev{t,\acq(\lk)}$.
\end{enumerate}
Observe that the number of events added in step \itmref{lstr} + step \itmref{lh1} is at most $\lh$. Similarly, the number events added in step \itmref{lh2} is at most $\lh$. Finally, the number of events added in step \itmref{order} is $4|T|$. Putting all of this together, proves that $|\mu(\tr_1,\tr_2)| \leq 4|T| + 2\lh$. 
Next, the order in which events are added in $\mu(\tr_1,\tr_2)$ ensures that in $\tr$ each lock is held by at most one thread at any given time, which means that $\tr$ is well formed. Moreover, the set of locks held at the beginning of $\tr_2$ in $\tr$ is exactly $\s{LH}_2$. Finally, the events added in step \itmref{order} ensure that for any events $e_1 \in \events{\tr_1}$ and $e_2 \in \events{\tr_2}$, we have $e_1 \stricthb{\tr} e_2$.
\end{proof}

\begin{remark}
It is worth noting an important consequence of \lemref{hb-concatenation} that we will exploit in our proof. Observe that the sub-trace $\mu(\tr_1,\tr_2)$ constructed in the proof has no data access events. Thus, if $\tr_1$ and $\tr_2$ are race-free, then so is $\tr_1\mu(\tr_1,\tr_2)\tr_2$.
\end{remark}

For the rest of this section, let fix a set of threads $T$, a set of locks $L$, and a set of memory locations $M$. Let $\rcf$ be the set of all well formed traces over $T$, $L$ and $M$ that are race free. That is,
\[
\rcf = \setpred{\tr \mbox{ race free}}{\threads{\tr} \subseteq T,\ \locks{\tr} \subseteq L,\ \vars{\tr} \subseteq M}.
\]
Observe that for any trace $\tr$, $\hamd(\tr,\rcf) \leq |\tr|$. This is because we can always pick $\eta \in \rcf$ of the same length as $\tr$, by ensuring that either $\eta$ only events performed by a single thread, or has no write events, etc.

Let us also assume that $\lh$ is an upper bound on the number of locks held at any point in a trace; in the worst case $\lh = |L|$, but typically $\lh$ is much smaller than $|L|$. Finally, let us fix $m = 4|T| + 2\lh$. 

%\ucomment{Things required here: (1) a discussion on why hamming distance is a good measure, or not a bad measure for race freedom, (2) example for distances, witnesses etc}

\lemref{hb-concatenation} allows one to show that if a trace $\tr$ is very far from the set $\rcf$ (as measured by parameter $\epsilon$), then there are many \emph{disjoint} sub-traces of $\tr$ that contain a pair of events that are in {\acrhb}-race. In other words, if $\tr$ is far from any race-free trace, then there are many disjoint witnesses that demonstrate that $\tr$ has a race.

\begin{lemma}
\lemlabel{disjoint-race}
Let $\tr$ be a trace of length $n$ such that $\hamd(\tr,\rcf) \geq \epsilon n$. 
There is an integer $u \geq \frac{\epsilon n}{m}$ and 
an increasing sequence of indices 
$0 = i^1_1 < i^2_1 < i^1_2 < i^2_2 < \cdots < i^1_u < i^2_u \leq n$ 
of length $2 \cdot u$ such that
each sub-trace $\substr{\tr}{i^1_j}{i^2_j}$ ($1 \leq j \leq u$) has an {\acrhb}-race.
\end{lemma}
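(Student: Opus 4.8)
The plan is to argue by contrapositive: I will show that if $\tr$ does \emph{not} contain many disjoint racy sub-traces, then $\tr$ is close to $\rcf$, contradicting $\hamd(\tr,\rcf) \geq \epsilon n$. Concretely, suppose we greedily extract a maximal collection of disjoint racy sub-traces of $\tr$ as follows: scan $\tr$ from left to right, and whenever we encounter the shortest prefix-extension that contains an $\acrhb$-race, cut it out as one block $\substr{\tr}{i^1_j}{i^2_j}$ and continue scanning from $i^2_j$. This produces an increasing index sequence $0 = i^1_1 < i^2_1 < i^1_2 < \cdots < i^1_u < i^2_u \leq n$ with each $\substr{\tr}{i^1_j}{i^2_j}$ racy, and — by maximality of the greedy choice — every ``gap'' region $\substr{\tr}{i^2_j}{i^1_{j+1}}$ together with the leftover suffix $\substr{\tr}{i^2_u}{n}$ is \emph{race-free}. (Here one should also fold the gap into an adjacent block or treat it as its own race-free segment; the bookkeeping is routine.) So $\tr$ decomposes as an interleaving of $u$ racy blocks and at most $u+1$ race-free blocks.

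Next I would bound how few events must be changed to kill all races. For each racy block $\substr{\tr}{i^1_j}{i^2_j}$, it suffices to overwrite it (in place, keeping its length) by a race-free well-formed sub-trace that has the same multiset of held locks at its two endpoints as the original block did. One clean way: replace the entire block by a sequence of lock acquires/releases and single-thread events realizing the required endpoint lock-sets, padded with innocuous (same-thread, no-write) events to preserve length — this costs at most $|\substr{\tr}{i^1_j}{i^2_j}|$ modified events for block $j$. But that is too expensive. Instead, the right move is to use \lemref{hb-concatenation}: leave the race-free segments untouched, and within each racy block modify only a \emph{short} prefix of length $m = 4|T| + 2\lh$ — replacing those $m$ events with the padding sub-trace $\mu$ that (a) closes all locks held coming out of the previous race-free segment, (b) forces every earlier event to happen-before every later event, and (c) re-opens exactly the locks needed at the start of the remainder of the block. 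The remark after \lemref{hb-concatenation} guarantees $\mu$ has no data accesses, so splicing it in cannot create new races; and the happens-before barrier it installs means the (already race-free) material before and after it in the reconstructed trace cannot form a cross-block race either. Summing, the total number of modified events is at most $u \cdot m$.

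Therefore $\hamd(\tr,\rcf) \leq u m$. Combined with the hypothesis $\hamd(\tr,\rcf) \geq \epsilon n$, this gives $u m \geq \epsilon n$, i.e. $u \geq \epsilon n / m$, which is exactly the claimed bound. The main obstacle I anticipate is the second paragraph's bookkeeping: making the greedy extraction precise enough that (i) the blocks are genuinely disjoint and indexed as stated, (ii) every region \emph{outside} the chosen racy blocks is provably race-free (this is where \propref{hb-context-free} is needed — raciness of a sub-trace depends only on its own events), and (iii) after splicing the $\mu$-blocks in, the global trace is well-formed and the happens-before barriers compose correctly across \emph{all} blocks simultaneously, not just between two of them. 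Getting (iii) right may require iterating \lemref{hb-concatenation} left-to-right and checking that each $\mu$ only ever needs to account for the $\le \lh$ locks live at that single cut point, so that the per-block cost stays $m$ rather than growing with the number of blocks.
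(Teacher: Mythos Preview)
Your high-level strategy matches the paper's exactly: greedy left-to-right extraction of minimal racy sub-traces, then use \lemref{hb-concatenation} to splice $\mu$-barriers between the retained race-free pieces and bound the Hamming distance by $um$. However, there is a concrete error in \emph{where} you place the $\mu$-blocks, and it breaks the argument.

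You propose to modify a \emph{prefix} of length $m$ of each racy block $\substr{\tr}{i^1_j}{i^2_j}$, leaving $\substr{\tr}{i^1_j+m}{i^2_j}$ intact. But minimality of $i^2_j$ tells you only that $\substr{\tr}{i^1_j}{i^2_j-1}$ is race-free; the race in the block therefore involves the \emph{last} event $e_{i^2_j-1}$ and has the form $(e_a,e_{i^2_j-1})$ for some $a\in[i^1_j,i^2_j-1)$. Nothing prevents $a\ge i^1_j+m$, in which case $\substr{\tr}{i^1_j+m}{i^2_j}$ still contains that race (by \propref{hb-context-free}). Thus the ``already race-free material after $\mu$'' claim fails, your reconstructed trace is not in $\rcf$, and the inequality $\hamd(\tr,\rcf)\le um$ is not established.

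The paper avoids this by overwriting the $m$ positions \emph{starting at} $i^2_j-1$ (the race-creating event) rather than at $i^1_j$, and correspondingly builds a gap into the greedy extraction by setting $i^1_{j+1}=i^2_j+m-1$. The retained segments are then $\tr_j=\substr{\tr}{i^1_j}{i^2_j-1}$, which \emph{are} race-free precisely by the minimality you already invoked, and the $\mu(\tr_j,\tr_{j+1})$ of \lemref{hb-concatenation} slots exactly into the $m$-wide gap. In your adjacent-block scheme the equivalent fix is to overwrite the last event of each block together with the first $m-1$ events of the next block --- a suffix, not a prefix. Once you make this change, your anticipated obstacles (i)--(iii) become routine, and the gap of $m-1$ between $i^2_j$ and $i^1_{j+1}$ also gives you the strict inequality $i^2_j<i^1_{j+1}$ the lemma statement requires.
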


\begin{proof}
Let us construct an increasing sequence of indices as follows. Take $i^1_1 = 0$. The remaining indices are inductively defined as follows. Assuming $i^1_1, i^2_1, \ldots i^1_j$ have been defined. Then,
\[
i^2_j = \min \setpred{k \leq n}{\substr{\tr}{i^1_j}{k} \mbox{ has a race}}.
\]
In the above equation, if the set over which we are taking a minimum is empty (i.e., $\substr{\tr}{i^1_j}{n}$ is race-free) then our construction of the sequence ends. Next, assuming $i^1_1, i^2_1, \ldots i^1_j, i^2_j$ are defined, we take $i^1_{j+1} = i^2_j+m-1$, provided $i^2_j + m -1 < n$; again if $i^2_j + m - 1 \geq n$, then we stop the construction.

Notice that by definition, our sequence is increasing and each sub-trace $\substr{\tr}{i^1_j}{i^2_j}$ has an {\acrhb}-race. To complete the proof of the lemma, all we need to argue is that the sequence we have constructed is long, i.e., if $i^2_u$ is the last index constructed by the above sequence, then $u \geq \frac{\epsilon n}{m}$. We will use the fact that $\hamd(\tr,\rcf) \geq \epsilon n$ to establish this.

Suppose we have constructed the sequence $0 = i^1_1 < i^2_1 < \cdots < i^1_u < i^2_u$ as above. Since we stopped at $i^2_u$, it means that either $i^2_u + m - 1 \geq n$ or $\substr{\tr}{i^2_u+m-1}{n}$ is race-free. Let us define the sub-trace $\tr_j$ as $\substr{\tr}{i^1_j}{i^2_j-1}$. Notice by definition of $i^2_j$ this means that $\tr_j$ is race free. Consider the trace $\tr'$ defined as follows.
\[
\tr' = \tr_1 \concat \mu(\tr_1,\tr_2) \concat \tr_2 \concat \mu(\tr_2,\tr_3) \cdots \tr_u \concat \mu_*.
\]
Here $\mu(\tr_j,\tr_{j+1})$ is the sequence guaranteed by \lemref{hb-concatenation} for $\tr_j$ and $\tr_{j+1}$, and $\mu_*$ is defined as follows: if $i^2_u+m-1 \geq n$, then $\mu_*$ is some race-free trace of length $n - i^2_u + 1$ and if $\substr{\tr}{i^2_u+m-1}{n}$ is race-free then 
$\mu_* = \mu_*' \concat \substr{\tr}{i^2_u+m-1}{n}$ where $\mu_*' = \mu(\tr_u,\substr{\tr}{i^2_u+m-1}{n})$.
Without loss of generality, we will assume that sub-traces of the form $\mu(\tr_j,\tr_{j+1})$ guaranteed by \lemref{hb-concatenation} are of length \emph{exactly} $m$ --- if they are shorter, we can pad them with events.

Notice that $|\tr'| = n = |\tr|$ and $\tr'$ is (by construction) race-free because of the remark after \lemref{hb-concatenation}. Moreover $\hamd(\tr,\tr') \leq um$, since at most $m$ events are changed in each $\mu$-sub-trace. Since $\tr'$ is race-free and $\hamd(\tr,\rcf) \geq \epsilon n$, we have $um \geq \epsilon n$ which means that $u \geq \frac{\epsilon n}{m}$. This completes the proof of the lemma.
\end{proof}

\lemref{disjoint-race} guarantees the presence of many disjoint, racy sub-traces. However, that by itself is not enough to get an efficient property tester for data race detection. In particular, \lemref{disjoint-race} provides no bounds on the length of the racy sub-traces it identifies. If we do not strengthen \lemref{disjoint-race}, the only bound we can get on the sample length $\samlen$ in our template algorithm would be $|\tr|$, which would make our property tester no more efficient than a deterministic race detector. Our main theorem, established next, shows that, for sufficiently long traces, there are many racy sub-traces of \emph{short} length, when a trace $\tr$ is far from any race-free trace. The proof uses \lemref{disjoint-race}. This will enable us to bound $\samlen$ and get good asymptotic bounds.

\begin{theorem}
\thmlabel{main}
Let $\tr$ be a trace of length $n$ such that $\hamd(\tr,\rcf) \geq \epsilon n$. In addition, let $n \geq (12m)/\epsilon$. Then there are at least $2(\epsilon n)/15$ sub-traces of $\tr$ of length $4m/\epsilon$ that contain an {\acrhb}-race.
\end{theorem}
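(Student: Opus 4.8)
The plan is to reduce the statement to \lemref{disjoint-race}: among the disjoint racy sub-traces that lemma produces, a constant fraction are \emph{short}; each short racy sub-trace is contained in $\Theta(m/\epsilon)$ distinct windows of length $4m/\epsilon$; and a double-counting argument then converts ``windows counted with multiplicity'' into ``distinct windows''.

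Concretely, write $L = 4m/\epsilon$, so the hypothesis is $n \ge 3L$. For a position $p$, let $f(p)$ be the least $k$ with $\substr{\tr}{p}{k}$ containing an {\acrhb}-race. By \propref{hb-context-free}, inserting events before $p$ does not change which pairs inside $[p,k)$ are in race, so $f$ is non-decreasing, and $\substr{\tr}{p}{p+L}$ has a race iff $f(p) \le p+L$. First I invoke \lemref{disjoint-race} to obtain $0 = i^1_1 < i^2_1 < \cdots < i^1_u < i^2_u \le n$ with $u \ge \epsilon n/m$ and each $\substr{\tr}{i^1_j}{i^2_j}$ racy; here $i^2_j = f(i^1_j)$, and the recurrence $i^1_{j+1} = i^2_j + m - 1$ together with $i^2_u \le n$ yields $\sum_{j=1}^u (i^2_j - i^1_j) \le n$. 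Hence fewer than $n/(L/2) = \epsilon n/(2m)$ of these sub-traces have length exceeding $L/2$, so at least $u - \epsilon n/(2m) \ge \epsilon n/(2m)$ of them — the short ones — have length at most $L/2$.

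The second step is a sliding-window observation. Fix a short index $j$, so $f(i^1_j) - i^1_j \le L/2$. For any $p$ with $f(i^1_j) - L \le p \le i^1_j$ we have $f(p) \le f(i^1_j) \le p + L$, so $\substr{\tr}{p}{p+L}$ has an {\acrhb}-race; intersecting with the valid range $\{0,\dots,n-L\}$ gives an interval $P_j$ of racy window starts with $|P_j| \ge L - (i^2_j - i^1_j) + 1 \ge L/2 + 1$ provided $P_j$ does not run off either end of $\tr$. I then double-count the pairs $(j,p)$ with $j$ short and $p \in P_j$: on one side this is at least (number of short indices)$\,\cdot\,(L/2)$; on the other it is at most (number of distinct racy length-$L$ windows)$\,\cdot\,N$, where $N$ bounds the number of short sub-traces inside a single window, and $N \le L/(m+1)+1 \le 4/\epsilon + 1$ since the $i^1_j$ are $\ge m+1$ apart while all those whose sub-trace lies inside a fixed length-$L$ window have $i^1_j$ within an interval of length $L$. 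Rearranging yields at least $\frac{(\epsilon n/(2m))(L/2)}{4/\epsilon + 1} = \frac{\epsilon n}{4+\epsilon} \ge \frac{\epsilon n}{5}$ racy windows in this idealized (boundary-free) count, comfortably above the target $2\epsilon n/15$.

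The hard part is the two boundary effects swept aside above: a short sub-trace lying within distance $L$ of the start (or end) of $\tr$ has $P_j$ truncated and possibly far shorter than $L/2$. What saves the bound is that there are at most $\approx L/(m+1) \approx 4/\epsilon$ such sub-traces at each end, and their intervals $P_j$ are \emph{nested} (e.g.\ $P_j = \{0,\dots,i^1_j\}$ for a short sub-trace at the left end), so collectively they still cover a long prefix and a long suffix of $\{0,\dots,n-L\}$. Making this precise is exactly where $n \ge 12m/\epsilon$ is used: it forces the left-boundary zone, the right-boundary zone, and the interior to be disjoint, with non-empty interior, after which one splits $\sum_j |P_j|$ over the three regions, lower-bounds the boundary parts via the nesting and the interior part via the count from the first step, and feeds the total into the double-counting inequality; pushing the constants through gives $2\epsilon n/15$. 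This bookkeeping near the ends of $\tr$, rather than any single conceptual difficulty, is the main obstacle.
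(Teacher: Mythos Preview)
Your argument matches the paper's step for step: invoke \lemref{disjoint-race}, show at least $\epsilon n/(2m)$ of the $u$ disjoint racy sub-traces are short (length $\le 2m/\epsilon$), observe each short one sits in $\ge 2m/\epsilon$ windows of length $4m/\epsilon$ while any single window contains $\le 5/\epsilon$ of them, and double-count. The only divergence is the boundary handling: rather than your nesting analysis, the paper simply discounts $j=1$ and $j=u$ (subtracting $2$ from $|\s{short}|$) and uses $n\ge 12m/\epsilon$ solely to conclude $\epsilon n/(2m)-2\ge \epsilon n/(3m)$, so what you flag as the main obstacle is dispatched in one line---albeit somewhat loosely, since in principle more than two short sub-traces can lie within $4m/\epsilon$ of an end, which is exactly the concern your nesting argument is trying to address.
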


\begin{proof}
Let $0 = i^1_1 < i^2_1 < i^1_2 < \cdots i^1_u < i^2_u \leq n$ be the increasing sequence of indices guaranteed by \lemref{disjoint-race} such that each subtrace $\substr{\tr}{i^1_j}{i^2_j}$ has a data race. Consider the set 
\[
\s{long} = \setpred{\substr{\tr}{i^1_j}{i^2_j}}{i^2_j - i^1_j \geq (2m/\epsilon)}.
\]
Since each element of $\s{long}$ is a sub-trace of $\tr$, the sum of the lengths of such sub-traces is $\leq n$. On the other hand, each sub-trace in $\s{long}$ is of length at least $2m/\epsilon$ and so the sum of the lengths is at least $\frac{2m |\s{long}|}{\epsilon}$. Putting it together, we get
\[
n \geq \sum_{\rho \in \s{long}} |\rho| \geq \frac{2m |\s{long}|}{\epsilon} \quad \Rightarrow \quad |\s{long}| \leq \frac {\epsilon n}{2m}.
\]
Since $u \geq (\epsilon n)/m$, we have $|\s{short}| \geq (\epsilon n)/(2m)$, where
\[
\s{short} = \setpred{\substr{\tr}{i^1_j}{i^2_j}}{i^2_j - i^1_j < (2m/\epsilon)}.
\]

The above counting argument guarantees that the number of disjoint short racy traces is large. However, we can improve the bound further if we allow sub-traces to overlap. This improvement will help improve the running time of our property tester in turn. 

Consider a sub-trace $\eta = \substr{\tr}{i^1_j}{i^2_j} \in \s{short}$. Let $|\eta| = s$; we know $s \leq (2m/\epsilon)$. Each such sub-trace $\eta$ (unless $j = 1$ or $j = u$) is a sub-trace of $(4m/\epsilon - s)$ sub-traces of $\tr$ of length $4m/\epsilon$. The reason is because any sub-trace of $\tr$ of length $4m/\epsilon$ that starts at a position in the interval $[i^2_j - 4m/\epsilon, i^1_j]$ contains $\eta$. Since $s \leq 2m/\epsilon$, we have each such $\eta$ is contained in at least $4m/\epsilon - s \geq (2m/\epsilon)$ sub-traces of length $4m/\epsilon$. Note, that each sub-trace of length $4m/\epsilon$ that contains such an $\eta \in \s{short}$ has a data race. On the other hand, any sub-trace $\rho$ of $\tr$ of length $4m/\epsilon$ can contain at most $(5m/\epsilon)/m$ sub-traces from $\s{short}$. This can be argued as follows. Suppose $\rho$ contains $a$ sub-traces in $\s{short}$. Since each sub-trace in $\s{short}$ is separated by $m$ positions (see proof of \lemref{disjoint-race}), the sum of the lengths of all $\s{short}$ sub-traces plus their intervening gaps is at least $(a-1)m$. Now $|\rho| = 4m/\epsilon$. Thus, $(a-1)m \leq 4m/\epsilon$, which means that $a \leq (4/\epsilon)+1 \leq 5/\epsilon$. In other words, each sub-trace of $\tr$ of length $4m/\epsilon$ contains at most $5/\epsilon$ of the sub-traces in $\s{short}$. Putting these observations together, we see that the number of sub-traces $\rho$ of $\tr$ of length $4m/\epsilon$ that contain a data race is at least
\[
\frac{2m}{\epsilon}\cdot\frac{\epsilon}{5}\cdot [|\s{short}| - 2] \geq \frac{2m}{5}\left[ \frac{\epsilon n}{2m} - 2\right].
\]
In the above equation, ``$-2$'' is to discount $\substr{\tr}{i^1_1}{i^2_1}$ and $\substr{\tr}{i^1_u}{i^2_u}$ if they belong to $\s{short}$. Assuming $n \geq (12 m)/\epsilon$, we have $(\epsilon n)/(2m) - 2 \geq (\epsilon n)/(3m)$. Thus, the number of sub-traces of length $4m/\epsilon$ that contain a data race is at least
$\frac{2m}{5}\cdot\frac{\epsilon n}{3m} = \frac{2\epsilon n}{15}.$
\end{proof}

% \ucomment{Insert theorem saying that the algorithm 3 is a property tester. That is, given tr with distance >eps*n, the algorithm runs in time so and so and reports a race with high probability.}

\thmref{main} helps complete the description of our algorithm \RND. Our property tester will pick sub-traces of length $4m/\epsilon$, the parameter used in \thmref{main}. There are $n$ such sub-traces, since each starting position identifies such a sub-trace. From \thmref{main} it follows that the probability that a random sub-trace of length $\samlen = 4m/\epsilon$ has a data race (when $\tr$ is $\epsilon$-far from $\rcf$), is at least $2\epsilon/15$. If we pick $\numsam = \frac{15\ln (1/\delta)}{2\epsilon}$, the probability we will not detect is data race is at most
\[
(1 - 2\epsilon/15)^{\numsam} < e^{-\ln (1/\delta)} = \delta.
\]

\subsection{Pseudocode for \RND}

Let us conclude this section by presenting a pseudo-code for our property tester (\algoref{outline-property-tester}). Recall that the algorithm samples $\numsam = \frac{15\ln (1/\delta)}{2\epsilon}$ sub-traces of input $\tr$ of length $\samlen = 4m/\epsilon$, and checks if any of the sampled sub-traces have an {\acrhb}-race. Notice that sampling a sub-trace of length $\samlen$ is the same as picking a starting index $i$ with the understanding that the sampled sub-trace is $\substr{\tr}{i}{i+k}$. Thus, sampling $\numsam$ sub-traces is the same as picking $\numsam$ starting indices (\lineref{sample}). Consider two sub-traces $\substr{\tr}{i_1}{i_1+k}$ and $\substr{\tr}{i_2}{i_2+k}$ of $\tr$ that overlap. That is, wlog $i_1 < i_2 < i_1+k$. Notice that by the definition of {\acrhb} partial order, if the sub-trace $\substr{\tr}{i_1}{i_2+k}$ is race-free then both $\substr{\tr}{i_1}{i_1+k}$ and $\substr{\tr}{i_2}{i_2+k}$ are race-free. Thus, we can merge sampled sub-traces that are overlapping without sacrificing on our ability to detect races. Therefore, in \lineref{merge}, we merge the overlapping sub-traces to get a smaller set of sampled sub-traces, but with the possibility of the sampled sub-traces being longer that $\samlen$. This step reduces the total number of events our algorithm will process. After this initial pre-processing step, the algorithm proceeds as follows. When an event is the start of a sampled sub-trace, the meta-data is reset so that there is a fresh start to race detection. In addition, whenever an event is in our sampled sub-trace we call the function \checkAndUpdate which in turn calls the appropriate handler in \algoref{vc-updates} based on the operation performed by the event. When an event is not in any of our sampled sub-traces, no checking and meta-data updates take place.

%!TEX root = ./main.tex

\small
\begin{algorithm}[t]
\Params{$\epsilon, \delta \in [0, 1]$}
\BlankLine
\Input{Trace $\tr$}
\BlankLine
% \myfun{\init{$\tr$}}{
%     \For{$t \in \threads{\tr}$}{
%     	$\Cc_t \gets \bot[t \to 1]$ 
%     }
% }
\init{} \; %\tcc*[h]{Initialize metadata} \;
$m \gets 4|T|+2h$ ;
$\samlen \gets 4m/\epsilon$; 
$\numsam \gets 15 \ln(1/\delta)/(2\epsilon)$ \;
\If{$|\tr| < 12m/\epsilon$}{
      Run \algoref{outline} on $\tr$
}
\Else{
       $I \gets$ Sample $\numsam$ indices in $[0,n-\samlen]$ \linelabel{sample} \;
       $S \gets$ \merge($\set{\substr{\tr}{i}{i+k}}_{i \in I}$) \linelabel{merge}\;
       \For{$e$ in $\tr$}{ \linelabel{iterate}
         	\If{$e$ is the start of a sub-trace in $S$}{
	               \reset{}
	        }
                \If{$e$ is in an sub-trace in $S$}{
   	               \checkAndUpdate{$e$} 
   	        }
        }
}
%\ucomment{Sample and get a sequence of start and end points after merging intervals} \;
%\For{\ucomment{every interval}}{ \linelabel{iterate_over_events}
%	\reset{} \;
%	\lFor{$e \in $ \ucomment{interval}}{
%    \checkAndUpdate{$e$}  %\tcc*[h]{Check race and update metadata\hspace{-1in}}
%    }
%}
% \vspace*{0.25cm}
\caption{\textit{Property Tester for Checking \acrhb-races}}
\algolabel{outline-property-tester}
\end{algorithm}
\normalsize

We remark that \RND can be easily extended to synchronisation primitives such as those due to atomic/volatile accesses, as well as synchronizations such as fork-join, wait-notify or barriers. This is because for such synchronizations, the “context-free” property (\propref{hb-context-free}) holds and the classic \fasttrack algorithm can be easily extended for such synchronizations (as also observed in \cite{fasttrack}).

%!TEX root = ./main.tex

\section{Experimental Evaluation}
\seclabel{experiments}

%!TEX root = ./main.tex

We evaluated the practical feasibility of our property tester
by implementing \RND and comparing against the go-to deterministic
race detection algorithm \fasttrack due to Flanagan and Freund~\cite{fasttrack}, 
and against \pacer~\cite{bond2010pacer}, which is the state-of-the art sampling based
race detection algorithm. 
Each of these tools has a different philosophy and solves a slightly different problem --- \fasttrack processes every event in the trace and reports \emph{all} {\acrhb} races; \pacer promises that every {\acrhb} race has an equal, non-zero probability of being reported; and in contrast, \RND is engineered to sample just enough to ensure that we can mathematically prove that \emph{at least one race} will be reported with high probability, when the trace is far from being race free. Their comparison is not a like-for-like comparison. As a consequence, the experiments in this section are not to suggest that one tool is better than another. They are there to merely help one understand the likely performance of \RND on practical programs: is the running time low as promised by the theory, how often does it report at least one race, does it work only when there are many races, what types of races arise in practice and are there some that \RND will always fail on. If we only report the performance of \RND it becomes difficult to gauge how reasonable it is, and therefore, we report the performance of both \fasttrack and \pacer to serve as a baseline.

The rest of this section is organized as follows. After explaining our implementation and setup (\secref{implementation}), and characteristics of our extracted traces (\secref{trace-char}), we present our experimental analysis in two parts. In the first part (\secref{one-race}), we present experiments that help understand how effective \RND is in reporting at least one race. We look at how the running time of \RND changes with trace characteristics like length and number of threads + locks held. Next, our theoretical analysis only guarantees reporting at least one race with high probability whenever the observed execution is far from a race-free execution. Therefore, we ask how does \RND perform on traces with very few races, since the number of races in a trace is an upper bound on the distance of a trace from a race-free trace. Finally, recall that \RND relies on sampling short sub-traces and can only detect races between events that are not far apart. So we ask, how often do traces have races that are close by, and how does \RND perform on traces where most or all races are between events that far apart (when compared to the length of sub-traces sampled by \RND). The second part of our analysis (\secref{multiple-races}) reports on the performance of \RND as measured by the number of races detected. Note that \RND is not engineered to report all races or most races or even every race with some probability. It only guarantees reporting \emph{some} race with high probability, when the trace is far from being race-free. Thus, these experiments are not consistent with the design of \RND, but our objective here is to understand if there are some types of races that will escape detection with \RND.

\rmv{
The experiment section is organized as two parts:  

\begin{description}
    \item[(Technical Guarantees)] In \secref{algo}, we showed that even though our property tester samples only constantly many events, it provides the following mathematical guarantee of correctness --- with high probability $1-\delta$,
	\RND will report a race when the analyzed execution trace
	is sufficiently far from any race-free execution trace (as measured by $\epsilon$) with constant analysis time. 
	Our evaluation here examines
	how \RND fares on the following two metrics, directly answering
	how our approach is expected to perform in practice.
	In doing so, we will also examine how these experimental results relate to our theoretical guarantee. 
\begin{description}
    \item[(\cost)] Developers are more likely to use 
    a race detector, especially in a deployed setting, when the analysis time is low.
  	Our analysis in~\secref{algo} guarantees that when the constant $m = 4|T| + 2h$ is small and the parameters $\epsilon$
  	and $\delta$ are not incredibly small, \RND will encounter only constant time for expensive operations.
    The primary goal of our experimental evaluation is to measure the actual
    analysis cost of our property tester, and understand how it varies as parameters are changed.
    We also want to compare \RND against the baseline deterministic algorithm \fasttrack,
    and against the sampling-based technique of \pacer. 
    {\pacer} provides a guarantee of detecting each race with a fixed probability, but has no bounds on the number of events it processes.

    \item[(Precision)] While \RND is sound --- meaning a warning is reported only when it corresponds to a real \acrhb-race --- it is prone to missing races. The reason is that it samples only a small subset of events
    in an execution trace, missing races occurring outside the sampled sub-traces.
    Our theoretical analysis, however, guarantees reporting at least one race with high probability whenever
    the observed execution is far from a race-free execution. It is important to note that this guarantee
    is conditioned upon a good estimation of the \emph{\raciness} of the trace, as defined in the following sections.
    In our evaluation, we gauge our algorithm's race detection capability 
    when varying the likelihood of any race and compare it with \pacer's and \fasttrack's.
    \item[(Traces with Long Races)] \RND's high precision is conditioned and it can only detect "short races" such that both events can be sampled in the single sampling phase. We therefore examines the number of long races that \RND is unlikely to detect for each benchmark and examines how \RND behaves in the benchmarks where the long races dominate. Surprisingly, \RND is still able to detect at least one race for most of these benchmarks and even for the one where no short race existed(because sampling period can overlap and form a larger period).
\end{description}

 \item[(Detecting Various Races)]  
 The technique section does not provide any guarantee on the characteristics of races that \RND can detect, however, for most of the races, including the rare ones, we expect \RND to have a good chance to detect them with some probabilities. We examines this by evaluating the following three metrics:
    \begin{enumerate}
        \item ability to detect number of warnings 
        \item ability to expose racy memory locations 
        \item ability to expose racy source code locations 
    \end{enumerate}
\ucomment
{We also compared \RND against \pacer on these metrics. However, as \RND and \pacer adopt different philosophies, and the tools are expected to behave differently, the section does not conclude that one tool outperforms the other on any of the metrics but is designed to provide some observations to address developer's concern.   }
 \end{description}
 }
%!TEX root = ./main.tex

\subsection{Implementation and Setup}
\seclabel{implementation}

We have implemented our algorithm \RND, the \fasttrack algorithm
and \pacer's sampling algorithm in Java.
Our implementation is designed to run all three algorithms on the exact same trace to allow for a fair comparison and reduce noise in the results introduced by the runtime thread-scheduler.

\myparagraph{\pacer}{
\pacer is the state-of-the art sampling based race detection 
technique for detecting data races.
At a high level, \pacer partitions the observed execution
trace into \emph{sampling} and \emph{non-sampling} phases, similar to \RND.
In each sampling period, \pacer monitors all events, by performing metadata 
updates as in~\fasttrack, similar to \RND.
But unlike \RND, the expected total size of the sampling periods is $r\cdot n$ 
where $r$ is a sampling rate set by \pacer, and $n$ is the total size
of the execution trace.
The more stark difference is that \pacer also performs
metadata updates on (a subset of) events in a non-sampling period ---- all synchronization
operations, as well as all memory locations that were accessed in prior sampling periods
are tracked.
This means that, in general, \pacer effectively can analyze a large number of events,
much more than what is determined by its proportionality constant $r$.
On the other hand, our approach guarantees that the number of events
analyzed by \RND over the course of the entire execution is bounded by
a constant which is determined by the number of threads, lock
nesting depth and the chosen values for the parameters $\epsilon$ and $\delta$.
%\ucomment{It is worth mentioning that the events \pacer tracked in non-sampling period are much cheaper to process compared to the events in its sampling period. However, those events are still not as cheap as the events in \RND's non-sampling period as \RND pays no analysis cost to these events. Therefore, to better explain why \RND can induce lower analysis cost even when \pacer is only "sampling" a small proportion of events, we refer all events \pacer ever tracked for all benchmarks as processed events in the table.}

% However, \pacer also performs
% nevertheless may sample more 
% a sampling based technique for data race detection that samples events proportional to the size of execution trace. \pacer divides the analyze of an execution into sampling period and non-sampling period. During sampling period, \pacer runs \fasttrack as \RND does and during non-sampling period, \pacer only monitors the first different read or write event of a variable that has been accessed during a prior sampling period whereas \RND experiencing no cost when it's not sampling. In particular, in non-sampling period, 
% \pacer pays a cost from metadata lookups on all events to determine if a event can race with a prior sampled event or will create new HB edges, and \pacer further pays a more expensive cost to analyze the event if it gets a positive answer from the check. Observe that \pacer may therefore analyze more events than it samples whereas \RND will only analyze what it samples. 

The publicly available implementation of \pacer~\cite{pacer-tool} is built on top of 
Jikes RVM-3.1.0, which is only compatible with an old version of Java 1.6.0.
Further, this implementation does not support comparing the same execution trace
against different runtime techniques.
% As a result, \pacer on top of \rapid's \fasttrack for the fairness of comparison. 
A distinguishing feature of the Jikes-RVM implementation of \pacer's algorithm
is the use of the runtime garbage collector to implement a periodic random sampler.
In our implementation, we simulate the effect of a periodic sampler by
invoking our sampler in a periodic fashion.
% \ucomment{here}
%  implementation samples at the end of garbage collection, since \rapid does not support garbage collection instrumentation, we stimulate the effect of it by splitting the execution trace into sub-traces of equal length and samples at the end of each sub-trace. In the experiment, we set the length of the sub-traces to be the same as that of \RND to avoid inaccurate comparison.
Our implementation of \pacer in \rapid closely mimics
the algorithm's description in the original paper~\cite{bond2010pacer},
including the use of version clocks, shallow copies and optimizations in
vector clock joins. 
% Besides the sampling infrastructure added to \fasttrack, we modified all \fasttrack's vector clock operations and read/write analysis to achieve the optimization mentioned in their paper .
% For both of our implementations of \RND and \pacer, there is an additional constant cost added to each event to check if the tool should start sample at the event. 
% In general \RND is designed to terminate once it sampled enough sub-traces. For the fairness of the experiment, we force \RND to iterate through the whole execution trace and we refer this cost as sampling cost in the rest of the paper.
}

%!TEX root = ./main.tex

%!TEX root = ../main.tex

\begin{table*}[t]
\caption{
Characteristics of traces.
We aggregate benchmark traces in $6$ clusters, based on the values of parameter $m = 4|T| + 2h$.
Column 1 shows the range of $m$ values in each cluster and Column 2 shows the number of traces in each such cluster.
Columns 3-9 show the average, min, max and different percentiles of the lengths of the traces in each cluster.
The last row shows these metrics for the entire dataset.
}
\label{table:table2}

% \vspace{-0.1in}
\centering
\scalebox{0.8}{
\begin{adjustbox}{center}
\renewcommand{\arraystretch}{1.3}
% \begin{tabular*}{2.4\columnwidth}{@{\extracolsep{\fill}}!{\VRule[1pt]}c!{\VRule[1pt]}c|c|c|c|c!{\VRule[1pt]}c|c|c|c!{\VRule[1pt]}}
\begin{tabular*}{1.16\columnwidth}{!{\VRule[1pt]}c!{\VRule[1pt]}c!{\VRule[1pt]}c|c|c|c|c|c|c|c!{\VRule[2pt]}}
% \hline
\specialrule{1pt}{0pt}{0pt}
1 & 2 & 3 & 4 & 5 & 6 & 7 & 8 & 9\\

\specialrule{1pt}{0pt}{0pt}
\rowcolor[HTML]{EFEFEF} 
\cellcolor[HTML]{EFEFEF} $m$ 
& \cellcolor[HTML]{EFEFEF} Num. of
& \multicolumn{7}{c!{\VRule[1pt]}}{\cellcolor[HTML]{EFEFEF}{Trace Length Distribution}} \\

\cmidrule[1pt]{3-9}
\rowcolor[HTML]{EFEFEF} 
\cellcolor[HTML]{EFEFEF} (range) 
& \cellcolor[HTML]{EFEFEF} traces
& \cellcolor[HTML]{EFEFEF} \; Average \;
& \cellcolor[HTML]{EFEFEF} \; Min \;
& \cellcolor[HTML]{EFEFEF} \; 20 \%-ile \;
& \cellcolor[HTML]{EFEFEF} \; 40 \%-ile \;
& \cellcolor[HTML]{EFEFEF} \; 60 \%-ile \;
& \cellcolor[HTML]{EFEFEF} \; 80 \%-ile \;
& \cellcolor[HTML]{EFEFEF} \; Max \;
 \\
\specialrule{1pt}{0pt}{0pt}
\textsf{(0, 29]}        & 12  & 650.2M & 40.0M& 134.3M& 291.7M& 539.6M& 607.8M& 2.8B\\
\textsf{(29, 59]}       & 10  & 391.3M & 1.0M& 47.0M& 124.6M& 253.4M& 323.0M& 2.4B \\
\textsf{(59, 69]}       & 57  & 165.9M & 3.1M& 104.9M& 112.3M& 135.0M& 168.9M& 1.3B \\
\textsf{(69, 95]}       & 14  & 541.1M & 11.7M& 90.2M& 265.4M& 533.9M& 771.1M& 1.6B
 \\
\textsf{(95, 231]}      & 49  & 294.2M & 11.7M& 106.8M& 132.9M& 175.1M& 360.0M& 2.1B \\
\textsf{(231, 1000]}    & 7   & 158.7M & 39.1M& 65.2M& 177.8M& 199.9M& 207.5M& 259.1M\\
\specialrule{1pt}{0pt}{0pt}
All traces              & 149 & 297M & 1.0M & 102M & 127M & 172M & 349M & 2.8B \\
\specialrule{1pt}{0pt}{0pt}
\end{tabular*}
\end{adjustbox}
}
% \bigskip
\label{tab:time2}
\end{table*}

\myparagraph{Benchmarks}{
% \ucomment{Change this.}
Our benchmark programs are primarily derived
from prior works which evaluate the performance of different race detection techniques~\cite{MathurTreeClocks2022,SyncP2021,fasttrack,rv2014,wcp2017}.
These include Java benchmarks from the DaCaPo benchmark suite~\cite{DaCapo2006}, 
Java Grande Forum~\cite{JGF2001}, microbenchmarks 
from~\cite{vonPraun03} and SIR~\cite{SIR2005}, and OpenMP benchmarks
derived from DataAccelerator~\cite{schmitz2019dataraceonaccelerator},
DataRaceBench~\cite{liao2017dataracebench},
OpenMP source code repository~\cite{dorta2005openmp}
and the NAS parallel benchmarks~\cite{nasbenchmark}
as well as from HPC applications including CORAL benchmarks~\cite{coral,coral2}, 
ECP proxy applications~\cite{ecp},
and Mantevo project~\cite{mantevo}.
}

\myparagraph{Experimental Setup}{
The execution traces of Java benchmarks were logged using
the {\textsc{RoadRunner}}~\cite{flanagan2010roadrunner} dynamic analysis framework
and traces from OpenMP benchmarks were logged using ThreadSanitizer~\cite{threadsanitizer}.
There were $90$ programs in the benchmark suite and we ran some programs on two different inputs. This resulted in a total of $149$ benchmark traces, after filtering out short traces (with $<1M$ events).
We analyzed each of these traces against the three algorithms we have implemented.
We use the parameter combination $(\epsilon=0.01, \delta=0.1)$
% and $(\epsilon=0.1, \delta=0.1)$ 
for \RND and set \pacer's sampling
rate to be $3\%$ as suggested in~\cite{bond2010pacer}.
On each trace and for each combination of parameters, 
we ran {\fasttrack} $20$ times, while {\RND} and {\pacer} $50$ times, 
to account for randomization and variations in system load; {\fasttrack} was run fewer times because it is a deterministic algorithm. 
We report the average performance of each tool in our tables and graphs.
Our experiments were conducted on machine with 2.6GHz 64-bit Linux machine,
using Java-1.8 as the JVM and 30GB heap space.}

\subsection{Trace Characteristics}
\seclabel{trace-char}

Since the number of traces in our collection is large, we will
summarize the data by clustering traces according to the values of several parameters.
We choose two metrics for clustering traces --- lengths of the traces (or number of events),
and the value of $m = 4|T| + 2h$ which governs the length of
each sample \RND extracts, and thus the total number of sampled events.
The choice of the second metric allows us to focus on intrinsically \emph{similar} traces at the same time.
Table~\ref{table:table2} summarizes our set of traces, in overall terms, as well as with the details of each $m$-based cluster.
Observe that the total number of events go as high as $2.8$ billion,
the average trace length is $297$ million and the median trace length is around $135$ million. 
The trace lengths are diverse overall, as well as within each cluster.
In total there are $5$ benchmarks that are race free. Categorized by trace length, there are $30$ benchmarks in $(0, 100M]$, $70$ in $(100M, 200M]$, $24$ in $(200M, 400M]$, $14$ in $(400M, 700M]$ and $11$ in $(700M, 3B]$~\footnote{$100M$ denotes $100$ million or $10^8$, while $1B$ denotes $1$ billion or $10^9$.}. We also provide a table with more detailed information for the benchmarks in the appendix.

% \myparagraph{Trace characteristics}{
% \ucomment{Needs to change}
% Table~\ref{table:table1} summarizes characteristic features of the traces we extracted, 
% together with the running times of these algorithm,
% the number of races they report and
% the number of events sampled by \pacer
% and 
% % both versions of 
% our algorithm {\RND}. 
% Each row corresponds to one of the benchmark program in our suite. 
% Columns $2$ and $3$ report the number of threads and locks accessed, 
% respectively, in our traces. 
% Column $4$ records the maximum number of locks held at any point in our input trace. 
% Notice that though the number of locks used in these benchmarks could be large 
% (as high as $18.8$k), 
% the maximum number of locks held at any point is much smaller (${<}10$). 
% Column $5$ reports the number of events in our logged trace for each benchmark. 
% Our traces range from having about $17$ million events (\textsf{sunflow}) 
% to about $3$ billion events (\textsf{raytracer}). 
% }

% \input{tables/table1}
%!TEX root = ./main.tex
\subsection{Detecting at Least One Race}
\seclabel{one-race}

%!TEX root = ./main.tex

\RND is designed to approximately solve the decision problem of race detection, i.e., answer the question whether an observed trace has a race. The innovation in the algorithm is to identify how little sampling will still allow one to mathematically guarantee reporting a race (with high probability) in a trace that is far from being race-free. In this section, we explore how effective the theoretical claims are in practice by answering the following questions.
\begin{itemize}
    \item \RND is designed to sample minimally so that its running time is low. Does this hold in practice? We answer this in the affirmative.
    \item Theoretical claims about \RND's correctness guarantee detecting a race only when the trace is $\epsilon$-far from being race-free. In practice, how does \RND perform when the distance of the observed trace from race-free traces is much less than $\epsilon$? Does \RND successfully report races in such traces? We find that \RND does report races often even when the observed trace is very close to being race free.
    \item \RND samples short sub-traces and checks for races within these sub-traces. Thus, a race pair that is far apart will not be detected by \RND since the sub-traces sampled will not have both the events forming the race pair. How often do traces contain race pairs that are not far apart? Second, is \RND able to report races, when almost all the races are at distance greater than the sample length used by \RND? In our benchmark suite, we observe that most traces have races between nearby events. Moreover, even in traces where almost all races are far apart, \RND successfully reports races often.
\end{itemize}

\myparagraph{Running Time}{
%\seclabel{scalability}
% \begin{figure}[t]
% \includegraphics[width=0.45\textwidth, height=0.3\textwidth]{graphs/timevslength.pdf}
% \vspace{-0.1in}
% \caption{Trace length v/s Running Time (ms)}
% \label{fig:runningtime}
% \end{figure}
Our implementation of each of \fasttrack, \pacer and \RND analyzes trace logs,
and we measure the running time of race detection by simply measuring the time taken by each of the algorithms to analyze each trace.
The running times are computed as the average of the time taken during each run of an 
algorithm on a given trace. 
% The times are reported in milliseconds. 
In order to be able to present our results for the large benchmarks visually, we cluster traces by their lengths. For each cluster, we compute the weighted average of the running times and speedups(\fasttrack as baseline) for \fasttrack, \pacer, and \RND, where the weight for a trace is the reciprocal of its length.
% \fasttrack, \pacer, and \RND as a function of trace length
% (\RND times are shown for $\epsilon=0.01$). 
% The shaded envelope around the graph shows the standard deviation observed over the multiple runs. 

\begin{figure}[h]
\begin{subfigure}{0.45\textwidth}
\includegraphics[width=\textwidth]{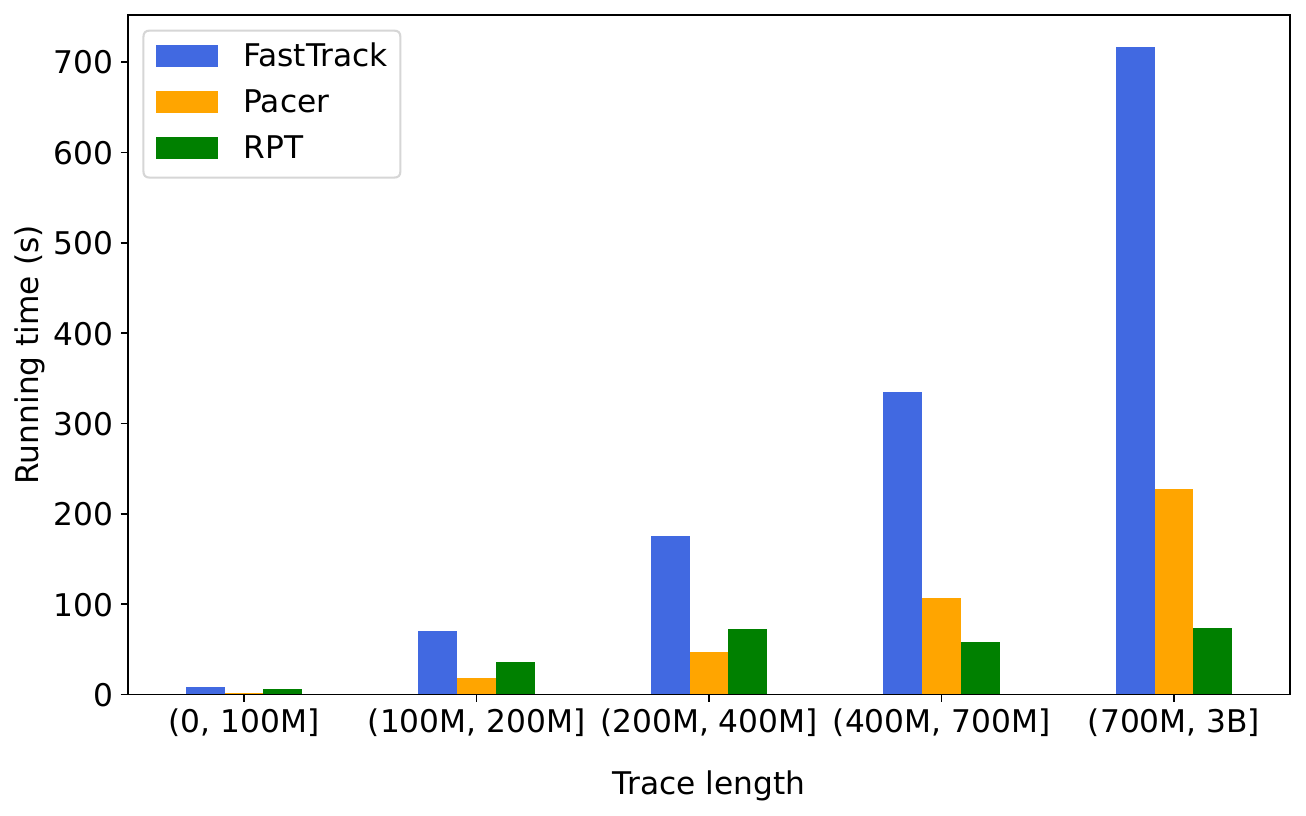}
\vspace{-0.1in}
\end{subfigure}
\begin{subfigure}{0.45\textwidth}
\includegraphics[width=\textwidth]{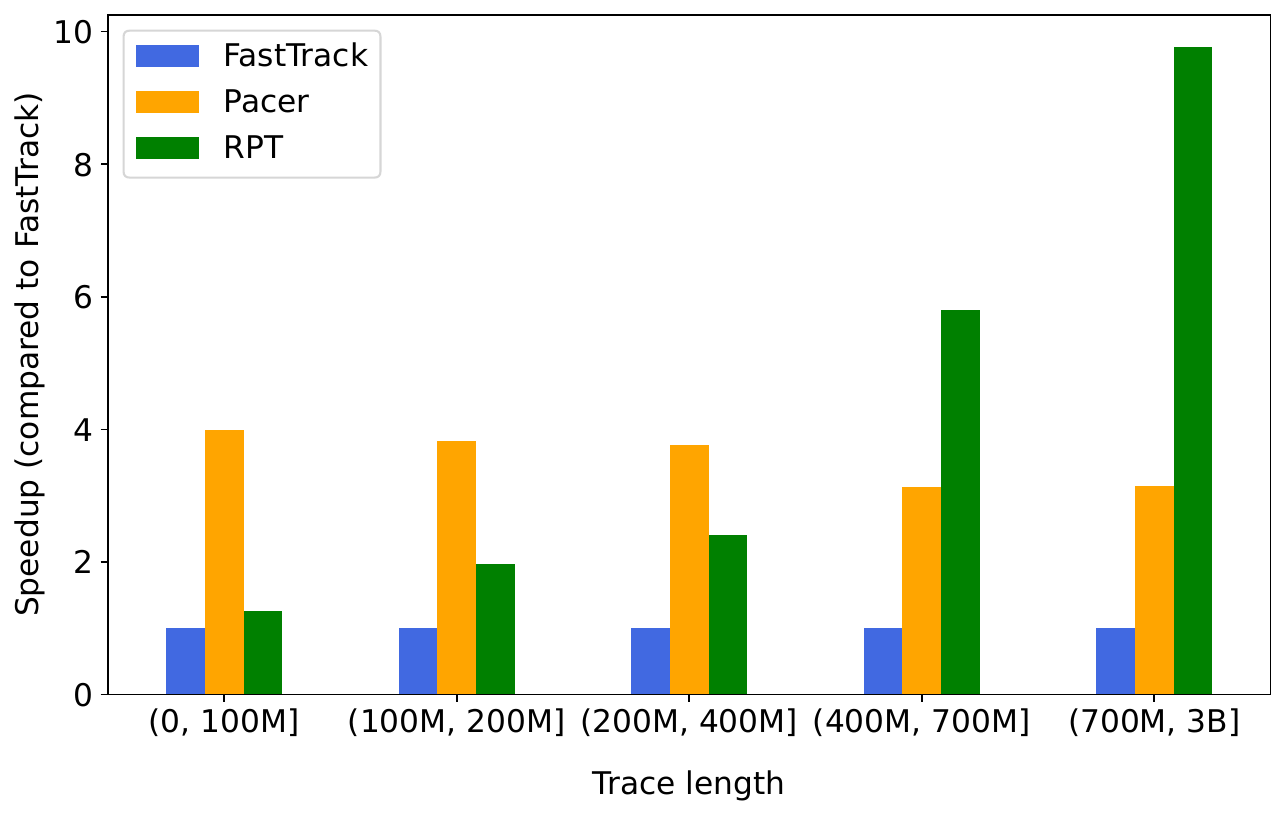}
\vspace{-0.1in}

\end{subfigure}
\caption{Weighted average running time (left) as function of trace length (range) and weighted average speedup over \fasttrack (right) as function of trace length(range) }

\figlabel{runningtime-bucketed-by-tracelength}

\end{figure}

Not surprisingly, the running time of \RND is pretty low,
and in fact, the lowest for the larger traces. 
Further, \pacer is also significantly faster than {\fasttrack}. 
As trace lengths increase, \RND's competitive advantage over \pacer and \fasttrack 
becomes more significant, and \RND's running time does not grow as fast. 
\begin{comment}
\ucomment{Plot these: The Columns 7 and 8 report the number of events that the sampling algorithms
(\RND with $\epsilon=0.01$ and \pacer) report.
These explain why the running times are as shown in \figref{runningtime}.
\RND's low overhead follows from that when an event is not sampled,
the amount of time spent in analyzing the event is almost negligible.
In contrast, since \pacer needs to perform meta-data lookups for every event, 
even when they are not part of the analysis, there is a significant overhead.}
\end{comment}
% Also note the number of events sampled for \RND ($\epsilon=0.1$)
% in Column 6, where the number of events never exceed $1\%$, except for the two smallest traces.

% The values for parameters $\delta$ and $\epsilon$ for {\RND} were set arbitrarily. Finally, Column $9$ reports the number of events analyzed by {\pacer}. For {\pacer} we set the sampling to be $3\%$ based on the recommendation in~\cite{bond2010pacer} which suggests using a sampling rate between $1\%$ and $3\%$ to keep overheads low.

\begin{figure*}[t]
\begin{subfigure}{0.48\textwidth}
    \includegraphics[width=\textwidth]{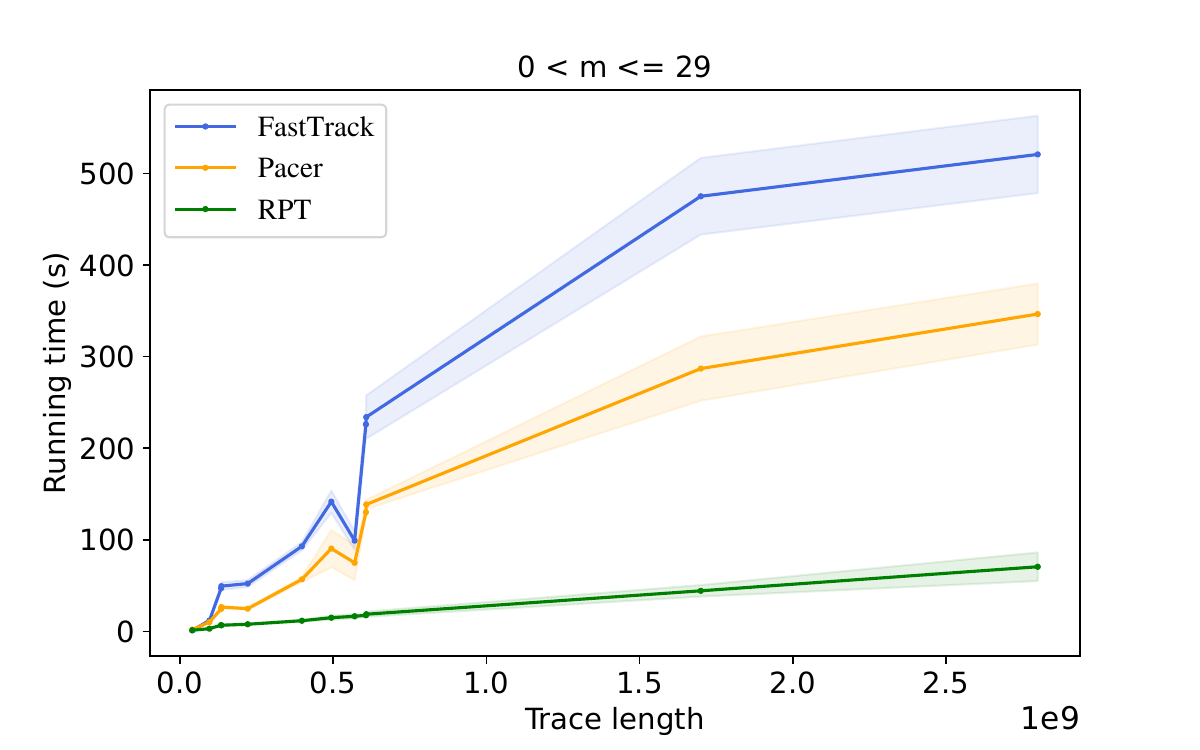}
    \caption{Running times for traces with $m \in (0, 29]$}
    % \figlabel{trace-sizes}
\end{subfigure}
\begin{subfigure}{0.48\textwidth}
    \includegraphics[width=\textwidth]{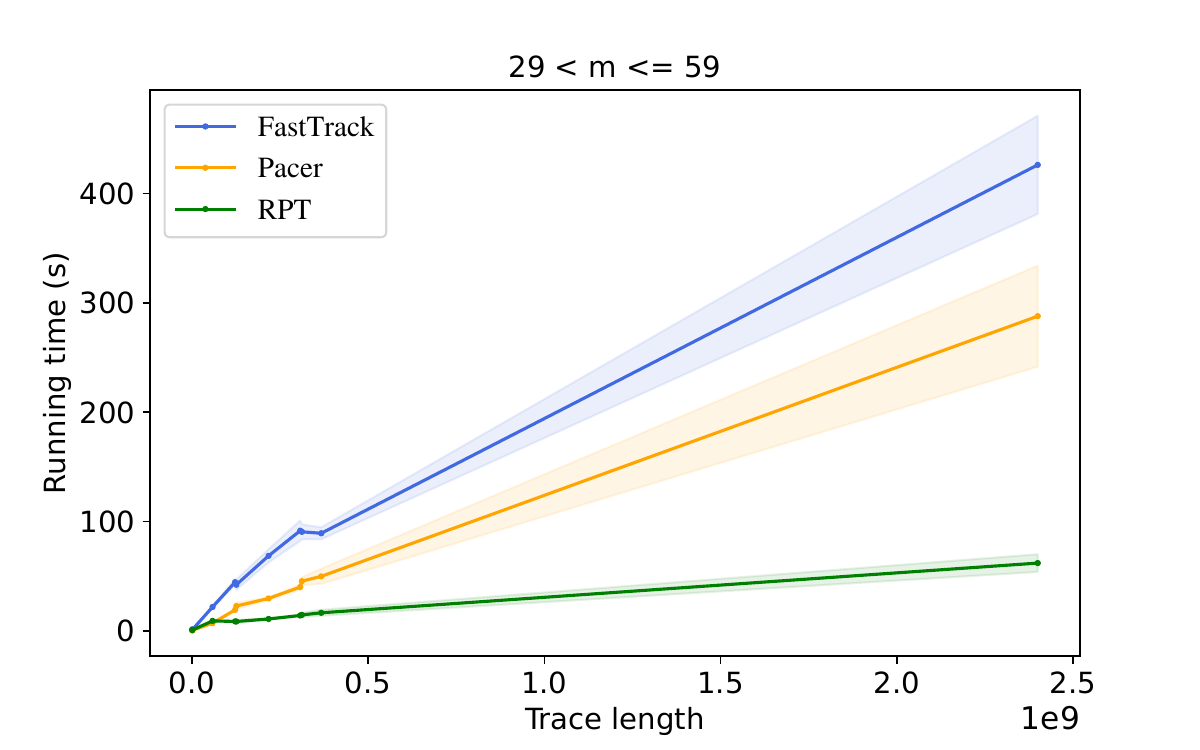}
    \vspace{-0.2in}
    \caption{Running times for traces with $m \in (29, 59]$}
    % \figlabel{slp-sizes}
\end{subfigure}
\begin{subfigure}{0.48\textwidth}
    \includegraphics[width=\textwidth]{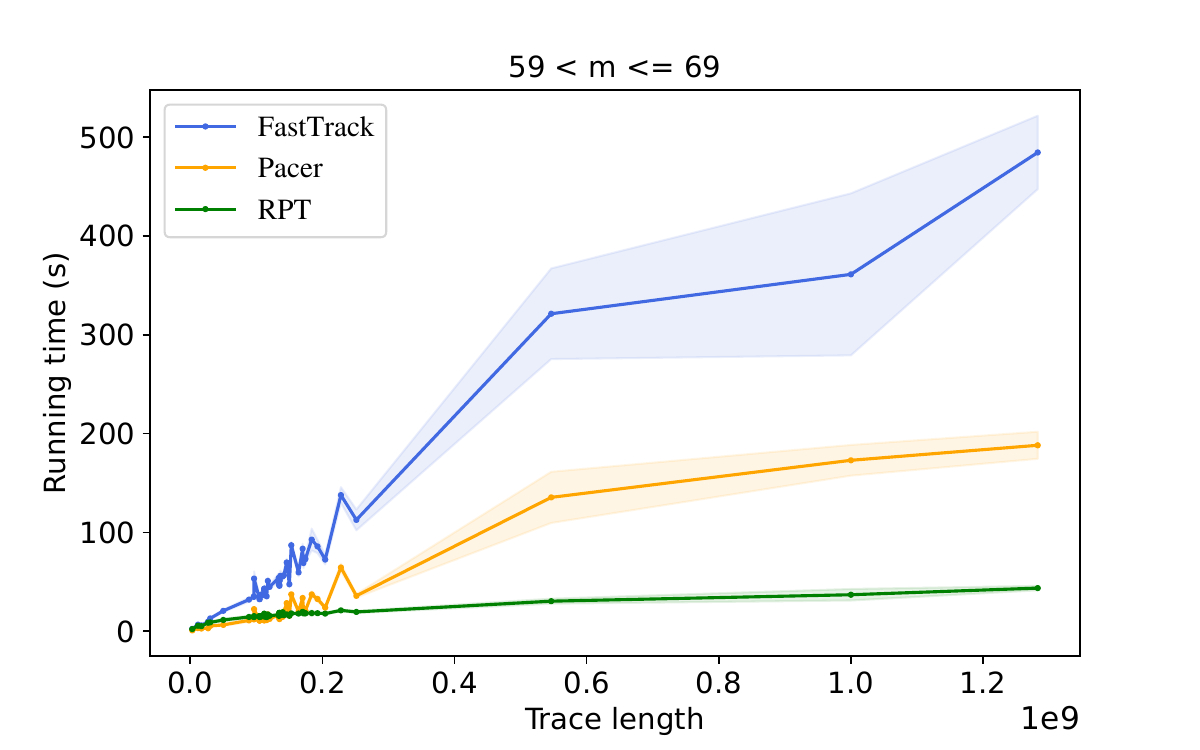}
    \caption{Running times for traces with $m \in (59, 69]$}
    % \figlabel{trace-sizes}
\end{subfigure}
\begin{subfigure}{0.48\textwidth}
    \includegraphics[width=\textwidth]{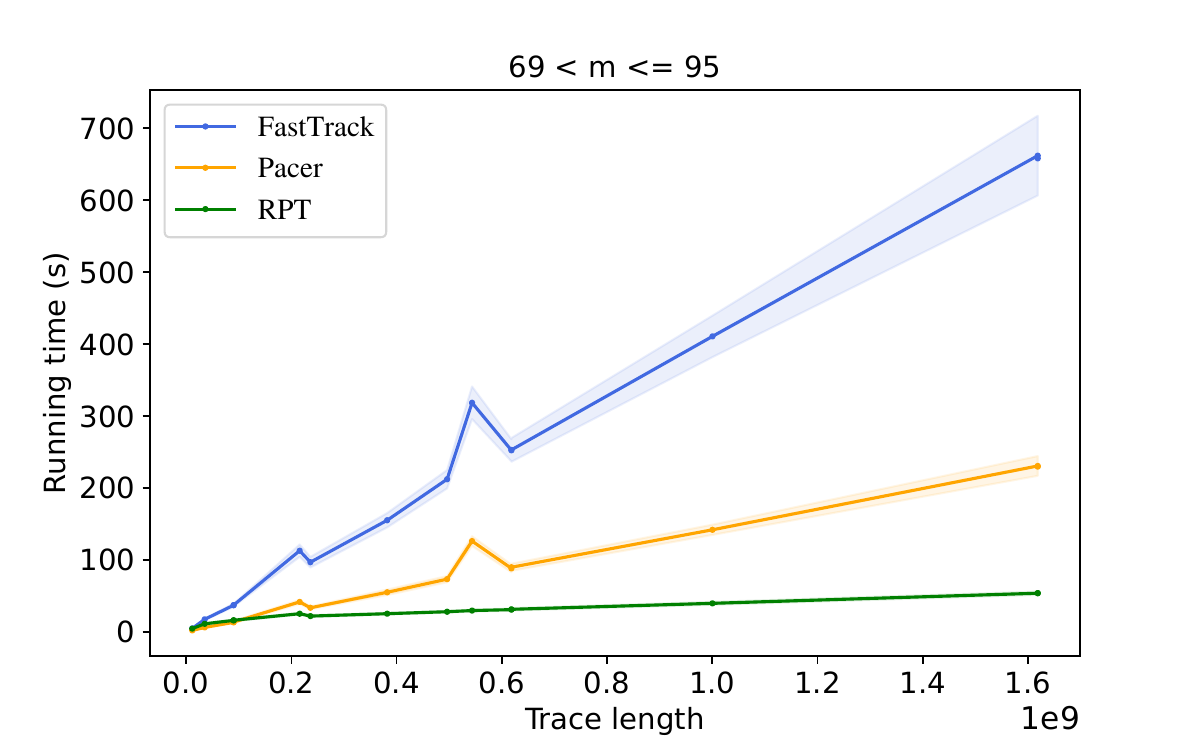}
    \vspace{-0.2in}
    \caption{Running times for traces with $m \in (69, 95]$}
    % \figlabel{slp-sizes}
\end{subfigure}
\begin{subfigure}{0.48\textwidth}
    \includegraphics[width=\textwidth]{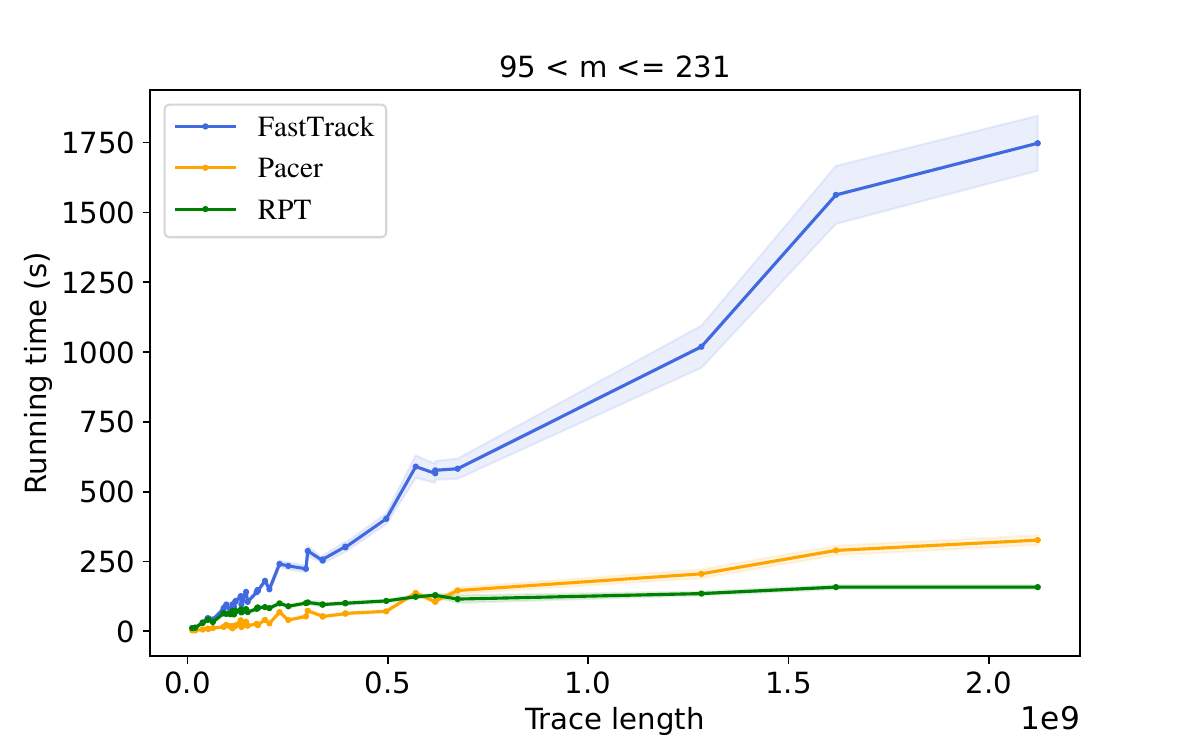}
    \caption{Running times for traces with $m \in (95, 231]$}
    \figlabel{trace-sizes}
\end{subfigure}
\begin{subfigure}{0.48\textwidth}
    \includegraphics[width=\textwidth]{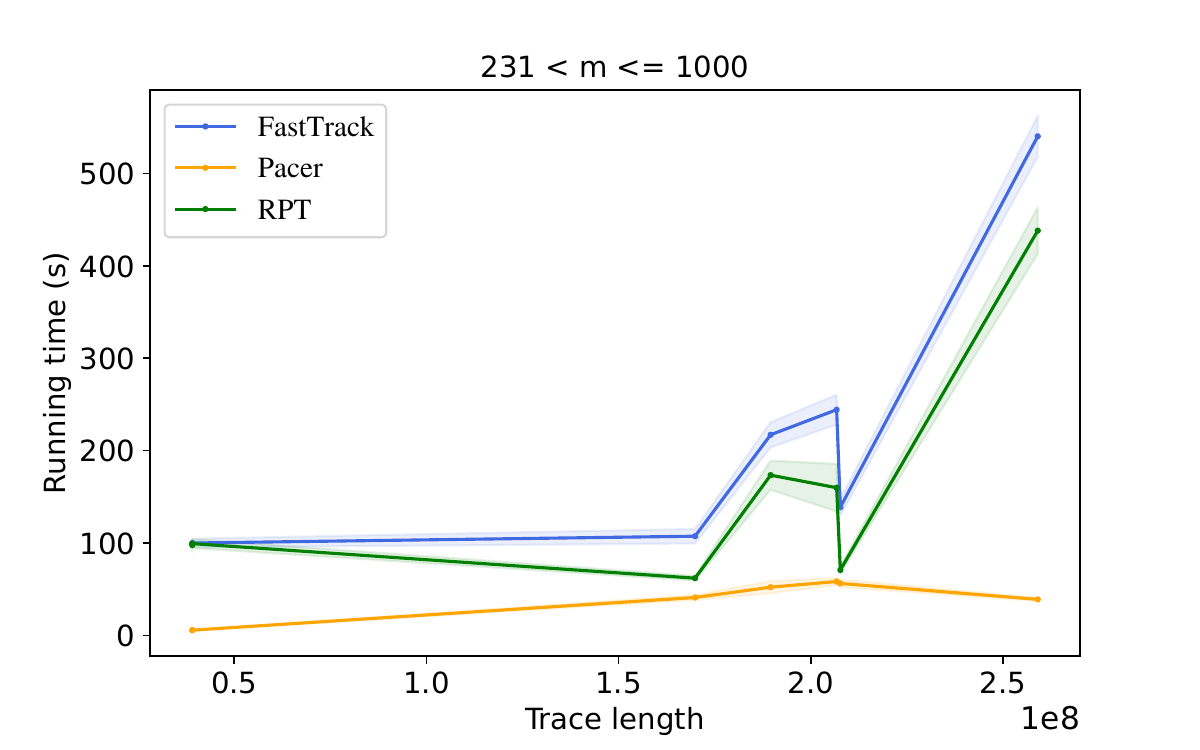}
    \vspace{-0.2in}
    \caption{Running times for traces with $m \in (231, 1000]$}
    % \figlabel{slp-sizes}
\end{subfigure}
% \vspace{-0.1in}
\caption{Running time as a function for trace length in each cluster.}
\figlabel{clusters-runningtime}
\end{figure*}

We next `zoom-in' into the traces to understand the running times better, instead
of aggregating the running times for several traces.
Indeed, the number of traces in the extremal buckets (for example, for traces with length $>700$M) averaging
the running times smoothes out interesting behavior that we would like to otherwise understand.
For a fine grained analysis, we cluster the traces according to the parameter $m$
and analyze each such cluster individually.
\figref{clusters-runningtime} shows how the running time varies
with trace lengths, in each cluster.
The first observation we make is that the exact runtimes vary widely across clusters
(even for similar trace lengths);
see for example the clusters corresponding to $m \in (29, 59]$ and $m \in (95, 231]$
where the time taken varies significantly for traces of similar lenghts.
However, inside a given cluster, the times increase, roughly linearly with the lengths of the traces.
This justifies our choice of $m$ as a measure for clustering traces.
Indeed, the number of threads (and thus $m$) governs the
size of the vector clocks and also the running time, and further,
also governs the the number of events sampled by \RND.
Finally, the time taken by \RND, in each cluster, is much lower than
\pacer, which is much lower than the deterministic algorithm of \fasttrack
where every event in the trace is analyzed for detecting data races.

Observe that, even though our theoretical analysis of \RND (\secref{algo}) guarantees \emph{constant} running time,
the trend for \RND in any of the figures in \figref{clusters-runningtime} does not completely
`flatten' out.
This is due to the cost introduced by the random number generator
and in detecting when sampling is switched on or off.
In the next subsection, we investigate the running times in
further detail to highlight this.
Overall, \RND introduces much lower analysis cost than \pacer and \fasttrack.}

%!TEX root = ./main.tex

\myparagraph{Constant Running Time}{
%\seclabel{constant}
\begin{figure}[h]
\begin{subfigure}{0.48\textwidth}
    \includegraphics[width=0.97\textwidth]{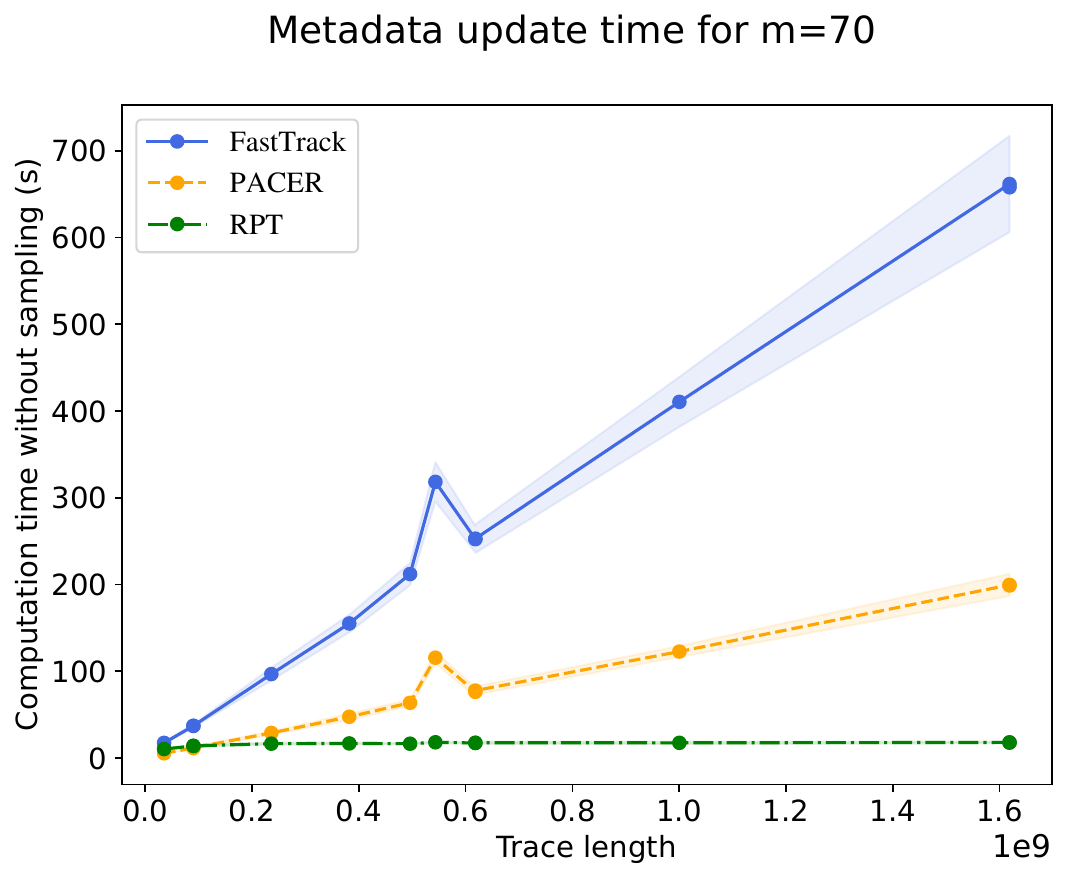}
    \caption{Traces with $m = 70$}
\end{subfigure}
\begin{subfigure}{0.48\textwidth}
    \includegraphics[width=\textwidth]{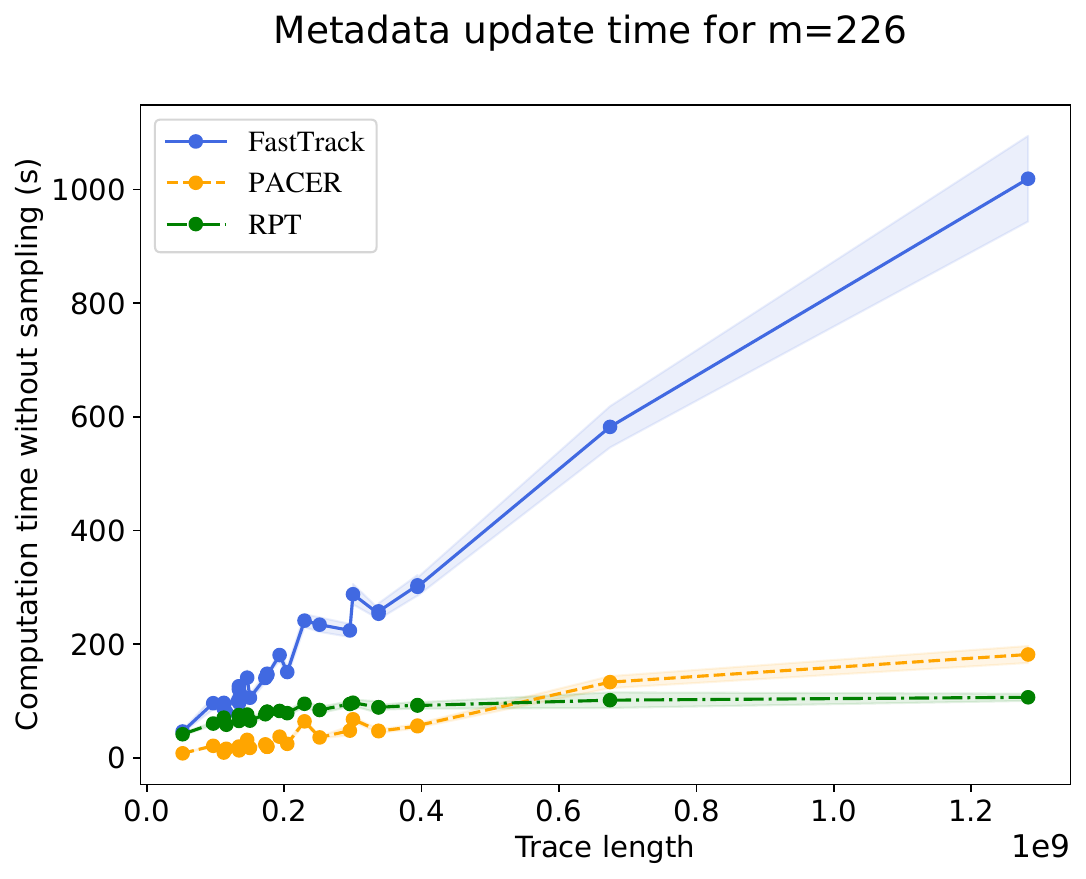}
    \vspace{-0.2in}
    \caption{Traces with $m  = 226$}
\end{subfigure}
\caption{Time to update metadata as a function of trace length, when the parameter $m$ is constant.}
\figlabel{constant}
\end{figure}
Recall that the number of events sampled by \RND is $\widetilde{O}(m)$ and is independent of the length of the trace. A natural question to ask is if that is reflected in \RND's running time experimentally. 
We study two collections of traces from our benchmarks to better understand this. The first set consists of all traces ($12$ in total) with $m = 70$, and the second consists of all traces ($28$ in total) with $m = 226$. Trace lengths vary in each collection to help understand overheads of each algorithm with increasing trace length. 
For both these sets, we plot the overhead due to processing meta-data for 
\fasttrack, \pacer, and \RND for the corresponding set of benchmarks in \figref{constant}. 
This excludes the time taken by the sampling-based algorithms in generating
the random numbers.
Since \fasttrack performs meta-data operations on all events, 
we report the total time taken for it. 
For \RND and \pacer, we only report the time for processing meta-data on the chosen events. 
We see that as expected, \RND spends constant amount of time for analyzing the sampled part of the trace.
On the other hand, both \pacer and \fasttrack spend time that increases with trace length.
}
%!TEX root = ./main.tex

\myparagraph{Precision}{
%\seclabel{precision}
\begin{figure}[t]
\includegraphics[width=0.45\textwidth]{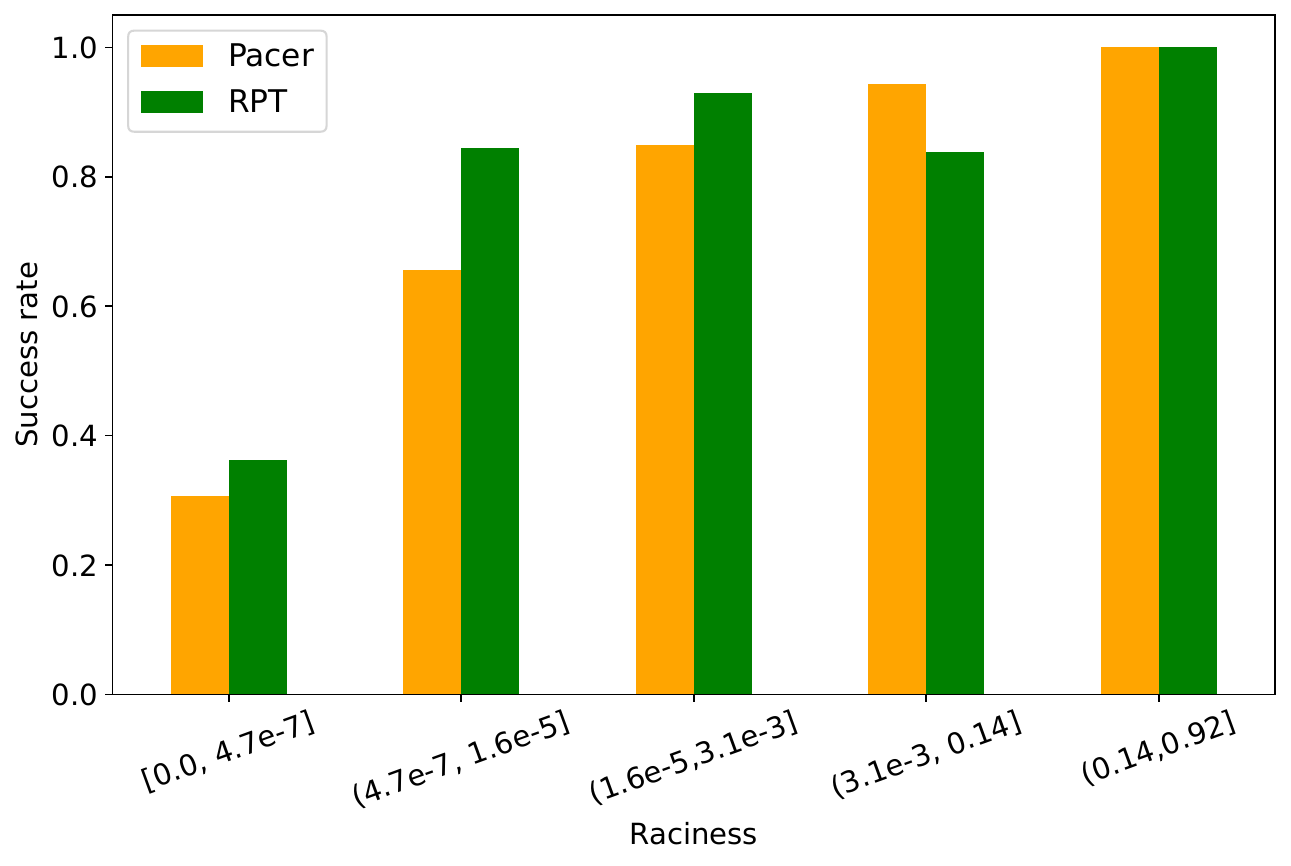}
 \vspace{-0.1in}
\caption{Success rate as a function of raciness (average)}
\label{fig:raciness}
\end{figure}
We want to evaluate
how \RND performs in terms of its ability to expose data races.
Since \RND samples only a constant number
of events, it is bound to not report every dynamic \emph{warning},
i.e., those events that appear as the second event $e_2$ 
in a data race \emph{event} pair $(e_1, e_2)$.
At the same time, the $(\epsilon, \delta)$-guarantee of
\RND ensures that whenever the observed execution is sufficiently `racy',
it will report a race with a high probability.
In the following, we report on the precision of \RND and compare
it with the precision of \pacer.

\figref{raciness} shows the probability with which \pacer and \RND % ($\epsilon=0.01$)
detect at least one race for each benchmark; 
we call this the success rate of each benchmark:
\[
	\text{success} = \frac{\text{\# runs with $\geq 1$ warnings}}{\text{Total number of runs}}
\] 
Recall that since \RND is a property tester, it guarantees to detect a race with high probability only when the trace being analyzed is far from race-free traces with respect to hamming distance. This is difficult to measure for a trace. We computed an approximation to the hamming distance that we call the \raciness of a trace:
\[
	\raciness = \frac{\text{avg. \# warnings reported by \fasttrack}}{\text{Trace length}}
\]
Raciness of a trace $\tr$ is an upper bound on the hamming distance of $\tr$ from any race-free trace. In other words, if a trace has low raciness, then it is very close to being race-free with respect to the hamming distance.
However, it could be a poor overestimate. 
We expect that the success rate of \RND will increase as the raciness of the trace increases.
In \figref{raciness},
% have clustered the benchmarks by raciness.  
we cluster traces based on their raciness, and aggregate the success rates for each bucket,
and then evaluate both \RND and \pacer based on their success rates.
Observe that over most of the clusters,  \RND's probability of detecting a race is similar, 
if not better, than \pacer's probability of race detection. 
% Moreover, for the benchmarks with very small reciness, \RND's probability is even higher than \pacer's, which means \RND is not likely to miss the races when the benchmarks are almost racefree.
Overall, we conclude that \RND successfully flags an execution racy with very high probability, 
even if the number of warnings in the trace is small (about $10^{-7}$ times the number of events in the trace). Recall that in our experiments, we run \RND with $\epsilon = 0.01$ and $10^{-7} \ll 0.01 = \epsilon$.
}

\myparagraph{Distance between racy pairs}{
\begin{figure}[h]
\includegraphics[width=0.45\textwidth, height=0.3\textwidth]{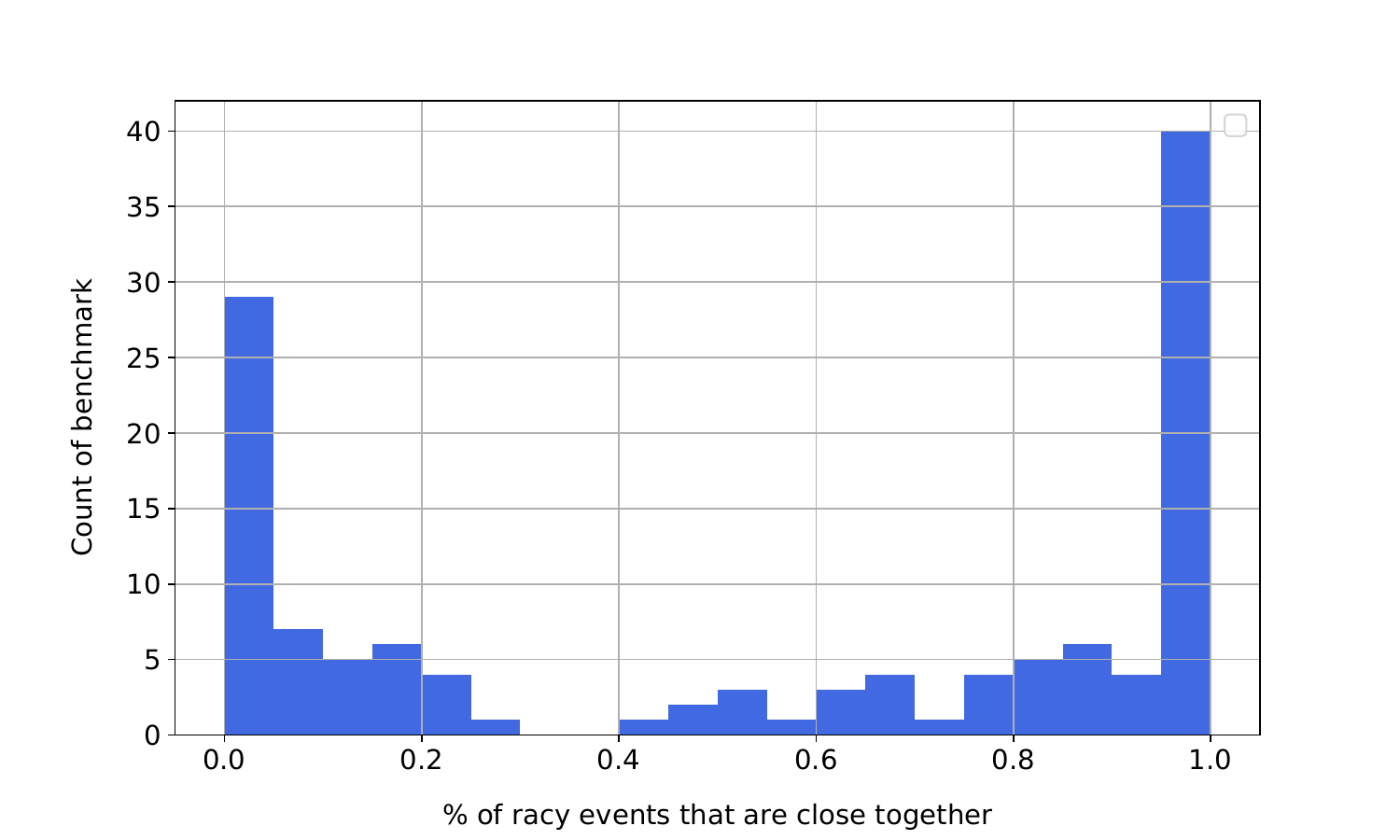}
\vspace{-0.1in}
\caption{Distribution of number of benchmarks over ratios of racy second events over all racy second events.}
\figlabel{short_races}
\end{figure}
\RND samples short sub-traces, and can only report races between pairs of events that both belong to a sampled sub-trace. Thus, \RND cannot report races between pairs of events that are far apart. How often do traces have races separated by a small number of events? Let us say that a race pair $(e_1,e_2)$ in trace $\tr$ is ``short'', if the distance between $e_1$ and $e_2$ is less than the length of the sub-traces sampled by $\RND$. We cluster our benchmark traces based on the number of short races as a fraction of the total number of races in the trace, and this is plotted as a histogram in \figref{short_races}; when plotting this histogram, we drop the one race-free trace in our suite. We observe that in $52$ traces the percentage of short races is $< 25\%$, while in $96$ traces (approximately $2/3$rds of our suite), the percentage of short races is at least $40\%$. Thus, short races are quite common in practice. 

Next, we study how \RND does on traces where the number of short races is very small, i.e., $< 5\%$. There are $29$ such traces (about $1/5$th of our suite). Surprisingly \RND's success rate is very good even on this set. On average \RND reports at least one race $84\%$ of the time on these traces. On $21$ (out of $29$) traces, \RND reports a race $100\%$ of the time, and it reports a race at least half the time on $25$ of these traces. In some of these examples, the percentage of short races is less than $0.01\%$. Among the remaining $4$ examples, there was one trace where \RND never reported a race, and on the remaining $3$ examples whose percentage of short races is $0.003-0.02\%$, \RND reported a race $20-25\%$ of the time. In particular, the one example on which \RND never reports a race, there are no short races and \pacer also fails to report any race in any of its runs on this trace.
}
\subsection{Detecting Various Races}
\seclabel{multiple-races}

\fasttrack reports every race pair and \pacer guarantees to report every race pair with an equal and non-zero probability. \RND, in contrast, does not provide such strong guarantees with it when comes to reporting an arbitrary race pair. It only promises to report some race on traces that are far from race-free traces. Nonetheless, we would like to understand how many races \RND reports and whether there are races that escape detection with \RND. We report the results of investigation in this section. 

\myparagraph{Average ratios of warnings per run}{
\begin{figure*}[h]
\begin{subfigure}{0.48\textwidth}
 \includegraphics[width=\textwidth]{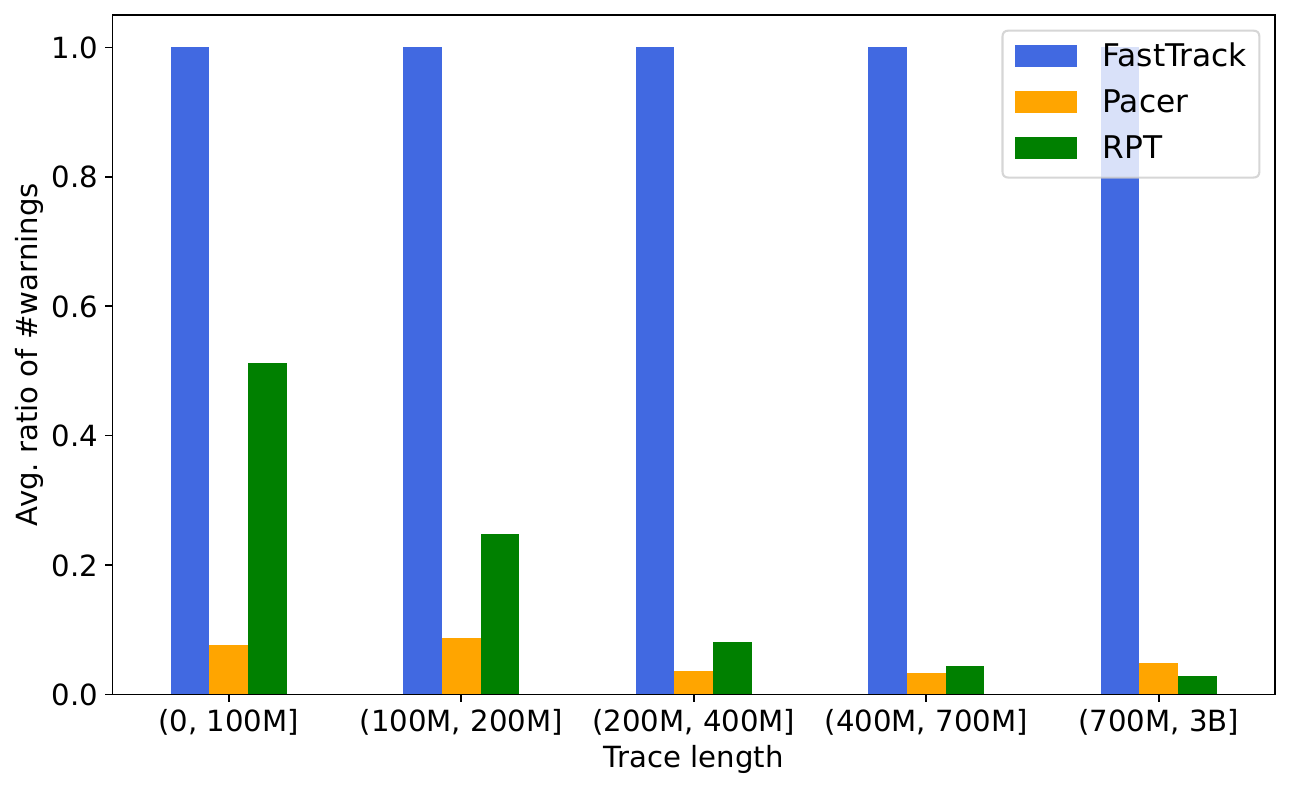}
 % \vspace{-0.1in}
\caption{Warning ratios v/s trace length}
\figlabel{numWarning-tracelength}
\end{subfigure}
\begin{subfigure}{0.48\textwidth}
 \includegraphics[width=\textwidth]{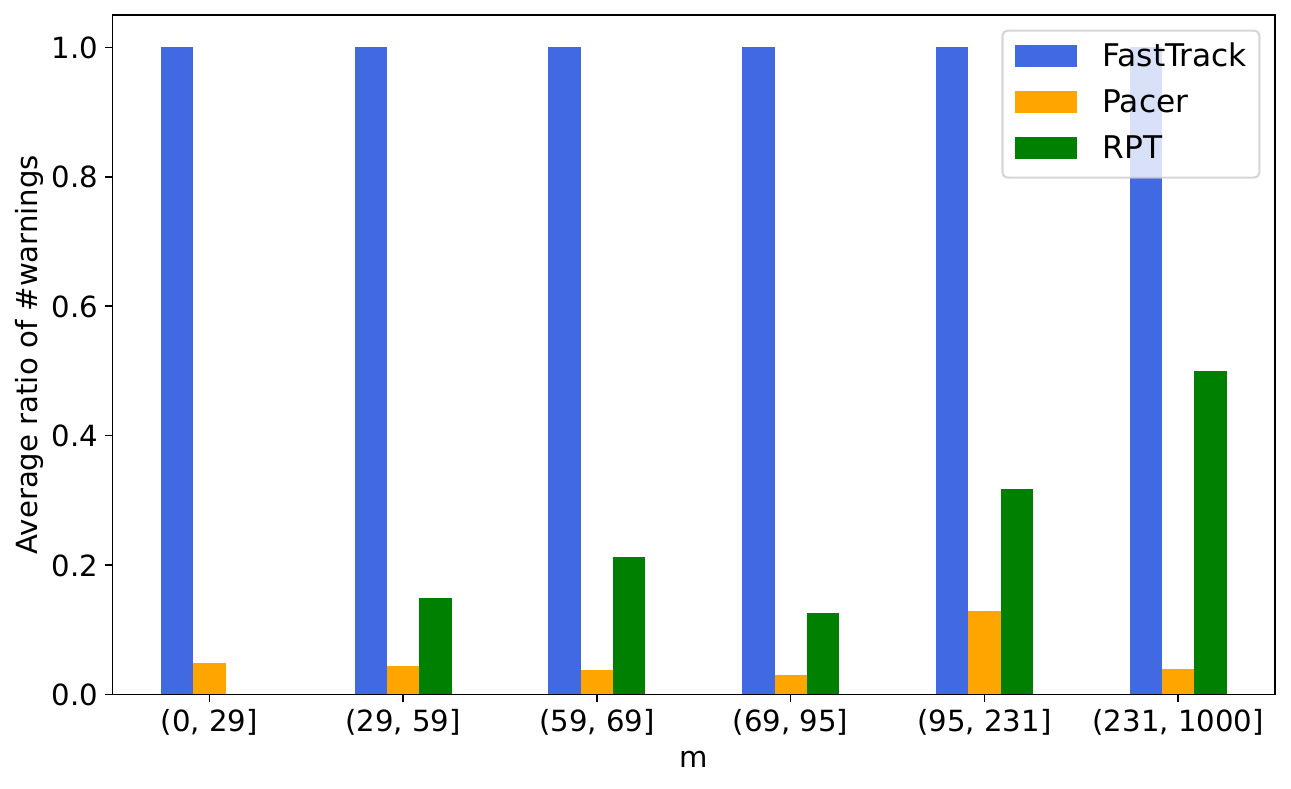}
 % \vspace{-0.1in}
\caption{Warning ratios v/s $m$}
\figlabel{numWarning-m}
\end{subfigure}
\caption{Ratio of warnings (with warnings of \fasttrack) as a function of trace length and $m$.}
\figlabel{clusters-ratios}
\end{figure*}
% Since \RND's sampling size only depends on the value of $m$ while \pacer's sampling size is proportional to the trace length, we use both of the $m$-based cluster as well as the length-based cluster to compare the ratios of warnings eported on average in each run by \pacer and \RND over the warnings report by \fasttrack.
In~\figref{clusters-ratios},
we depict how the number of warnings reported varies in our suite of traces.
As before, to visualize the data, we cluster it as per trace length and $m$.
For each cluster, we consider the ratio of warnings reported by
an algorithn (\RND, \pacer or \fasttrack) and the number of warnings
in the trace (namely those reported by the deterministic algorithm \fasttrack).
For each cluster, we compute the average of these ratios.
\figref{numWarning-tracelength}, we report how the ratio varies
across clusters of different trace lengths.
For smaller traces, \RND reports a large fraction of the warnings, as compared to \pacer.
This is expected, because \RND samples constantly many events, which for the case
of smaller traces, amounts to sampling a large fraction of the trace.
As a result, it reports many warnings.
On the other hand, \pacer misses races for smaller traces due to its proportional sampling.
For the large traces, \pacer is able to find more races as expected because of its proportional nature.
\figref{numWarning-m} plots these ratios when the traces are clustered by the value of $m$. The reason to study these plots clustered by $m$ is because the number of samples drawn by \RND on a trace grows as a function of $m$; as shown in Table~\ref{tab:time2}, clustering by $m$ and trace length are different ways to slice up our examples, and there is no correlation between these measures.
%The graphs shows that for relatively large $m$, 
%\RND is always finding more races than \pacer.
 % which implies that an $\epsilon$ larger than $0.01$ can be picked to reduce the sample size for larger $m$. 
}

% \begin{figure}[t]
% \includegraphics[width=0.45\textwidth, height=0.3\textwidth]{graphs/totalVars.pdf}
%  \vspace{-0.1in}
% \caption{Number of memory locations reported at least once by \RND v/s those reported by \fasttrack.}
% \figlabel{totalVar}
% \end{figure}

\begin{figure*}[h]
\begin{subfigure}{0.48\textwidth}
 \includegraphics[width=\textwidth]{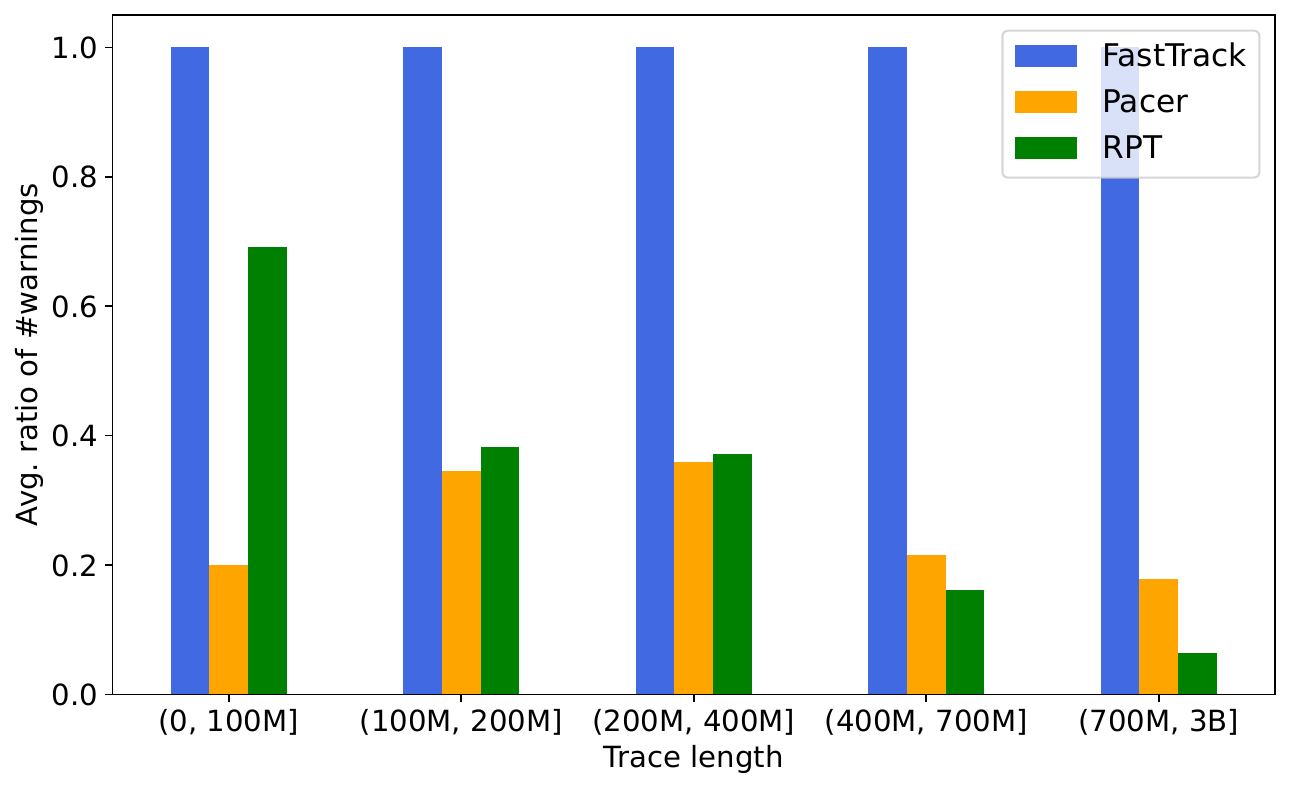}
 % \vspace{-0.1in}
\caption{Racy memory locations v/s trace length}
\figlabel{numVars-tracelength}
\end{subfigure}
\begin{subfigure}{0.48\textwidth}
 \includegraphics[width=\textwidth]{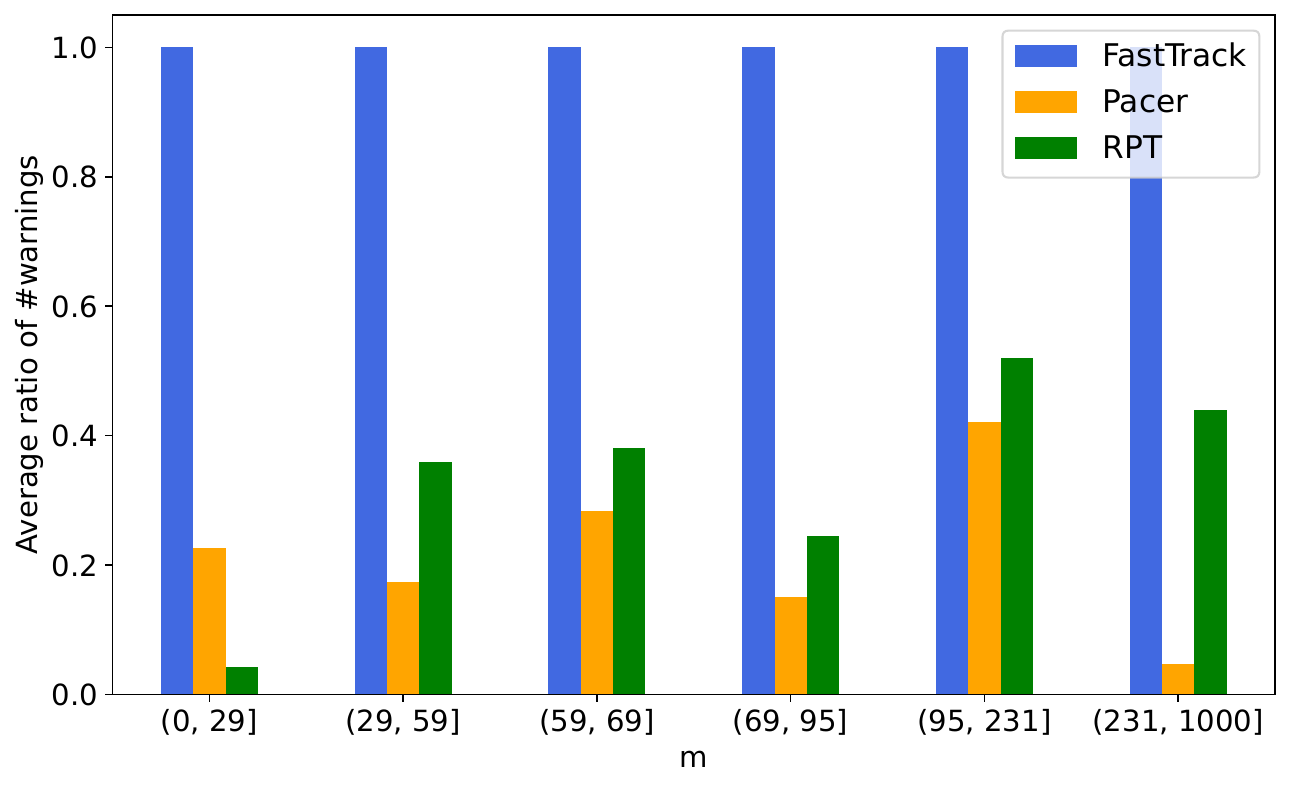}
 % \vspace{-0.1in}
\caption{Racy memory locations reported v/s $m$}
\figlabel{numVars-m}
\end{subfigure}
\caption{Number of memory locations (normalized by those reported by \fasttrack) as a function of trace length and $m$.}
\figlabel{clusters-ratios-vars}
\end{figure*}

\myparagraph{Exposing racy memory locations}{
Our benchmarks exhibit \acrhb-races on enormous memory locations. 
Here we evaluate the following question --- can \RND detect each racy memory location? 
Or, there are a large number of variables with data races that are inherently difficult for \RND to discover?
Admittedly, reports on unique memory locations are more insightful
for developers using a race detector, as excessive number of
repeated warnings (on the same location) are known to easily
overwhelm developers.
As before, we compute the aggregated ratios of memory locations
that \RND, \pacer and \fasttrack report (as compared with those reported by \fasttrack),
where the aggregation is performed according to trace lengths and $m$.
This is shown in \figref{clusters-ratios-vars}. 
While the trends here are similar to those for number of warnings,
these figures show that, in fact, when we focus on the number of unique memory
locations flagged as racy, \RND is able to correctly flag a good ratio of memory locations to be racy.
%In almost all cases, the number of memory locations reported by \RND surpass those reported 
%by \pacer.
}

\begin{figure*}[h]
\begin{subfigure}{0.48\textwidth}
 \includegraphics[width=\textwidth]{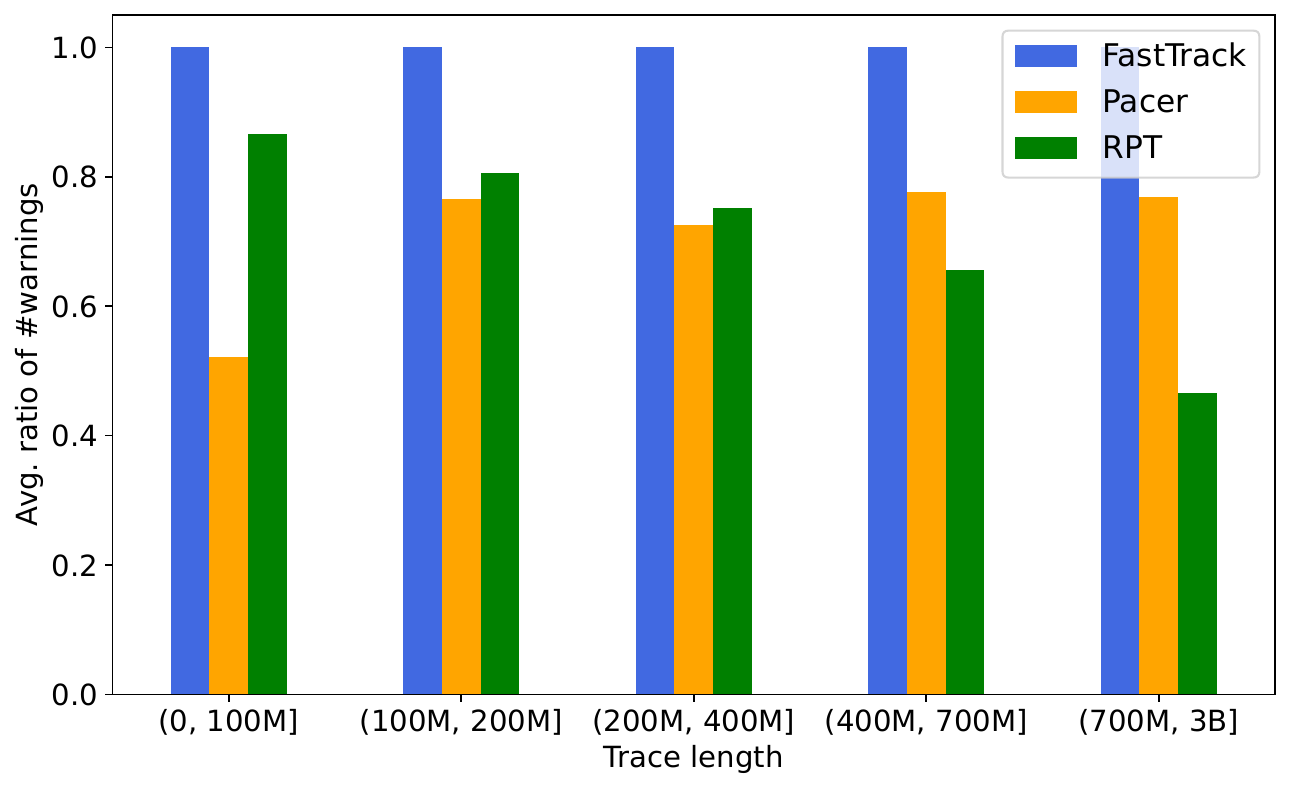}
 % \vspace{-0.1in}
\caption{Number of racy source code lines v/s trace length}
\figlabel{numLocs-tracelength}
\end{subfigure}
\begin{subfigure}{0.48\textwidth}
 \includegraphics[width=\textwidth]{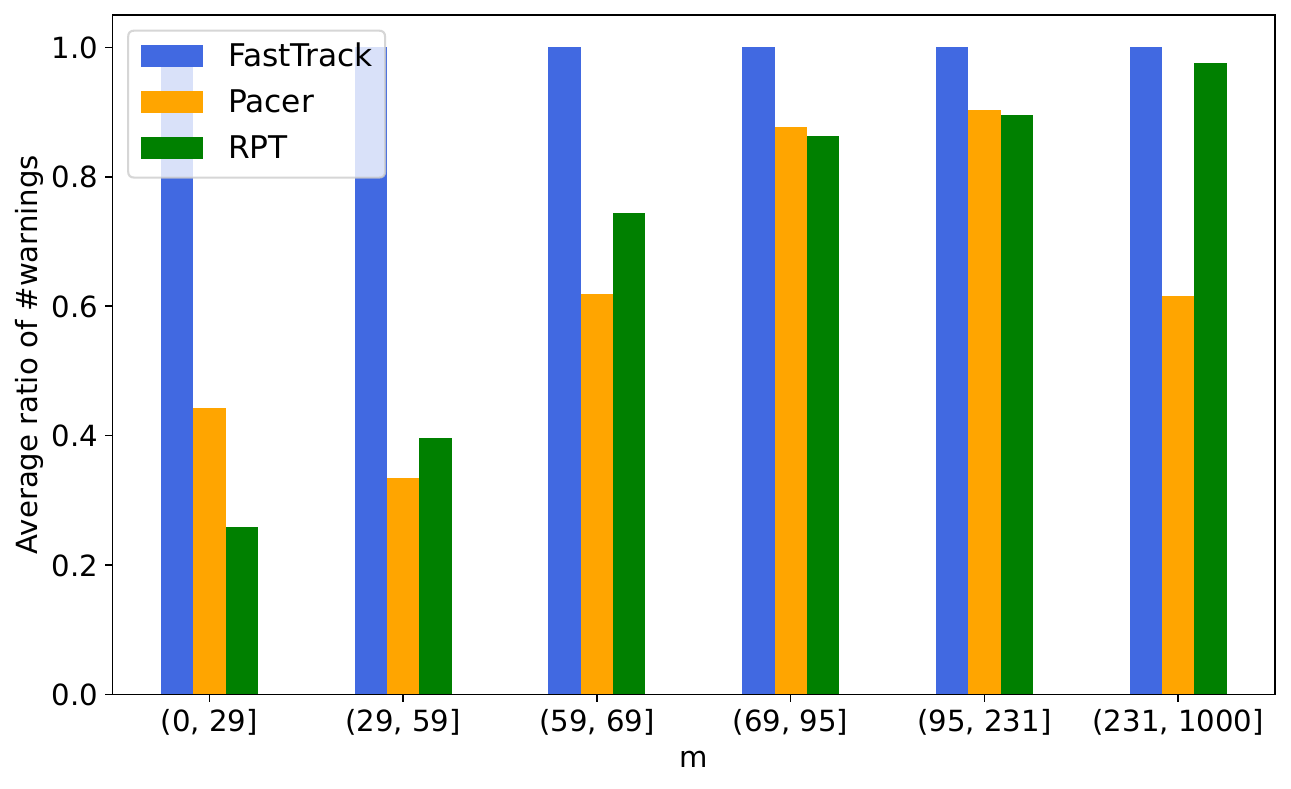}
 % \vspace{-0.1in}
\caption{Number of racy source code lines v/s $m$}
\figlabel{numSource-m}
\end{subfigure}
\caption{Number of source code locations (normalized by those reported by \fasttrack) as a function of trace length and $m$.}
\figlabel{clusters-ratios-source}
\end{figure*}

\myparagraph{Exposing racy source code locations}{
We next focus on the source code locations that these race detection algorithms report.
From the standpoint of a software developer using a race detector,
this metric is even more valuable since developers are interested in localizing the data races
and thereafter fix them. 
We report the number of source code locations flagged to be racy by each of the three tools,
and present them, as before, by clustering according to trace lengths
and $m$, in \figref{clusters-ratios-source}.
We observe that both \pacer and \RND report reasonably many source code locations.
This shows the power of sampling based approaches.
Notably, for traces with higher $m$ (that is, higher number of threads),
\RND can report almost all locations in the source code that are flagged to be racy 
by the baseline determinsitic algorithm \fasttrack.
%as the constant number of events it samples can be a large proportion of the trace.
%In fact, \pacer does miss many of the source code locations.
%For traces with higher number of events, \RND misses more locations as compared
%to \pacer because it samples constantly many events.
}

%!TEX root = ./main.tex

\subsection{Choosing $\epsilon$}

\begin{figure}[t]
\includegraphics[width=0.45\textwidth, height=0.3\textwidth]{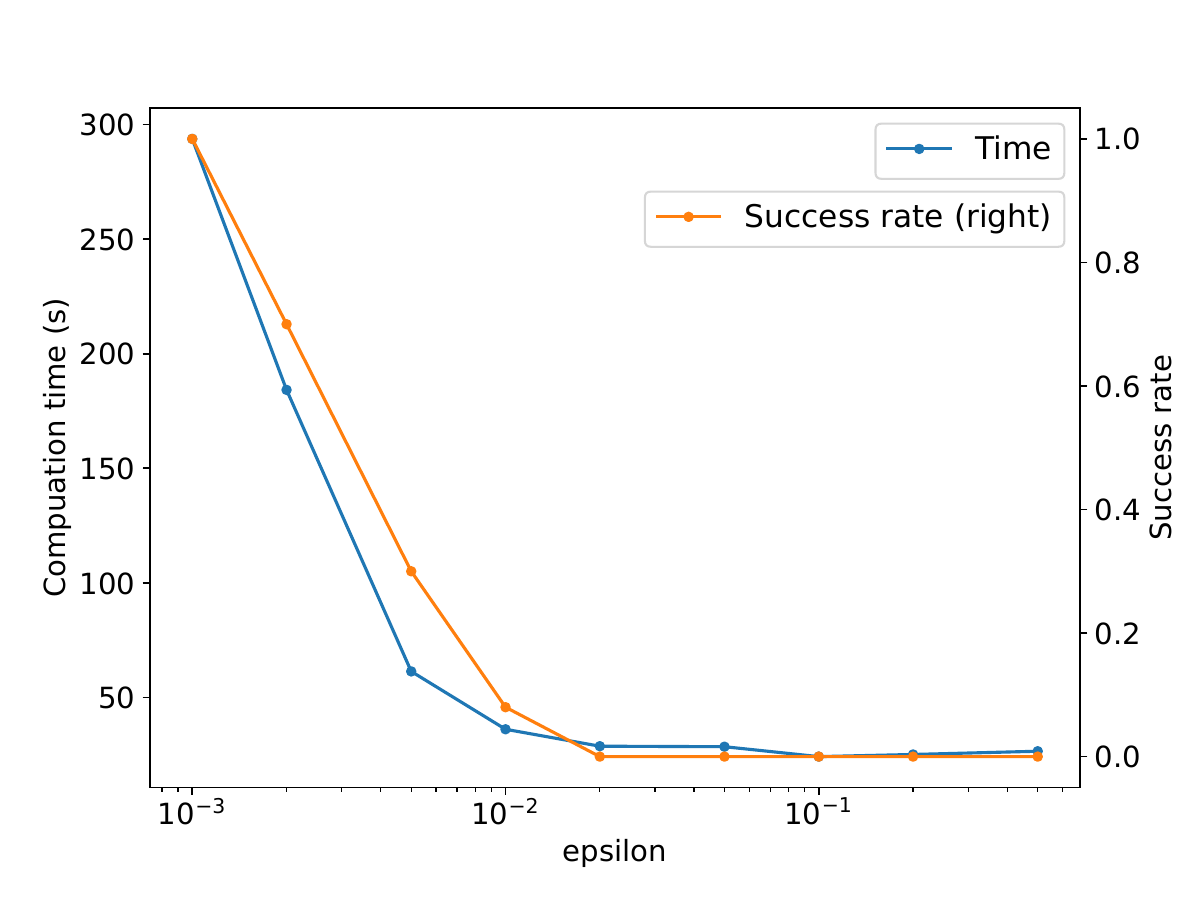}
\vspace{-0.1in}
\caption{The running time and success rate of \RND on \textsf{xalan} as $\epsilon$ varies.}
\figlabel{xalanRateTime}
\end{figure}

The performance of our property testing
algorithm \RND, both in terms of its runtime and its ability to detect races, 
depends on parameters $\epsilon$ and $\delta$. 
Changes to $\delta$ do not significantly affect the number 
of sampled events (which vary as $\ln(\frac{1}{\delta})$). 
Therefore, we keep $\delta$ fixed at $0.1$ and see how things change as we vary $\epsilon$. 
We ran \RND on one of the DaCaPo benchmark \textsf{xalan}, with values of $\epsilon\in\{0.5, 0.2, 0.1, 0.05, 0.02, 0.01, 0.005, 0.002, 0.001\}$.  
The choice of \textsf{xalan} was determined by a desire to pick a long trace with reasonable \raciness. 
In \figref{xalanRateTime}, we plot how running time of \RND changes with $\epsilon$ and how the probability of detecting a race changes with $\epsilon$ (right axis). As expected, both the running time and the probability of detecting a race increase as $\epsilon$ is decreased.

%!TEX root = ./main.tex

\section{Related Work}
\seclabel{related}

\begin{comment}
1. PACER is closest in spirit
2. Other sampling based - LiteRace which doesnt sample which chooses events based on a hypothesis, but does not guranatee how many samples will be generated or how probable is a race
3. Randomization in schedule exploratio (orthogonal) --- RaceFuzzer, PCT,  C11Tester, DataCollider
\end{comment}

Data races form the most common~\cite{lpsz08}
as well difficult to detect~\cite{heisenbugs} class of concurrency bugs.
Extensive research on developing techniques for automatically
detecting data races has led to the development of many influential
static and dynamic analysis techniques for race detection.
Static techniques such as~\cite{Naik:2006:ESR:1133255.1134018,racerd2018,flanagan2000type,Abadi:2006:TSL:1119479.1119480,voung2007relay}
employ type based analysis or interprocedural data and control flow
analysis to infer if two accesses on a common memory location
may not be protected by a common lock.
Besides the inherent unsoundness in this criteria (memory accesses may still not be racy
even when they are not protected by a common lock), computability
limits imply that static techniques either report many false
races, do not scale or require extensive manual 
annotation in the software to be analyzed.

Dynamic analysis techniques, on the other hand, do not detect races
in the entire source code, but instead
limit their attention to single executions,
and are typically \emph{sound} and completely automated.
Popular techniques include \eraser-style lockset analysis~\cite{savage1997eraser},
happens-before (\acrhb)~\cite{lamport1978time} based race detection~\cite{djit,Pozniansky:2003:EOD:966049.781529,elmas2007goldilocks}
or hybrid techniques~\cite{choi:2003:HDD:781498.781528}.
Amongst these, only \acrhb-based techniques are sound (upto the first race reported~\cite{shb2018}),
and are implemented in popular tools like \tsan~\cite{threadsanitizerLLVM,threadsanitizer},
Helgrind~\cite{helgrind} or Intel Inspector~\cite{intelinspector}
using vector clocks.
Even though such techniques are fully automated, their use is typically
limited to in-house testing to avoid the large overhead of tracking
metadata required for analysis.

\myparagraph{Reducing runtime overhead}{
Our work aims to to improve the performance of \acrhb-race detectors.
The most prominent work with similar goal is the \fasttrack~algorithm
that reduces the overhead of race checking, simplifying it using the \emph{epoch} optimization.
In our evaluation, we compare against this algorithm and show that sampling 
approaches like ours and \pacer's can indeed 
complement its performance, without compromising soundness.
The overhead due to vector clocks updates can also 
be systematically reduced for programs whose execution graphs
exhibit special structures~\cite{cheng1998detecting,Feng1997,surendran2016dynamic,Dimitrov2015,Agrawal2018,raman2012scalable}.
Other works with similar goals include the use of hardware support~\cite{radish2012},
optimizing metadata synchronization~\cite{Wood2017,Bond2013},
or static analysis for optimizing race check placement~\cite{bigfoot,redcard}.
}

\myparagraph{Sampling Based Dynamic Analysis}{
The \pacer~algorithm due to Bond, Coons and McKinley~\cite{bond2010pacer} 
is closest in spirit to our approach, although has significant differences.
First the {\pacer} algorithm has stronger guarantees. The algorithm has a sampling parameter $\rho$ and it guarantees that every race in the input trace will be detected with probability at least $\rho$. A property tester, as presented in this paper, however has weaker guarantees. It promises to detect races in traces that are far, in terms of hamming distance, from race-free executions with high probability. However, in order obtain the stronger guarantees, {\pacer} needs to sample significantly more events than the algorithm presented here. The expected number of samples drawn by {\pacer} is $\rho n$ (where input trace $\tr$ has length $n$), and it can be easily shown that there are executions on which with non-zero probability {\pacer} will sample \emph{all} events in the trace. In contrast, the number of events sampled by our algorithm is at most $\widetilde{O}(t+\lh)$, which is independent of the length of the input trace $\tr$. This means that our algorithm will typically have lower overheads than {\pacer}, at the cost of finding fewer races. This is borne out by our experiments. 
Other sampling based approaches that are closely related include 
LiteRace~\cite{marino2009literace} also resorts to sampling for reducing runtime overhead
guided by the `cold-region hypothesis' that data races are more likely 
exhibited in parts of the source code that are not exercised frequently during the execution.
DataCollider~\cite{datacollider2010}
samples memory locations to detect data races on, in order to reduce runtime overhead of dynamic race detection.
However, most of these algorithms provide no formal mathematical guarantees
for the likelihood of reporting races or any upper bound on the number of events sampled.
}

\myparagraph{Sampling Based Hybrid Analysis}{
The work in RaceMob~\cite{racemob} deploys a two-phase hybrid analysis.
The first static analysis phase identifies a more precise set of memory locations
that may potentially be racy, and in the second pass,
performs sampled dynamic analysis while focusing on the memory locations
identified from during the first static analysis phase.
This is in line with the work of Choi et al~\cite{Choi02}, and is also
more recently deployed for kernel race bugs~\cite{razzer2019}.
We remark that the fundamental algorithmic improvements proposed in our paper are 
orthogonal to such analyses and can be modularly plugged into such hybrid race detection techniques 
to reduce the overhead of the dynamic analysis phase.
Similar to RaceMob's analysis, \RND~can also benefit from
an additional static analysis phase which can identify a focused set
of memory locations.
Such a set can help reduce space usage due to redundant metadata and the performance
cost due to updates on such metadata.
Further, if a candidate set of memory locations is known
prior to the dynamic analysis, the core algorithm
of \RND~can also be adopted so that it only samples sub-traces
that begin at access events of memory locations belonging to this candidate set.
In our paper, we focus on experimental comparison with purely dynamic analysis techniques to
precisely distill the algorithmic benefits our approach offers.
}

\myparagraph{Randomized scheduling}{
	Another popular set of race detection techniques
	that employ randomization include those that
	drive the thread schedulers using 
	randomization~\cite{PCT2010,Sen:2008:RDR:1375581.1375584,c11tester,krace2020},
	with the goal of enhancing the likelihood of covering a racy location.
	Such techniques fall into the general class
	of controlled concurrency testing techniques~\cite{ControlledConcurrency2016}
	that employ heuristics such as bounding context switches~\cite{ContextBounding2007,Kahlon05,penelope2010},
	thread speed control~\cite{Chen2018} or reinforcement learning to determine
	a buggy thread schedule~\cite{QL2020}.
	Our approach is orthogonal to all such approaches
	and can potentially reduce the overhead of intensive exploration.
}

\myparagraph{Other Approaches}{
	Dynamic race detection has been a topic of interest, and
	recent advances such as \emph{predictive} analysis
	aim to enhance coverage of \acrhb-based race detectors
	by considering alternate reorderings that exhibit a race,
	without explicitly re-executing the program, and have been 
	shown to be effective in catching hidden races.
	These include explicit enumeration based techniques~\cite{sen2005detecting},
	symbolic techniques~\cite{Said2011,rv2014},
	graph based analyses~\cite{PavlogiannisPOPL20}
	and partial order based techniques~\cite{cp2012,wcp2017,SyncP2021,Roemer20,Roemer18}.
}

\myparagraph{Property Testing}{
Property testing~\cite{prop-test-book} is a widely studied sub-field in theoretical computer science, where the relaxation to solve a decision problem is exploited to design sublinear algorithms for a variety of problems. Our algorithm and its proof, draws heavily from the ideas presented in~\cite{akns99}, where it is shown that regular languages can be property tested with a constant number of samples. The proof we present here specializes the ideas in~\cite{akns99} for race detection. In particular, we have paid particular attention to the constants involved in our analysis, because even small factor changes can impact the performance of our algorithm on benchmarks we experiment with.
}
%!TEX root = ./main.tex

\section{Conclusions and Future Work}
\seclabel{conclusions}

We presented a randomized property tester (\RND) for detecting {\acrhb}-races. The algorithm is sound, i.e., never reports a race when there is none, and guarantees to detect a race if the input trace is far from any race-free trace. Moreover, the algorithm samples and processes only constantly many events even in the worst case. This is in sharp contrast to previously proposed sampling based approaches for race detection. \RND was implemented and compared against two well known {\acrhb} race detectors: {\fasttrack} and {\pacer}. Experimental evaluation showed that {\RND} did indeed have the lowest running time among all the algorithms, and detects races often.%, especially in traces where the number of race warnings was at least $10^{-5}$ fraction of the trace length. 

There are several interesting avenues for future work.
Our work demonstrates the effectiveness of property testing based algorithms for detecting data races. 
We think that this paradigm will also be promising for improving the effectiveness of detecting other errors such as deadlocks~\cite{havelund2000} and atomicity violations~\cite{MathurAtomicity20,Flanagantypesatomicity2008,Farzan2009cav} for which dynamic analysis is often the preferred method of detection.
Second, it would be interesting to design property testing algorithms for more predictive notions of data races including polynomial time algorithms~\cite{cp2012,wcp2017,SyncP2021} as well as more exhaustive ones~\cite{rv2014,Said2011,Mathur20}.
Finally, we envision that developing a sampling algorithm (like \RND or \pacer) that performs sampling in a distributed manner, and introduces minimal additional synchronizations on events will help further lower the overhead introduced due to sampling, but is expected to require new algorithmic insights.

\begin{acks}
We thank anonymous reviewers for their constructive feedback on
an earlier drafts of this manuscript. Umang Mathur was partially
supported by the Simons Institute for the Theory of Computing, and by a Singapore Ministry of Education (MoE) Academic Research Fund (AcRF) Tier 1 grant.
Minjian Zhang and Mahesh Viswanathan are partially supported by NSF SHF 1901069 and NSF CCF 2007428. 
\end{acks}
\appendix
\section{Appendix}
Trace characteristics, sampling statistics and races reported.
$\delta=0.1$ for \RND. \pacer uses sampling ratio=$3\%$.
%!TEX root = ../main.tex

\begin{table*}[tbh]
\label{table:table1}

% \vspace{-0.1in}
\centering
\scalebox{0.63}{
\begin{adjustbox}{center}
\renewcommand{\arraystretch}{1.0}
% \begin{tabular*}{2.4\columnwidth}{@{\extracolsep{\fill}}!{\VRule[1pt]}c!{\VRule[1pt]}c|c|c|c|c!{\VRule[1pt]}c|c|c|c!{\VRule[1pt]}}
\begin{tabular*}{1.60\columnwidth}{!{\VRule[1pt]}c!{\VRule[1pt]}c|c!{\VRule[1pt]}c|c!{\VRule[1pt]}c|c|c!{\VRule[1pt]}c|c|c!{\VRule[1pt]}}
% \hline
\specialrule{1pt}{0pt}{0pt}
1 & 2 & 3 & 4 & 5 & 6 & 7 & 8 & 9& 10 & 11\\
\specialrule{1pt}{0pt}{0pt}
\rowcolor[HTML]{EFEFEF} 
\cellcolor[HTML]{EFEFEF} Benchmark
& \cellcolor[HTML]{EFEFEF} {trace length}
& \cellcolor[HTML]{EFEFEF} $M$
& \cellcolor[HTML]{EFEFEF} {\pacer sampled}
& \cellcolor[HTML]{EFEFEF} {\RND sampled}
& \multicolumn{3}{c!{\VRule[1pt]}}{\cellcolor[HTML]{EFEFEF}{\# Warnings}}

& \multicolumn{3}{c!{\VRule[1pt]}}{\cellcolor[HTML]{EFEFEF}{\# Warning variables }} \\
\cmidrule[0.5pt]{6-11}

\rowcolor[HTML]{EFEFEF} 
 \cellcolor[HTML]{EFEFEF} 
& \cellcolor[HTML]{EFEFEF} 
& \cellcolor[HTML]{EFEFEF} 
& \cellcolor[HTML]{EFEFEF} 
& \cellcolor[HTML]{EFEFEF} 
& \cellcolor[HTML]{EFEFEF} \textsc{FT}
& \cellcolor[HTML]{EFEFEF} \pacer
& \cellcolor[HTML]{EFEFEF} \RND
& \cellcolor[HTML]{EFEFEF} \textsc{FT}
& \cellcolor[HTML]{EFEFEF} \pacer
& \cellcolor[HTML]{EFEFEF} \RND\\

\specialrule{1pt}{0pt}{0pt}
\textsf{zero-reversal-logs-final-logs} & 1.0M & 56.0 & 358.9K & 499.0K & 29.6K & 3.5K & 18.4K & 988.00 & 163.53 & 634.67\\
\textsf{zero-reversal-logs-final-logs1} & 1.4M & 32.0 & 136.5K & 814.5K & 72.00 & 4.58 & 39.63 & 24.00 & 1.22 & 12.83\\
\textsf{zero-reversal-logs-final-logs2} & 3.1M & 60.0 & 747.3K & 2.6M & 14.00 & 0.29 & 8.80 & 13.00 & 0.27 & 8.20\\
\textsf{zero-reversal-logs-final-logs3} & 11.7M & 72.0 & 1.5M & 11.1M & 252.00 & 7.13 & 123.98 & 28.00 & 1.02 & 17.05\\
\textsf{HPCBench-NPBS-DC.S-12M-events} & 11.7M & 228.0 & 2.9M & 11.3M & 5.8K & 119.09 & 5.4K & 574.00 & 17.62 & 534.35\\
\textsf{HPCBench-NPBS-DC.S-12M-events1} & 11.7M & 68.0 & 3.5M & 11.1M & 82.00 & 2.84 & 18.17 & 82.00 & 2.84 & 18.17\\
\textsf{sunflow} & 16.8M & 68.0 & 1.3M & 15.1M & 252.00 & 2.35 & 109.96 & 28.00 & 0.24 & 15.90\\
\textsf{misc-hsqldb-hsqldb} & 18.8M & 184.0 & 3.0M & 18.1M & 284.00 & 5.40 & 258.87 & 5.00 & 0.75 & 4.55\\
\textsf{DRACC-DRACC-OMP-017-Counter-wr} & 27.0M & 68.0 & 11.6M & 21.8M & 15.00 & 1.09 & 13.12 & 15.00 & 1.09 & 13.12\\
\textsf{OmpSCR-v2.0-c-testPath-30M-eve} & 30.2M & 68.0 & 18.0M & 23.4M & 181.00 & 12.89 & 26.37 & 16.00 & 0.89 & 11.73\\
\textsf{OMPRacer-Lulesh-35M-events-16} & 35.3M & 70.0 & 8.0M & 26.0M & 8.8M & 253.2K & 1.9M & 65.4K & 36.5K & 51.4K\\
\textsf{OmpSCR-v2.0-c-testPath-37M-eve} & 37.5M & 228.0 & 19.8M & 35.8M & 289.00 & 8.80 & 207.83 & 57.00 & 2.05 & 52.48\\
\textsf{misc-tradesoap-tradesoap} & 39.1M & 904.0 & 4.8M & 38.8M & 7.4K & 247.87 & 7.2K & 396.00 & 9.60 & 387.88\\
\textsf{misc-tradebeans-tradebeans} & 39.1M & 908.0 & 4.6M & 39.0M & 7.2K & 22.09 & 7.1K & 401.00 & 7.93 & 399.45\\
\textsf{series} & 40.0M & 18.0 & 22.3K & 10.3M & 0.00 & 0.00 & 0.00 & 0.00 & 0.00 & 0.00\\
\textsf{DataRaceBench-DRB155-missingor} & 50.0M & 68.0 & 20.6M & 29.7M & 17.00 & 0.04 & 7.60 & 17.00 & 0.04 & 7.60\\
\textsf{DataRaceBench-DRB155-missingor1} & 50.0M & 228.0 & 27.3M & 46.5M & 58.00 & 55.11 & 38.85 & 58.00 & 55.11 & 38.85\\
\textsf{OMPRacer-Lulesh-52M-events-56} & 52.1M & 226.0 & 11.0M & 49.4M & 12.1M & 379.2K & 6.8M & 86.4K & 48.6K & 86.3K\\
\textsf{zero-reversal-logs-final-logs4} & 58.5M & 42.0 & 8.8M & 22.8M & 93.00 & 1.71 & 0.98 & 4.00 & 0.20 & 0.32\\
\textsf{tomcat} & 63.2M & 212.0 & 4.8M & 56.6M & 1.2M & 35.4K & 498.9K & 17.8K & 9.2K & 16.4K\\
\textsf{OmpSCR-v2.0-cpp-sortOpenMP-cpp} & 88.9M & 66.0 & 35.9M & 35.0M & 32.1M & 986.0K & 240.7K & 8.0K & 5.4K & 7.9K\\
\textsf{DRB177-fib-taskdep-yes-90M-eve} & 90.2M & 70.0 & 44.3M & 37.2M & 3.8K & 112.52 & 1.2K & 1.5K & 102.50 & 692.37\\
\textsf{DataRaceBench-DRB176-fib-taskd} & 90.2M & 70.0 & 44.3M & 37.0M & 9.9K & 334.80 & 3.3K & 2.3K & 256.53 & 1.4K\\
\textsf{DRB177-fib-taskdep-yes-90M-eve1} & 90.3M & 230.0 & 44.2M & 74.8M & 7.8K & 452.84 & 4.8K & 3.9K & 425.11 & 2.7K\\
\textsf{DataRaceBench-DRB176-fib-taskd1} & 90.3M & 230.0 & 44.3M & 74.6M & 26.6K & 934.31 & 19.5K & 7.2K & 783.18 & 6.3K\\
\textsf{fop} & 96.0M & 8.0 & 9.9M & 5.4M & 0.00 & 0.00 & 0.00 & 0.00 & 0.00 & 0.00\\
\textsf{OmpSCR-v2.0-c-LoopsWithDepende} & 96.4M & 66.0 & 6.7M & 36.1M & 1.7K & 113.79 & 551.70 & 31.00 & 14.43 & 21.02\\
\textsf{OmpSCR-v2.0-c-LoopsWithDepende1} & 96.4M & 66.0 & 6.8M & 35.9M & 2.9K & 97.56 & 569.07 & 31.00 & 14.55 & 20.35\\
\textsf{OMPRacer-XSBench-97M-events-16} & 96.6M & 66.0 & 23.0M & 36.4M & 27.00 & 5.09 & 7.45 & 27.00 & 5.09 & 7.45\\
\textsf{OMPRacer-XSBench-97M-events-56} & 96.6M & 226.0 & 21.4M & 77.5M & 89.00 & 6.69 & 61.02 & 89.00 & 6.69 & 61.02\\
\textsf{DRACC-DRACC-OMP-014-Counter-wr} & 104.9M & 68.0 & 54.2M & 37.5M & 15.00 & 0.27 & 6.87 & 15.00 & 0.27 & 6.87\\
\textsf{DRACC-DRACC-OMP-020-Counter-wr} & 104.9M & 68.0 & 52.9M & 37.9M & 15.00 & 0.27 & 5.28 & 15.00 & 0.27 & 5.28\\
\textsf{DRACC-DRACC-OMP-019-Counter-wr} & 104.9M & 68.0 & 3.9M & 37.3M & 94.6M & 2.8M & 33.2M & 527.00 & 512.27 & 515.25\\
\textsf{DRACC-DRACC-OMP-018-Counter-wr} & 104.9M & 68.0 & 3.9M & 37.8M & 93.9M & 2.8M & 33.3M & 527.00 & 512.55 & 518.25\\
\textsf{DRACC-DRACC-OMP-013-Counter-wr} & 104.9M & 68.0 & 3.9M & 36.6M & 96.0M & 2.9M & 33.0M & 527.00 & 512.00 & 517.40\\
\textsf{DRACC-DRACC-OMP-012-Counter-wr} & 104.9M & 68.0 & 3.9M & 37.8M & 95.6M & 2.8M & 33.9M & 527.00 & 512.27 & 517.25\\
\textsf{OmpSCR-v2.0-cpp-sortOpenMP-cpp1} & 106.7M & 228.0 & 74.1M & 82.3M & 193.00 & 7.65 & 146.87 & 56.00 & 3.91 & 44.08\\
\textsf{OmpSCR-v2.0-cpp-sortOpenMP-cpp2} & 106.8M & 228.0 & 74.2M & 82.3M & 177.00 & 5.38 & 140.02 & 56.00 & 1.95 & 48.65\\
\textsf{OmpSCR-v2.0-cpp-sortOpenMP-cpp3} & 107.0M & 68.0 & 41.7M & 38.0M & 47.00 & 1.85 & 14.50 & 16.00 & 0.89 & 5.13\\
\textsf{OmpSCR-v2.0-cpp-sortOpenMP-cpp4} & 107.1M & 228.0 & 74.5M & 82.4M & 199.00 & 5.98 & 150.23 & 56.00 & 0.96 & 42.52\\
\textsf{OmpSCR-v2.0-cpp-sortOpenMP-cpp5} & 107.5M & 68.0 & 42.3M & 38.0M & 48.00 & 1.38 & 15.97 & 16.00 & 0.60 & 6.28\\
\textsf{DataRaceBench-DRB154-missinglo} & 112.0M & 68.0 & 56.4M & 38.3M & 15.00 & 0.27 & 5.78 & 15.00 & 0.27 & 5.78\\
\textsf{DataRaceBench-DRB152-missinglo} & 112.0M & 68.0 & 34.4M & 37.8M & 16.00 & 0.09 & 5.00 & 16.00 & 0.09 & 5.00\\
\textsf{DataRaceBench-DRB150-missinglo} & 112.0M & 68.0 & 34.3M & 38.3M & 16.00 & 0.05 & 6.00 & 16.00 & 0.05 & 6.00\\
\textsf{DataRaceBench-DRB122-taskundef} & 112.0M & 66.0 & 61.6M & 37.4M & 15.00 & 0.82 & 4.00 & 15.00 & 0.82 & 4.00\\
\textsf{DataRaceBench-DRB123-taskundef} & 112.0M & 66.0 & 31.6M & 37.4M & 14.0M & 409.2K & 4.7M & 227.00 & 212.00 & 217.75\\
\textsf{DataRaceBench-DRB122-taskundef1} & 112.0M & 226.0 & 61.2M & 84.2M & 55.00 & 55.00 & 24.02 & 55.00 & 55.00 & 24.02\\
\textsf{DataRaceBench-DRB123-taskundef1} & 112.0M & 226.0 & 36.6M & 84.0M & 13.9M & 406.6K & 10.4M & 712.00 & 712.00 & 681.83\\
\textsf{OMPRacer-Kripke-119M-events-56} & 119.2M & 228.0 & 44.6M & 87.3M & 10.4M & 999.4K & 640.1K & 116.3K & 31.4K & 49.6K\\
\textsf{DataRaceBench-DRB110-ordered-o} & 120.0M & 68.0 & 55.4M & 38.7M & 15.00 & 0.27 & 4.25 & 15.00 & 0.27 & 4.25\\
\specialrule{1pt}{0pt}{0pt}
\end{tabular*}
\end{adjustbox}
}
% \bigskip
\label{tab:time1}
\end{table*}

%!TEX root = ../main.tex

\begin{table*}[t!]
\label{table:table3}

% \vspace{-0.1in}
\centering
\scalebox{0.63}{
\begin{adjustbox}{center}
\renewcommand{\arraystretch}{1.0}
% \begin{tabular*}{2.4\columnwidth}{@{\extracolsep{\fill}}!{\VRule[1pt]}c!{\VRule[1pt]}c|c|c|c|c!{\VRule[1pt]}c|c|c|c!{\VRule[1pt]}}
\begin{tabular*}{1.61\columnwidth}{!{\VRule[1pt]}c!{\VRule[1pt]}c|c!{\VRule[1pt]}c|c!{\VRule[1pt]}c|c|c!{\VRule[1pt]}c|c|c!{\VRule[1pt]}}
% \hline
\specialrule{1pt}{0pt}{0pt}
1 & 2 & 3 & 4 & 5 & 6 & 7 & 8 & 9& 10 & 11\\
\specialrule{1pt}{0pt}{0pt}
\rowcolor[HTML]{EFEFEF} 
\cellcolor[HTML]{EFEFEF} Benchmark
& \cellcolor[HTML]{EFEFEF} {trace length}
& \cellcolor[HTML]{EFEFEF} $M$
& \cellcolor[HTML]{EFEFEF} {\pacer sampled}
& \cellcolor[HTML]{EFEFEF}{\RND sampled}
& \multicolumn{3}{c!{\VRule[1pt]}}{\cellcolor[HTML]{EFEFEF}{\# Warnings}}

& \multicolumn{3}{c!{\VRule[1pt]}}{\cellcolor[HTML]{EFEFEF}{\# Warning variables }} \\
\cmidrule[0.5pt]{6-11}

\rowcolor[HTML]{EFEFEF} 
 \cellcolor[HTML]{EFEFEF} 
& \cellcolor[HTML]{EFEFEF} 
& \cellcolor[HTML]{EFEFEF} 
& \cellcolor[HTML]{EFEFEF} 
& \cellcolor[HTML]{EFEFEF} 
& \cellcolor[HTML]{EFEFEF} \textsc{FT}
& \cellcolor[HTML]{EFEFEF} \pacer
& \cellcolor[HTML]{EFEFEF} \RND
& \cellcolor[HTML]{EFEFEF} \textsc{FT}
& \cellcolor[HTML]{EFEFEF} \pacer
& \cellcolor[HTML]{EFEFEF} \RND\\

\specialrule{1pt}{0pt}{0pt}

\textsf{DataRaceBench-DRB110-ordered-o1} & 120.0M & 228.0 & 68.9M & 87.2M & 55.00 & 55.00 & 25.90 & 55.00 & 55.00 & 25.90\\
\textsf{zero-reversal-logs-final-logs5} & 122.5M & 36.0 & 27.5M & 22.1M & 77.00 & 3.69 & 11.98 & 34.00 & 1.73 & 5.47\\
\textsf{crypt} & 126.0M & 30.0 & 105.8M & 18.9M & 0.00 & 0.00 & 0.00 & 0.00 & 0.00 & 0.00\\
\textsf{OMPRacer-QuickSilver-133M-even} & 132.6M & 228.0 & 47.1M & 90.6M & 1.1M & 33.0K & 344.2K & 121.3K & 9.1K & 11.3K\\
\textsf{DataRaceBench-DRB105-taskwait} & 134.0M & 66.0 & 45.2M & 38.6M & 6.6K & 209.44 & 1.6K & 1.3K & 172.93 & 719.35\\
\textsf{DataRaceBench-DRB106-taskwaitm} & 134.0M & 66.0 & 45.5M & 38.1M & 1.3K & 36.47 & 176.00 & 877.00 & 35.82 & 160.47\\
\textsf{DataRaceBench-DRB105-taskwait1} & 134.0M & 226.0 & 41.4M & 91.5M & 40.9K & 1.5K & 23.8K & 4.6K & 1.1K & 4.2K\\
\textsf{DataRaceBench-DRB106-taskwaitm1} & 134.0M & 226.0 & 43.5M & 90.7M & 3.3K & 278.91 & 1.3K & 2.5K & 275.85 & 1.1K\\
\textsf{zero-reversal-logs-final-logs6} & 134.1M & 22.0 & 14.1M & 14.3M & 33.1M & 595.9K & 87.9K & 185.8K & 22.3K & 14.6K\\
\textsf{OmpSCR-v2.0-c-QuickSort-134M-e} & 134.3M & 66.0 & 19.3M & 38.6M & 419.9K & 19.8K & 3.50 & 34.9K & 6.1K & 3.50\\
\textsf{OmpSCR-v2.0-c-QuickSort-134M-e1} & 134.3M & 226.0 & 19.8M & 92.3M & 419.9K & 20.7K & 33.92 & 35.0K & 6.6K & 33.92\\
\textsf{lufact} & 135.0M & 18.0 & 13.6M & 11.9M & 33.1M & 515.8K & 55.7K & 185.8K & 21.4K & 11.2K\\
\textsf{DRACC-DRACC-OMP-017-Counter-w1} & 135.0M & 68.0 & 58.4M & 39.7M & 15.00 & 0.00 & 5.50 & 15.00 & 0.00 & 5.50\\
\textsf{DRACC-DRACC-OMP-015-Counter-wr} & 135.0M & 68.0 & 35.9M & 39.6M & 5.1M & 154.2K & 1.5M & 16.00 & 1.27 & 6.25\\
\textsf{DataRaceBench-DRB148-critical1} & 135.0M & 68.0 & 36.1M & 39.3M & 5.4M & 161.3K & 1.6M & 16.00 & 2.36 & 5.75\\
\textsf{DRACC-DRACC-OMP-010-Counter-wr} & 135.0M & 68.0 & 35.9M & 39.5M & 5.2M & 155.1K & 1.5M & 16.00 & 1.82 & 5.58\\
\textsf{DRACC-DRACC-OMP-009-Counter-wr} & 135.0M & 68.0 & 36.1M & 38.7M & 5.3M & 160.9K & 1.5M & 16.00 & 2.09 & 4.50\\
\textsf{DRACC-DRACC-OMP-016-Counter-wr} & 135.0M & 68.0 & 35.9M & 39.6M & 5.4M & 161.7K & 1.6M & 16.00 & 1.55 & 5.43\\
\textsf{OmpSCR-v2.0-c-LUreduction-136M} & 136.4M & 66.0 & 10.7M & 38.7M & 42.2M & 1.2M & 175.1K & 89.4K & 71.9K & 1.8K\\
\textsf{OmpSCR-v2.0-c-LUreduction-137M} & 136.9M & 226.0 & 86.3M & 93.1M & 37.2M & 1.2M & 2.7M & 89.1K & 64.3K & 22.9K\\
\textsf{DataRaceBench-DRB144-critical} & 140.0M & 68.0 & 43.0M & 39.9M & 16.00 & 0.87 & 4.52 & 16.00 & 0.87 & 4.52\\
\textsf{OmpSCR-v2.0-cpp-sortOpenMP-cpp} & 141.7M & 68.0 & 96.6M & 39.6M & 16.00 & 0.55 & 2.50 & 16.00 & 0.55 & 2.50\\
\textsf{HPCBench-OmpSCR-v2.0-c-fft6-14} & 146.0M & 66.0 & 56.0M & 39.0M & 30.00 & 0.69 & 5.25 & 30.00 & 0.69 & 5.25\\
\textsf{OmpSCR-v2.0-c-fft6-146M-events} & 146.0M & 226.0 & 56.1M & 94.5M & 74.00 & 7.75 & 40.17 & 74.00 & 7.75 & 40.17\\
\textsf{OmpSCR-v2.0-c-Pi-150M-events-1} & 150.0M & 66.0 & 103.0M & 39.3M & 27.00 & 7.89 & 4.85 & 27.00 & 7.89 & 4.85\\
\textsf{OmpSCR-v2.0-c-Pi-150M-events-5} & 150.0M & 226.0 & 149.6M & 95.9M & 91.00 & 84.45 & 25.13 & 91.00 & 84.45 & 25.13\\
\textsf{HPCBench-NPBS-IS.W-153M-events} & 152.9M & 66.0 & 37.9M & 39.4M & 51.4M & 2.6M & 12.4K & 2.0M & 553.2K & 11.55\\
\textsf{OmpSCR-v2.0-cpp-sortOpenMP-cpp1} & 164.0M & 68.0 & 65.6M & 40.8M & 667.6K & 19.8K & 7.2K & 99.0K & 6.9K & 1.3K\\
\textsf{OMPRacer-amg2013-170M-events-1} & 169.9M & 358.0 & 21.8M & 130.3M & 24.0M & 703.6K & 4.3M & 595.2K & 43.7K & 68.8K\\
\textsf{SimpleMOC-170M-events-16-threa} & 170.2M & 68.0 & 43.1M & 41.0M & 25.6K & 723.89 & 28.13 & 2.1K & 55.38 & 26.27\\
\textsf{HPCBench-graph500-171M-events} & 171.3M & 66.0 & 49.2M & 40.0M & 86.0M & 2.6M & 340.3K & 115.2K & 18.3K & 11.5K\\
\textsf{HPCBench-graph500-172M-events} & 172.5M & 226.0 & 49.8M & 102.8M & 83.9M & 2.8M & 4.0M & 118.5K & 18.0K & 19.4K\\
\textsf{CoMD-CoMD-omp-task-deps-174M-e} & 174.1M & 66.0 & 24.1M & 39.1M & 124.3M & 3.4M & 94.0K & 16.5K & 15.8K & 1.9K\\
\textsf{CoMD-CoMD-openmp-174M-events-1} & 174.1M & 66.0 & 24.7M & 40.1M & 124.3M & 3.7M & 95.9K & 16.5K & 15.9K & 1.8K\\
\textsf{CoMD-CoMD-openmp-175M-events-5} & 175.1M & 226.0 & 30.4M & 103.4M & 128.5M & 3.5M & 3.5M & 16.9K & 15.5K & 6.2K\\
\textsf{CoMD-CoMD-omp-task-175M-events} & 175.1M & 226.0 & 29.9M & 103.4M & 128.5M & 3.1M & 3.4M & 16.9K & 15.7K & 6.2K\\
\textsf{CoMD-CoMD-omp-task-174M-events} & 175.1M & 226.0 & 30.5M & 103.3M & 128.5M & 3.1M & 3.3M & 16.9K & 15.6K & 6.3K\\
\textsf{CoMD-CoMD-omp-task-deps-175M-e} & 175.1M & 226.0 & 30.5M & 103.3M & 128.5M & 3.5M & 3.3M & 16.9K & 15.6K & 6.3K\\
\textsf{DataRaceBench-DRB062-matrixvec} & 183.9M & 66.0 & 90.3M & 38.7M & 33.9M & 1.1M & 3.5M & 31.00 & 16.82 & 19.53\\
\textsf{OMPRacer-amg2013-190M-events-5} & 189.6M & 518.0 & 25.1M & 159.6M & 28.0M & 928.1K & 8.9M & 718.4K & 65.3K & 155.0K\\
\textsf{OmpSCR-v2.0-c-LoopsWithDepende} & 192.6M & 66.0 & 25.0M & 40.5M & 2.8K & 88.75 & 52.42 & 32.00 & 15.82 & 17.45\\
\textsf{DataRaceBench-DRB062-matrixvec1} & 193.2M & 226.0 & 99.5M & 106.9M & 36.0M & 1.4M & 17.5M & 116.00 & 111.07 & 72.32\\
\textsf{OmpSCR-v2.0-c-MolecularDynamic} & 204.3M & 66.0 & 77.8M & 40.8M & 83.2M & 2.5M & 51.7K & 1.6K & 1.6K & 814.13\\
\textsf{OmpSCR-v2.0-c-MolecularDynamic1} & 204.4M & 226.0 & 86.9M & 109.1M & 87.0M & 2.7M & 569.4K & 1.7K & 1.7K & 1.7K\\
\textsf{OMPRacer-miniFE-207M-events-58} & 206.7M & 518.0 & 49.8M & 165.7M & 27.8M & 1.3M & 6.3M & 992.2K & 51.0K & 22.9K\\
\textsf{OMPRacer-miniFE-208M-events-18} & 207.7M & 358.0 & 48.3M & 144.2M & 19.4M & 816.5K & 1.5M & 968.5K & 41.2K & 18.0K\\
\textsf{misc-graphchi-graphchi} & 215.8M & 86.0 & 51.7M & 52.0M & 1.8M & 42.2K & 1.2K & 318.5K & 10.1K & 83.08\\
\textsf{zero-reversal-logs-final-logs} & 217.5M & 38.0 & 95.8M & 24.7M & 750.6K & 9.4K & 33.07 & 177.00 & 5.18 & 4.67\\
\textsf{misc-biojava-biojava} & 221.0M & 22.0 & 59.8M & 14.7M & 2.00 & 0.00 & 0.00 & 2.00 & 0.00 & 0.00\\
\textsf{HPCBench-HPCCG-228M-events-16} & 228.1M & 66.0 & 24.0M & 41.3M & 30.8M & 1.1M & 310.2K & 15.0K & 14.8K & 1.0K\\
\textsf{HPCBench-HPCCG-230M-events-56} & 229.5M & 226.0 & 35.5M & 113.3M & 41.8M & 2.0M & 3.6M & 15.8K & 15.7K & 13.8K\\
\textsf{DRB177-fib-taskdep-yes-382M-ev} & 236.2M & 70.0 & 115.8M & 43.7M & 2.1K & 60.58 & 274.45 & 1.2K & 56.64 & 226.92\\
\textsf{CoMD-CoMD-omp-taskloop-251M-ev} & 251.5M & 66.0 & 52.7M & 41.5M & 981.7K & 35.6K & 224.43 & 30.8K & 1.1K & 4.80\\
\textsf{CoMD-CoMD-omp-taskloop-251M-ev1} & 251.5M & 226.0 & 53.5M & 115.9M & 798.4K & 55.1K & 2.5K & 64.9K & 2.1K & 38.37\\
\textsf{misc-cassandra-cassandra} & 259.1M & 704.0 & 150.3M & 218.8M & 42.1K & 3.7K & 30.7K & 9.4K & 255.93 & 6.9K\\
\textsf{OmpSCR-v2.0-cpp-sortOpenMP-cpp2} & 295.5M & 226.0 & 167.5M & 120.9M & 144.5M & 4.5M & 1.3M & 6.1K & 5.9K & 6.0K\\
\textsf{HPCBench-NPBS-IS.W-300M-events} & 300.1M & 226.0 & 60.5M & 121.6M & 117.2M & 12.1M & 81.3K & 2.1M & 597.6K & 324.48\\
\textsf{zero-reversal-logs-final-logs1} & 307.3M & 42.0 & 282.9M & 27.7M & 10.4M & 401.6K & 614.85 & 1.00 & 0.62 & 0.88\\
\textsf{tsp} & 312.0M & 38.0 & 270.1M & 25.2M & 10.4M & 430.7K & 480.34 & 1.00 & 0.52 & 0.90\\
\textsf{OmpSCR-v2.0-c-LoopsWithDepende1} & 337.2M & 226.0 & 180.8M & 123.5M & 7.3K & 239.65 & 2.0K & 118.00 & 54.33 & 75.37\\
\textsf{OmpSCR-v2.0-c-LoopsWithDepende2} & 337.3M & 226.0 & 181.0M & 124.8M & 8.5K & 274.67 & 2.0K & 116.00 & 55.82 & 77.00\\
\textsf{pmd} & 367.0M & 54.0 & 104.5M & 35.4M & 144.8K & 3.5K & 0.96 & 1.6K & 50.43 & 0.96\\
\textsf{DRB177-fib-taskdep-yes-211M-ev} & 382.1M & 70.0 & 187.4M & 45.3M & 5.7K & 188.85 & 488.33 & 1.8K & 164.80 & 353.00\\
\textsf{OmpSCR-v2.0-c-LoopsWithDepende3} & 394.0M & 226.0 & 207.5M & 128.8M & 9.6K & 250.02 & 483.77 & 118.00 & 55.55 & 75.58\\
\textsf{OmpSCR-v2.0-c-LoopsWithDepende4} & 394.0M & 226.0 & 205.3M & 128.7M & 9.4K & 127.56 & 370.32 & 113.00 & 46.31 & 65.63\\
\textsf{zero-reversal-logs-final-logs2} & 397.8M & 24.0 & 297.6M & 16.0M & 1.00 & 0.02 & 0.00 & 1.00 & 0.02 & 0.00\\
\textsf{montecarlo} & 494.0M & 18.0 & 102.0M & 12.3M & 57.2K & 1.6K & 129.52 & 5.00 & 1.15 & 1.00\\
\textsf{OmpSCR-v2.0-c-fft-496M-events} & 496.0M & 70.0 & 115.8M & 45.9M & 2.2M & 62.2K & 28.10 & 2.0M & 57.5K & 1.73\\
\textsf{OmpSCR-v2.0-c-fft-496M-events1} & 496.0M & 230.0 & 102.6M & 135.9M & 2.1M & 63.8K & 302.18 & 2.1M & 63.8K & 18.53\\
\textsf{OMPRacer-Lulesh-543M-events-16} & 543.4M & 70.0 & 91.2M & 46.2M & 119.8M & 3.6M & 209.5K & 326.6K & 203.3K & 840.27\\
\specialrule{1pt}{0pt}{0pt}
\end{tabular*}
\end{adjustbox}
}
% \bigskip
\label{tab:time3}
\end{table*}

%!TEX root = ../main.tex

\begin{table*}[t!]
\label{table:table4}

% \vspace{-0.1in}
\centering
\scalebox{0.63}{
\begin{adjustbox}{center}
\renewcommand{\arraystretch}{1.0}
% \begin{tabular*}{2.4\columnwidth}{@{\extracolsep{\fill}}!{\VRule[1pt]}c!{\VRule[1pt]}c|c|c|c|c!{\VRule[1pt]}c|c|c|c!{\VRule[1pt]}}
\begin{tabular*}{1.57\columnwidth}{!{\VRule[1pt]}c!{\VRule[1pt]}c|c!{\VRule[1pt]}c|c!{\VRule[1pt]}c|c|c!{\VRule[1pt]}c|c|c!{\VRule[1pt]}}
% \hline
\specialrule{1pt}{0pt}{0pt}
1 & 2 & 3 & 4 & 5 & 6 & 7 & 8 & 9& 10 & 11\\
\specialrule{1pt}{0pt}{0pt}
\rowcolor[HTML]{EFEFEF} 
\cellcolor[HTML]{EFEFEF} Benchmark
& \cellcolor[HTML]{EFEFEF} {trace length}
& \cellcolor[HTML]{EFEFEF} $M$
& \cellcolor[HTML]{EFEFEF} {\pacer sampled}
& \cellcolor[HTML]{EFEFEF}{\RND sampled}
& \multicolumn{3}{c!{\VRule[1pt]}}{\cellcolor[HTML]{EFEFEF}{\# Warnings}}

& \multicolumn{3}{c!{\VRule[1pt]}}{\cellcolor[HTML]{EFEFEF}{\# Warning variables }} \\
\cmidrule[0.5pt]{6-11}

\rowcolor[HTML]{EFEFEF} 
 \cellcolor[HTML]{EFEFEF} 
& \cellcolor[HTML]{EFEFEF} 
& \cellcolor[HTML]{EFEFEF} 
& \cellcolor[HTML]{EFEFEF} 
& \cellcolor[HTML]{EFEFEF} 
& \cellcolor[HTML]{EFEFEF} \textsc{FT}
& \cellcolor[HTML]{EFEFEF} \pacer
& \cellcolor[HTML]{EFEFEF} \RND
& \cellcolor[HTML]{EFEFEF} \textsc{FT}
& \cellcolor[HTML]{EFEFEF} \pacer
& \cellcolor[HTML]{EFEFEF} \RND\\

\specialrule{1pt}{0pt}{0pt}

\textsf{misc-zxing-zxing} & 546.4M & 64.0 & 97.8M & 42.4M & 10.1M & 264.2K & 9.40 & 27.2K & 754.49 & 1.58\\
\textsf{OMPRacer-Lulesh-569M-events-56} & 569.5M & 230.0 & 114.1M & 138.3M & 145.3M & 4.1M & 2.5M & 470.3K & 309.7K & 21.8K\\
\textsf{luindex} & 570.0M & 24.0 & 298.8M & 16.3M & 1.00 & 0.07 & 0.00 & 1.00 & 0.07 & 0.00\\
\textsf{zero-reversal-logs-final-logs3} & 606.9M & 22.0 & 51.7M & 15.0M & 0.00 & 0.00 & 0.00 & 0.00 & 0.00 & 0.00\\
\textsf{sor} & 608.0M & 18.0 & 49.0M & 12.3M & 0.00 & 0.00 & 0.00 & 0.00 & 0.00 & 0.00\\
\textsf{OmpSCR-v2.0-c-LoopsWithDepende2} & 112.6M & 66.0 & 14.2M & 37.4M & 2.9K & 57.60 & 106.07 & 30.00 & 14.16 & 19.00\\
\textsf{OmpSCR-v2.0-c-LoopsWithDepende3} & 112.6M & 66.0 & 14.1M & 37.0M & 2.9K & 70.86 & 89.90 & 30.00 & 13.98 & 18.93\\
\textsf{OmpSCR-v2.0-cpp-sortOpenMP-cpp6} & 114.2M & 228.0 & 75.4M & 85.3M & 1.1M & 72.6K & 133.0K & 99.7K & 9.3K & 15.7K\\
\textsf{OmpSCR-v2.0-cpp-sortOpenMP-cpp7} & 115.4M & 228.0 & 83.9M & 84.9M & 56.00 & 2.00 & 40.58 & 56.00 & 2.00 & 40.58\\
\textsf{OmpSCR-v2.0-c-Mandelbrot-116M} & 115.7M & 66.0 & 55.8M & 37.4M & 26.00 & 0.95 & 5.03 & 26.00 & 0.95 & 5.03\\
\textsf{OmpSCR-v2.0-c-Mandelbrot-116M1} & 115.7M & 226.0 & 115.7M & 85.6M & 87.00 & 55.98 & 28.78 & 87.00 & 55.98 & 28.78\\
\textsf{OMPRacer-Kripke-117M-events-16} & 117.5M & 68.0 & 13.1M & 38.6M & 12.2M & 998.9K & 91.4K & 175.1K & 68.0K & 22.8K\\
\textsf{DRB177-fib-taskdep-yes-618M-ev} & 618.3M & 70.0 & 303.0M & 46.5M & 9.6K & 297.89 & 522.72 & 2.2K & 251.45 & 375.13\\
\textsf{DataRaceBench-DRB176-fib-taskd} & 618.3M & 70.0 & 303.9M & 46.5M & 7.9K & 203.64 & 489.02 & 2.4K & 174.24 & 353.95\\
\textsf{DRB177-fib-taskdep-yes-618M-ev1} & 618.3M & 230.0 & 303.0M & 139.8M & 27.5K & 1.1K & 4.6K & 6.8K & 934.60 & 2.5K\\
\textsf{DRB176-fib-taskdep-no-341M-eve} & 618.3M & 230.0 & 303.6M & 139.8M & 49.3K & 1.8K & 9.2K & 9.7K & 1.4K & 4.3K\\
\textsf{OmpSCR-v2.0-c-LoopsWithDepende5} & 674.2M & 226.0 & 349.6M & 137.1M & 6.7K & 198.91 & 187.25 & 116.00 & 50.64 & 64.22\\
\textsf{xalan} & 1000.0M & 60.0 & 225.0M & 40.5M & 2.3K & 7.77 & 8.98 & 202.00 & 7.27 & 0.82\\
\textsf{DRB177-fib-taskdep-yes-552M-ev} & 1.0B & 70.0 & 490.6M & 47.2M & 7.6K & 240.22 & 241.97 & 2.1K & 209.13 & 199.43\\
\textsf{OMPRacer-RSBench-1.2B-events-1} & 1.3B & 66.0 & 828.8M & 44.7M & 22.00 & 0.78 & 0.28 & 22.00 & 0.78 & 0.28\\
\textsf{OMPRacer-RSBench-1.2B-events-5} & 1.3B & 226.0 & 882.0M & 146.9M & 95.00 & 5.95 & 12.23 & 95.00 & 5.95 & 12.23\\
\textsf{DRB177-fib-taskdep-yes-1.6B-ev} & 1.6B & 70.0 & 793.8M & 47.6M & 11.4K & 374.89 & 258.20 & 2.4K & 306.84 & 209.53\\
\textsf{DRB176-fib-taskdep-no-1.6B-eve} & 1.6B & 70.0 & 795.7M & 47.6M & 12.5K & 396.09 & 319.93 & 2.9K & 309.67 & 245.15\\
\textsf{DRB176-fib-taskdep-no-1.6B-eve1} & 1.6B & 230.0 & 795.4M & 151.2M & 71.6K & 2.5K & 5.4K & 11.1K & 1.9K & 3.0K\\
\textsf{moldyn} & 1.7B & 18.0 & 650.1M & 12.4M & 17.6M & 668.5K & 747.57 & 18.4K & 17.5K & 3.24\\
\textsf{OmpSCR-v2.0-c-fft-2.1B-events} & 2.1B & 230.0 & 477.8M & 152.9M & 8.1M & 233.4K & 56.27 & 7.8M & 233.4K & 4.72\\
\textsf{avrora} & 2.4B & 32.0 & 956.3M & 22.0M & 3.1M & 93.0K & 20.0K & 414.7K & 17.5K & 4.4K\\
\textsf{raytracer} & 2.8B & 18.0 & 1.5B & 12.4M & 7.00 & 1.42 & 0.00 & 4.00 & 1.19 & 0.00\\

\specialrule{1pt}{0pt}{0pt}
\end{tabular*}
\end{adjustbox}
}
% \bigskip
\label{tab:time4}
\end{table*}

\bibliographystyle{ACM-Reference-Format}
\bibliography{references}

%%% -*-BibTeX-*-
%%% Do NOT edit. File created by BibTeX with style
%%% ACM-Reference-Format-Journals [18-Jan-2012].

\begin{thebibliography}{89}

%%% ====================================================================
%%% NOTE TO THE USER: you can override these defaults by providing
%%% customized versions of any of these macros before the \bibliography
%%% command.  Each of them MUST provide its own final punctuation,
%%% except for \shownote{}, \showDOI{}, and \showURL{}.  The latter two
%%% do not use final punctuation, in order to avoid confusing it with
%%% the Web address.
%%%
%%% To suppress output of a particular field, define its macro to expand
%%% to an empty string, or better, \unskip, like this:
%%%
%%% \newcommand{\showDOI}[1]{\unskip}   % LaTeX syntax
%%%
%%% \def \showDOI #1{\unskip}           % plain TeX syntax
%%%
%%% ====================================================================

\ifx \showCODEN    \undefined \def \showCODEN     #1{\unskip}     \fi
\ifx \showDOI      \undefined \def \showDOI       #1{#1}\fi
\ifx \showISBNx    \undefined \def \showISBNx     #1{\unskip}     \fi
\ifx \showISBNxiii \undefined \def \showISBNxiii  #1{\unskip}     \fi
\ifx \showISSN     \undefined \def \showISSN      #1{\unskip}     \fi
\ifx \showLCCN     \undefined \def \showLCCN      #1{\unskip}     \fi
\ifx \shownote     \undefined \def \shownote      #1{#1}          \fi
\ifx \showarticletitle \undefined \def \showarticletitle #1{#1}   \fi
\ifx \showURL      \undefined \def \showURL       {\relax}        \fi
% The following commands are used for tagged output and should be
% invisible to TeX
\providecommand\bibfield[2]{#2}
\providecommand\bibinfo[2]{#2}
\providecommand\natexlab[1]{#1}
\providecommand\showeprint[2][]{arXiv:#2}

\bibitem[int(2021)]%
        {intelinspector}
 \bibinfo{year}{2021}\natexlab{}.
\newblock \bibinfo{booktitle}{\emph{{{Intel\textregistered Inspector}}}}.
\newblock
\urldef\tempurl%
\url{https://software.intel.com/content/www/us/en/develop/tools/inspector.html}
\showURL{%
\tempurl}
\newblock
\shownote{Accessed: 2021-11-01}.


\bibitem[Abadi et~al\mbox{.}(2006)]%
        {Abadi:2006:TSL:1119479.1119480}
\bibfield{author}{\bibinfo{person}{Martin Abadi}, \bibinfo{person}{Cormac
  Flanagan}, {and} \bibinfo{person}{Stephen~N. Freund}.}
  \bibinfo{year}{2006}\natexlab{}.
\newblock \showarticletitle{{Types for Safe Locking: Static Race Detection for
  Java}}.
\newblock \bibinfo{journal}{\emph{ACM Trans. Program. Lang. Syst.}}
  \bibinfo{volume}{28}, \bibinfo{number}{2} (\bibinfo{date}{March}
  \bibinfo{year}{2006}), \bibinfo{pages}{207--255}.
\newblock


\bibitem[{Advanced Simulation and Computing, LLNL}(2022a)]%
        {coral}
\bibfield{author}{\bibinfo{person}{{Advanced Simulation and Computing, LLNL}}.}
  \bibinfo{year}{2022}\natexlab{a}.
\newblock \bibinfo{booktitle}{\emph{{CORAL Benchmarks}}}.
\newblock
\urldef\tempurl%
\url{https://asc.llnl.gov/coral-benchmarks}
\showURL{%
\tempurl}
\newblock
\shownote{Accessed: 2022-04-11}.


\bibitem[{Advanced Simulation and Computing, LLNL}(2022b)]%
        {coral2}
\bibfield{author}{\bibinfo{person}{{Advanced Simulation and Computing, LLNL}}.}
  \bibinfo{year}{2022}\natexlab{b}.
\newblock \bibinfo{booktitle}{\emph{{CORAL Benchmarks}}}.
\newblock
\urldef\tempurl%
\url{https://asc.llnl.gov/coral-2-benchmarks}
\showURL{%
\tempurl}
\newblock
\shownote{Accessed: 2022-04-11}.


\bibitem[Adve(2010)]%
        {adve2010data}
\bibfield{author}{\bibinfo{person}{Sarita Adve}.}
  \bibinfo{year}{2010}\natexlab{}.
\newblock \showarticletitle{Data races are evil with no exceptions: Technical
  perspective}.
\newblock \bibinfo{journal}{\emph{Commun. ACM}} \bibinfo{volume}{53},
  \bibinfo{number}{11} (\bibinfo{year}{2010}), \bibinfo{pages}{84--84}.
\newblock


\bibitem[Agrawal et~al\mbox{.}(2018)]%
        {Agrawal2018}
\bibfield{author}{\bibinfo{person}{Kunal Agrawal}, \bibinfo{person}{Joseph
  Devietti}, \bibinfo{person}{Jeremy~T. Fineman},
  \bibinfo{person}{I-Ting~Angelina Lee}, \bibinfo{person}{Robert Utterback},
  {and} \bibinfo{person}{Changming Xu}.} \bibinfo{year}{2018}\natexlab{}.
\newblock \showarticletitle{Race Detection and Reachability in Nearly
  Series-Parallel DAGs}. In \bibinfo{booktitle}{\emph{Proceedings of the
  Twenty-Ninth Annual ACM-SIAM Symposium on Discrete Algorithms}} (New Orleans,
  Louisiana) \emph{(\bibinfo{series}{SODA '18})}. \bibinfo{publisher}{Society
  for Industrial and Applied Mathematics}, \bibinfo{address}{USA},
  \bibinfo{pages}{156–171}.
\newblock
\showISBNx{9781611975031}


\bibitem[Alon et~al\mbox{.}(2001)]%
        {akns99}
\bibfield{author}{\bibinfo{person}{Noga Alon}, \bibinfo{person}{Michael
  Krivelevich}, \bibinfo{person}{Ilan Newman}, {and} \bibinfo{person}{Mario
  Szegedy}.} \bibinfo{year}{2001}\natexlab{}.
\newblock \showarticletitle{Regular languages are testable with a constant
  number of queries}.
\newblock \bibinfo{journal}{\emph{SIAM J. Comput.}} \bibinfo{volume}{30},
  \bibinfo{number}{6} (\bibinfo{year}{2001}), \bibinfo{pages}{1842--1862}.
\newblock


\bibitem[Bailey et~al\mbox{.}(1991)]%
        {nasbenchmark}
\bibfield{author}{\bibinfo{person}{D.~H. Bailey}, \bibinfo{person}{E. Barszcz},
  \bibinfo{person}{J.~T. Barton}, \bibinfo{person}{D.~S. Browning},
  \bibinfo{person}{R.~L. Carter}, \bibinfo{person}{L. Dagum},
  \bibinfo{person}{R.~A. Fatoohi}, \bibinfo{person}{P.~O. Frederickson},
  \bibinfo{person}{T.~A. Lasinski}, \bibinfo{person}{R.~S. Schreiber},
  \bibinfo{person}{H.~D. Simon}, \bibinfo{person}{V. Venkatakrishnan}, {and}
  \bibinfo{person}{S.~K. Weeratunga}.} \bibinfo{year}{1991}\natexlab{}.
\newblock \showarticletitle{{The NAS Parallel Benchmarks—Summary and
  Preliminary Results}}. In \bibinfo{booktitle}{\emph{Proceedings of the 1991
  ACM/IEEE Conference on Supercomputing}} (Albuquerque, New Mexico, USA)
  \emph{(\bibinfo{series}{Supercomputing '91})}.
  \bibinfo{publisher}{Association for Computing Machinery},
  \bibinfo{address}{New York, NY, USA}, \bibinfo{pages}{158–165}.
\newblock
\showISBNx{0897914597}
\urldef\tempurl%
\url{https://doi.org/10.1145/125826.125925}
\showDOI{\tempurl}


\bibitem[Biswas et~al\mbox{.}(2017)]%
        {racechaser}
\bibfield{author}{\bibinfo{person}{Swarnendu Biswas}, \bibinfo{person}{Man
  Cao}, \bibinfo{person}{Minjia Zhang}, \bibinfo{person}{Michael~D. Bond},
  {and} \bibinfo{person}{Benjamin~P. Wood}.} \bibinfo{year}{2017}\natexlab{}.
\newblock \showarticletitle{Lightweight Data Race Detection for Production
  Runs}. In \bibinfo{booktitle}{\emph{Proceedings of the 26th International
  Conference on Compiler Construction}} (Austin, TX, USA)
  \emph{(\bibinfo{series}{CC 2017})}. \bibinfo{publisher}{Association for
  Computing Machinery}, \bibinfo{address}{New York, NY, USA},
  \bibinfo{pages}{11–21}.
\newblock
\showISBNx{9781450352338}
\urldef\tempurl%
\url{https://doi.org/10.1145/3033019.3033020}
\showDOI{\tempurl}


\bibitem[Blackburn et~al\mbox{.}(2006)]%
        {DaCapo2006}
\bibfield{author}{\bibinfo{person}{Stephen~M. Blackburn},
  \bibinfo{person}{Robin Garner}, \bibinfo{person}{Chris Hoffmann},
  \bibinfo{person}{Asjad~M. Khang}, \bibinfo{person}{Kathryn~S. McKinley},
  \bibinfo{person}{Rotem Bentzur}, \bibinfo{person}{Amer Diwan},
  \bibinfo{person}{Daniel Feinberg}, \bibinfo{person}{Daniel Frampton},
  \bibinfo{person}{Samuel~Z. Guyer}, \bibinfo{person}{Martin Hirzel},
  \bibinfo{person}{Antony Hosking}, \bibinfo{person}{Maria Jump},
  \bibinfo{person}{Han Lee}, \bibinfo{person}{J.~Eliot~B. Moss},
  \bibinfo{person}{Aashish Phansalkar}, \bibinfo{person}{Darko Stefanovi\'{c}},
  \bibinfo{person}{Thomas VanDrunen}, \bibinfo{person}{Daniel von Dincklage},
  {and} \bibinfo{person}{Ben Wiedermann}.} \bibinfo{year}{2006}\natexlab{}.
\newblock \showarticletitle{The DaCapo Benchmarks: Java Benchmarking
  Development and Analysis}. In \bibinfo{booktitle}{\emph{Proceedings of the
  21st Annual ACM SIGPLAN Conference on Object-oriented Programming Systems,
  Languages, and Applications}} (Portland, Oregon, USA)
  \emph{(\bibinfo{series}{OOPSLA '06})}. \bibinfo{publisher}{ACM},
  \bibinfo{address}{New York, NY, USA}, \bibinfo{pages}{169--190}.
\newblock
\showISBNx{1-59593-348-4}
\urldef\tempurl%
\url{https://doi.org/10.1145/1167473.1167488}
\showDOI{\tempurl}


\bibitem[Blackshear et~al\mbox{.}(2018)]%
        {racerd2018}
\bibfield{author}{\bibinfo{person}{Sam Blackshear}, \bibinfo{person}{Nikos
  Gorogiannis}, \bibinfo{person}{Peter~W. O'Hearn}, {and} \bibinfo{person}{Ilya
  Sergey}.} \bibinfo{year}{2018}\natexlab{}.
\newblock \showarticletitle{RacerD: Compositional Static Race Detection}.
\newblock \bibinfo{journal}{\emph{Proc. ACM Program. Lang.}}
  \bibinfo{volume}{2}, \bibinfo{number}{OOPSLA}, Article
  \bibinfo{articleno}{144} (\bibinfo{date}{oct} \bibinfo{year}{2018}),
  \bibinfo{numpages}{28}~pages.
\newblock
\urldef\tempurl%
\url{https://doi.org/10.1145/3276514}
\showDOI{\tempurl}


\bibitem[Boehm(2011)]%
        {boehmbenign2011}
\bibfield{author}{\bibinfo{person}{Hans-J. Boehm}.}
  \bibinfo{year}{2011}\natexlab{}.
\newblock \showarticletitle{How to Miscompile Programs with “Benign” Data
  Races}. In \bibinfo{booktitle}{\emph{Proceedings of the 3rd USENIX Conference
  on Hot Topic in Parallelism}} (Berkeley, CA)
  \emph{(\bibinfo{series}{HotPar’11})}. \bibinfo{publisher}{USENIX
  Association}, \bibinfo{address}{USA}, \bibinfo{pages}{3}.
\newblock


\bibitem[Boehm and Adve(2008)]%
        {boehm-adve08}
\bibfield{author}{\bibinfo{person}{Hans-J. Boehm} {and} \bibinfo{person}{Sarita
  Adve}.} \bibinfo{year}{2008}\natexlab{}.
\newblock \showarticletitle{Foundations of the C++ Concurrency Memory Model}.
  In \bibinfo{booktitle}{\emph{Proceedings of the ACM SIGPLAN Conference on
  Programming Language Design and Implementation}}. \bibinfo{pages}{68--78}.
\newblock


\bibitem[Bond(2021)]%
        {pacer-tool}
\bibfield{author}{\bibinfo{person}{Michael Bond}.}
  \bibinfo{year}{2021}\natexlab{}.
\newblock \bibinfo{booktitle}{\emph{{{Pacer: Proportional Detection of Data
  Races}}}}.
\newblock
\urldef\tempurl%
\url{https://sourceforge.net/p/jikesrvm/research-archive/28/}
\showURL{%
\tempurl}
\newblock
\shownote{Accessed: 2021-11-01}.


\bibitem[Bond et~al\mbox{.}(2010)]%
        {bond2010pacer}
\bibfield{author}{\bibinfo{person}{Michael~D. Bond},
  \bibinfo{person}{Katherine~E. Coons}, {and} \bibinfo{person}{Kathryn~S.
  McKinley}.} \bibinfo{year}{2010}\natexlab{}.
\newblock \showarticletitle{PACER: Proportional Detection of Data Races}. In
  \bibinfo{booktitle}{\emph{Proceedings of the 31st ACM SIGPLAN Conference on
  Programming Language Design and Implementation}} (Toronto, Ontario, Canada)
  \emph{(\bibinfo{series}{PLDI '10})}. \bibinfo{publisher}{ACM},
  \bibinfo{address}{New York, NY, USA}, \bibinfo{pages}{255--268}.
\newblock
\showISBNx{978-1-4503-0019-3}
\urldef\tempurl%
\url{https://doi.org/10.1145/1806596.1806626}
\showDOI{\tempurl}


\bibitem[Bond et~al\mbox{.}(2013)]%
        {Bond2013}
\bibfield{author}{\bibinfo{person}{Michael~D. Bond}, \bibinfo{person}{Milind
  Kulkarni}, \bibinfo{person}{Man Cao}, \bibinfo{person}{Minjia Zhang},
  \bibinfo{person}{Meisam Fathi~Salmi}, \bibinfo{person}{Swarnendu Biswas},
  \bibinfo{person}{Aritra Sengupta}, {and} \bibinfo{person}{Jipeng Huang}.}
  \bibinfo{year}{2013}\natexlab{}.
\newblock \showarticletitle{OCTET: Capturing and Controlling Cross-thread
  Dependences Efficiently}.
\newblock \bibinfo{journal}{\emph{SIGPLAN Not.}} \bibinfo{volume}{48},
  \bibinfo{number}{10} (\bibinfo{date}{Oct.} \bibinfo{year}{2013}),
  \bibinfo{pages}{693--712}.
\newblock
\showISSN{0362-1340}
\urldef\tempurl%
\url{https://doi.org/10.1145/2544173.2509519}
\showDOI{\tempurl}


\bibitem[Burckhardt et~al\mbox{.}(2010)]%
        {PCT2010}
\bibfield{author}{\bibinfo{person}{Sebastian Burckhardt},
  \bibinfo{person}{Pravesh Kothari}, \bibinfo{person}{Madanlal Musuvathi},
  {and} \bibinfo{person}{Santosh Nagarakatte}.}
  \bibinfo{year}{2010}\natexlab{}.
\newblock \showarticletitle{A Randomized Scheduler with Probabilistic
  Guarantees of Finding Bugs}. In \bibinfo{booktitle}{\emph{Proceedings of the
  Fifteenth International Conference on Architectural Support for Programming
  Languages and Operating Systems}} (Pittsburgh, Pennsylvania, USA)
  \emph{(\bibinfo{series}{ASPLOS XV})}. \bibinfo{publisher}{Association for
  Computing Machinery}, \bibinfo{address}{New York, NY, USA},
  \bibinfo{pages}{167–178}.
\newblock
\showISBNx{9781605588391}
\urldef\tempurl%
\url{https://doi.org/10.1145/1736020.1736040}
\showDOI{\tempurl}


\bibitem[Chen et~al\mbox{.}(2018)]%
        {Chen2018}
\bibfield{author}{\bibinfo{person}{Dongjie Chen}, \bibinfo{person}{Yanyan
  Jiang}, \bibinfo{person}{Chang Xu}, \bibinfo{person}{Xiaoxing Ma}, {and}
  \bibinfo{person}{Jian Lu}.} \bibinfo{year}{2018}\natexlab{}.
\newblock \showarticletitle{Testing Multithreaded Programs via Thread Speed
  Control}. In \bibinfo{booktitle}{\emph{Proceedings of the 2018 26th ACM Joint
  Meeting on European Software Engineering Conference and Symposium on the
  Foundations of Software Engineering}} (Lake Buena Vista, FL, USA)
  \emph{(\bibinfo{series}{ESEC/FSE 2018})}. \bibinfo{publisher}{Association for
  Computing Machinery}, \bibinfo{address}{New York, NY, USA},
  \bibinfo{pages}{15–25}.
\newblock
\showISBNx{9781450355735}
\urldef\tempurl%
\url{https://doi.org/10.1145/3236024.3236077}
\showDOI{\tempurl}


\bibitem[Cheng et~al\mbox{.}(1998)]%
        {cheng1998detecting}
\bibfield{author}{\bibinfo{person}{Guang-Ien Cheng}, \bibinfo{person}{Mingdong
  Feng}, \bibinfo{person}{Charles~E. Leiserson}, \bibinfo{person}{Keith~H.
  Randall}, {and} \bibinfo{person}{Andrew~F. Stark}.}
  \bibinfo{year}{1998}\natexlab{}.
\newblock \showarticletitle{{Detecting Data Races in Cilk Programs That Use
  Locks}}. In \bibinfo{booktitle}{\emph{Proceedings of the Tenth Annual ACM
  Symposium on Parallel Algorithms and Architectures}} (Puerto Vallarta,
  Mexico) \emph{(\bibinfo{series}{SPAA '98})}. \bibinfo{publisher}{ACM},
  \bibinfo{address}{New York, NY, USA}, \bibinfo{pages}{298--309}.
\newblock


\bibitem[Choi et~al\mbox{.}(2002)]%
        {Choi02}
\bibfield{author}{\bibinfo{person}{Jong-Deok Choi}, \bibinfo{person}{Keunwoo
  Lee}, \bibinfo{person}{Alexey Loginov}, \bibinfo{person}{Robert O'Callahan},
  \bibinfo{person}{Vivek Sarkar}, {and} \bibinfo{person}{Manu Sridharan}.}
  \bibinfo{year}{2002}\natexlab{}.
\newblock \showarticletitle{Efficient and Precise Datarace Detection for
  Multithreaded Object-oriented Programs}. In
  \bibinfo{booktitle}{\emph{Proceedings of the ACM SIGPLAN 2002 Conference on
  Programming Language Design and Implementation}} (Berlin, Germany)
  \emph{(\bibinfo{series}{PLDI '02})}. \bibinfo{publisher}{ACM},
  \bibinfo{address}{New York, NY, USA}, \bibinfo{pages}{258--269}.
\newblock
\showISBNx{1-58113-463-0}
\urldef\tempurl%
\url{https://doi.org/10.1145/512529.512560}
\showDOI{\tempurl}


\bibitem[Devietti et~al\mbox{.}(2012)]%
        {radish2012}
\bibfield{author}{\bibinfo{person}{Joseph Devietti},
  \bibinfo{person}{Benjamin~P. Wood}, \bibinfo{person}{Karin Strauss},
  \bibinfo{person}{Luis Ceze}, \bibinfo{person}{Dan Grossman}, {and}
  \bibinfo{person}{Shaz Qadeer}.} \bibinfo{year}{2012}\natexlab{}.
\newblock \showarticletitle{RADISH: Always-on sound and complete race detection
  in software and hardware}. In \bibinfo{booktitle}{\emph{2012 39th Annual
  International Symposium on Computer Architecture (ISCA)}}.
  \bibinfo{pages}{201--212}.
\newblock
\urldef\tempurl%
\url{https://doi.org/10.1109/ISCA.2012.6237018}
\showDOI{\tempurl}


\bibitem[Dimitrov et~al\mbox{.}(2015)]%
        {Dimitrov2015}
\bibfield{author}{\bibinfo{person}{Dimitar Dimitrov}, \bibinfo{person}{Martin
  Vechev}, {and} \bibinfo{person}{Vivek Sarkar}.}
  \bibinfo{year}{2015}\natexlab{}.
\newblock \showarticletitle{Race Detection in Two Dimensions}. In
  \bibinfo{booktitle}{\emph{Proceedings of the 27th ACM Symposium on
  Parallelism in Algorithms and Architectures}} (Portland, Oregon, USA)
  \emph{(\bibinfo{series}{SPAA '15})}. \bibinfo{publisher}{Association for
  Computing Machinery}, \bibinfo{address}{New York, NY, USA},
  \bibinfo{pages}{101–110}.
\newblock
\showISBNx{9781450335881}
\urldef\tempurl%
\url{https://doi.org/10.1145/2755573.2755601}
\showDOI{\tempurl}


\bibitem[Do et~al\mbox{.}(2005)]%
        {SIR2005}
\bibfield{author}{\bibinfo{person}{Hyunsook Do}, \bibinfo{person}{Sebastian
  Elbaum}, {and} \bibinfo{person}{Gregg Rothermel}.}
  \bibinfo{year}{2005}\natexlab{}.
\newblock \showarticletitle{Supporting controlled experimentation with testing
  techniques: An infrastructure and its potential impact}.
\newblock \bibinfo{journal}{\emph{Empirical Software Engineering: An
  International Journal}}  \bibinfo{volume}{10} (\bibinfo{year}{2005}),
  \bibinfo{pages}{405--435}.
\newblock


\bibitem[Dorta et~al\mbox{.}(2005)]%
        {dorta2005openmp}
\bibfield{author}{\bibinfo{person}{A.J. Dorta}, \bibinfo{person}{C. Rodriguez},
  {and} \bibinfo{person}{F. de Sande}.} \bibinfo{year}{2005}\natexlab{}.
\newblock \showarticletitle{{The OpenMP source code repository}}. In
  \bibinfo{booktitle}{\emph{13th Euromicro Conference on Parallel, Distributed
  and Network-Based Processing}}. \bibinfo{pages}{244--250}.
\newblock
\urldef\tempurl%
\url{https://doi.org/10.1109/EMPDP.2005.41}
\showDOI{\tempurl}


\bibitem[Elmas et~al\mbox{.}(2007)]%
        {elmas2007goldilocks}
\bibfield{author}{\bibinfo{person}{Tayfun Elmas}, \bibinfo{person}{Shaz
  Qadeer}, {and} \bibinfo{person}{Serdar Tasiran}.}
  \bibinfo{year}{2007}\natexlab{}.
\newblock \showarticletitle{Goldilocks: A Race and Transaction-aware Java
  Runtime}. In \bibinfo{booktitle}{\emph{Proceedings of the 28th ACM SIGPLAN
  Conference on Programming Language Design and Implementation}} (San Diego,
  California, USA) \emph{(\bibinfo{series}{PLDI '07})}.
  \bibinfo{publisher}{ACM}, \bibinfo{address}{New York, NY, USA},
  \bibinfo{pages}{245--255}.
\newblock
\showISBNx{978-1-59593-633-2}
\urldef\tempurl%
\url{https://doi.org/10.1145/1250734.1250762}
\showDOI{\tempurl}


\bibitem[Erickson et~al\mbox{.}(2010)]%
        {datacollider2010}
\bibfield{author}{\bibinfo{person}{John Erickson}, \bibinfo{person}{Madanlal
  Musuvathi}, \bibinfo{person}{Sebastian Burckhardt}, {and}
  \bibinfo{person}{Kirk Olynyk}.} \bibinfo{year}{2010}\natexlab{}.
\newblock \showarticletitle{Effective Data-Race Detection for the Kernel}. In
  \bibinfo{booktitle}{\emph{Proceedings of the 9th USENIX Conference on
  Operating Systems Design and Implementation}} (Vancouver, BC, Canada)
  \emph{(\bibinfo{series}{OSDI'10})}. \bibinfo{publisher}{USENIX Association},
  \bibinfo{address}{USA}, \bibinfo{pages}{151–162}.
\newblock


\bibitem[Farzan et~al\mbox{.}(2009)]%
        {Farzan2009cav}
\bibfield{author}{\bibinfo{person}{Azadeh Farzan}, \bibinfo{person}{P.
  Madhusudan}, {and} \bibinfo{person}{Francesco Sorrentino}.}
  \bibinfo{year}{2009}\natexlab{}.
\newblock \showarticletitle{Meta-analysis for Atomicity Violations Under Nested
  Locking}. In \bibinfo{booktitle}{\emph{Proceedings of the 21st International
  Conference on Computer Aided Verification}} (Grenoble, France)
  \emph{(\bibinfo{series}{CAV '09})}. \bibinfo{publisher}{Springer-Verlag},
  \bibinfo{address}{Berlin, Heidelberg}, \bibinfo{pages}{248--262}.
\newblock
\showISBNx{978-3-642-02657-7}
\urldef\tempurl%
\url{https://doi.org/10.1007/978-3-642-02658-4_21}
\showDOI{\tempurl}


\bibitem[Feng and Leiserson(1997)]%
        {Feng1997}
\bibfield{author}{\bibinfo{person}{Mingdong Feng} {and}
  \bibinfo{person}{Charles~E. Leiserson}.} \bibinfo{year}{1997}\natexlab{}.
\newblock \showarticletitle{Efficient Detection of Determinacy Races in Cilk
  Programs}. In \bibinfo{booktitle}{\emph{Proceedings of the Ninth Annual ACM
  Symposium on Parallel Algorithms and Architectures}} (Newport, Rhode Island,
  USA) \emph{(\bibinfo{series}{SPAA '97})}. \bibinfo{publisher}{Association for
  Computing Machinery}, \bibinfo{address}{New York, NY, USA},
  \bibinfo{pages}{1–11}.
\newblock
\showISBNx{0897918908}
\urldef\tempurl%
\url{https://doi.org/10.1145/258492.258493}
\showDOI{\tempurl}


\bibitem[Fidge(1991)]%
        {Fidge:1991:LTD:112827.112860}
\bibfield{author}{\bibinfo{person}{Colin Fidge}.}
  \bibinfo{year}{1991}\natexlab{}.
\newblock \showarticletitle{Logical Time in Distributed Computing Systems}.
\newblock \bibinfo{journal}{\emph{Computer}} \bibinfo{volume}{24},
  \bibinfo{number}{8} (\bibinfo{date}{Aug.} \bibinfo{year}{1991}),
  \bibinfo{pages}{28--33}.
\newblock
\showISSN{0018-9162}
\urldef\tempurl%
\url{https://doi.org/10.1109/2.84874}
\showDOI{\tempurl}


\bibitem[Flanagan and Freund(2000)]%
        {flanagan2000type}
\bibfield{author}{\bibinfo{person}{Cormac Flanagan} {and}
  \bibinfo{person}{Stephen~N. Freund}.} \bibinfo{year}{2000}\natexlab{}.
\newblock \showarticletitle{Type-based Race Detection for Java}. In
  \bibinfo{booktitle}{\emph{Proceedings of the ACM SIGPLAN 2000 Conference on
  Programming Language Design and Implementation}} (Vancouver, British
  Columbia, Canada) \emph{(\bibinfo{series}{PLDI '00})}.
  \bibinfo{publisher}{ACM}, \bibinfo{address}{New York, NY, USA},
  \bibinfo{pages}{219--232}.
\newblock
\showISBNx{1-58113-199-2}
\urldef\tempurl%
\url{https://doi.org/10.1145/349299.349328}
\showDOI{\tempurl}


\bibitem[Flanagan and Freund(2009)]%
        {fasttrack}
\bibfield{author}{\bibinfo{person}{Cormac Flanagan} {and}
  \bibinfo{person}{Stephen~N. Freund}.} \bibinfo{year}{2009}\natexlab{}.
\newblock \showarticletitle{FastTrack: Efficient and Precise Dynamic Race
  Detection}. In \bibinfo{booktitle}{\emph{Proceedings of the 30th ACM SIGPLAN
  Conference on Programming Language Design and Implementation}} (Dublin,
  Ireland) \emph{(\bibinfo{series}{PLDI '09})}. \bibinfo{publisher}{ACM},
  \bibinfo{address}{New York, NY, USA}, \bibinfo{pages}{121--133}.
\newblock
\showISBNx{978-1-60558-392-1}
\urldef\tempurl%
\url{https://doi.org/10.1145/1542476.1542490}
\showDOI{\tempurl}


\bibitem[Flanagan and Freund(2010)]%
        {flanagan2010roadrunner}
\bibfield{author}{\bibinfo{person}{Cormac Flanagan} {and}
  \bibinfo{person}{Stephen~N. Freund}.} \bibinfo{year}{2010}\natexlab{}.
\newblock \showarticletitle{{The RoadRunner Dynamic Analysis Framework for
  Concurrent Programs}}. In \bibinfo{booktitle}{\emph{Proceedings of the 9th
  ACM SIGPLAN-SIGSOFT Workshop on Program Analysis for Software Tools and
  Engineering}} (Toronto, Ontario, Canada) \emph{(\bibinfo{series}{PASTE
  '10})}. \bibinfo{publisher}{ACM}, \bibinfo{address}{New York, NY, USA},
  \bibinfo{pages}{1--8}.
\newblock


\bibitem[Flanagan and Freund(2013)]%
        {redcard}
\bibfield{author}{\bibinfo{person}{Cormac Flanagan} {and}
  \bibinfo{person}{Stephen~N. Freund}.} \bibinfo{year}{2013}\natexlab{}.
\newblock \showarticletitle{RedCard: Redundant Check Elimination for Dynamic
  Race Detectors}. In \bibinfo{booktitle}{\emph{Proceedings of the 27th
  European Conference on Object-Oriented Programming}} (Montpellier, France)
  \emph{(\bibinfo{series}{ECOOP'13})}. \bibinfo{publisher}{Springer-Verlag},
  \bibinfo{address}{Berlin, Heidelberg}, \bibinfo{pages}{255--280}.
\newblock
\showISBNx{978-3-642-39037-1}


\bibitem[Flanagan et~al\mbox{.}(2008)]%
        {Flanagantypesatomicity2008}
\bibfield{author}{\bibinfo{person}{Cormac Flanagan},
  \bibinfo{person}{Stephen~N. Freund}, \bibinfo{person}{Marina Lifshin}, {and}
  \bibinfo{person}{Shaz Qadeer}.} \bibinfo{year}{2008}\natexlab{}.
\newblock \showarticletitle{Types for Atomicity: Static Checking and Inference
  for Java}.
\newblock \bibinfo{journal}{\emph{ACM Trans. Program. Lang. Syst.}}
  \bibinfo{volume}{30}, \bibinfo{number}{4}, Article \bibinfo{articleno}{20}
  (\bibinfo{date}{Aug.} \bibinfo{year}{2008}), \bibinfo{numpages}{53}~pages.
\newblock
\showISSN{0164-0925}
\urldef\tempurl%
\url{https://doi.org/10.1145/1377492.1377495}
\showDOI{\tempurl}


\bibitem[Goldreich(2017)]%
        {prop-test-book}
\bibfield{author}{\bibinfo{person}{Oded Goldreich}.}
  \bibinfo{year}{2017}\natexlab{}.
\newblock \bibinfo{booktitle}{\emph{Introduction to Property Testing}}.
\newblock \bibinfo{publisher}{Cambridge University Press}.
\newblock


\bibitem[Havelund(2000)]%
        {havelund2000}
\bibfield{author}{\bibinfo{person}{Klaus Havelund}.}
  \bibinfo{year}{2000}\natexlab{}.
\newblock \showarticletitle{Using Runtime Analysis to Guide Model Checking of
  Java Programs}. In \bibinfo{booktitle}{\emph{SPIN Model Checking and Software
  Verification}}, \bibfield{editor}{\bibinfo{person}{Klaus Havelund},
  \bibinfo{person}{John Penix}, {and} \bibinfo{person}{Willem Visser}} (Eds.).
  \bibinfo{publisher}{Springer Berlin Heidelberg}, \bibinfo{address}{Berlin,
  Heidelberg}, \bibinfo{pages}{245--264}.
\newblock
\showISBNx{978-3-540-45297-3}


\bibitem[Huang et~al\mbox{.}(2014)]%
        {rv2014}
\bibfield{author}{\bibinfo{person}{Jeff Huang}, \bibinfo{person}{Patrick~O'Neil
  Meredith}, {and} \bibinfo{person}{Grigore Rosu}.}
  \bibinfo{year}{2014}\natexlab{}.
\newblock \showarticletitle{Maximal Sound Predictive Race Detection with
  Control Flow Abstraction}. In \bibinfo{booktitle}{\emph{Proceedings of the
  35th ACM SIGPLAN Conference on Programming Language Design and
  Implementation}} (Edinburgh, United Kingdom) \emph{(\bibinfo{series}{PLDI
  '14})}. \bibinfo{publisher}{ACM}, \bibinfo{address}{New York, NY, USA},
  \bibinfo{pages}{337--348}.
\newblock
\showISBNx{978-1-4503-2784-8}
\urldef\tempurl%
\url{https://doi.org/10.1145/2594291.2594315}
\showDOI{\tempurl}


\bibitem[Itzkovitz et~al\mbox{.}(1999)]%
        {djit}
\bibfield{author}{\bibinfo{person}{Ayal Itzkovitz}, \bibinfo{person}{Assaf
  Schuster}, {and} \bibinfo{person}{Oren Zeev-Ben-Mordehai}.}
  \bibinfo{year}{1999}\natexlab{}.
\newblock \showarticletitle{Toward Integration of Data Race Detection in DSM
  Systems}.
\newblock \bibinfo{journal}{\emph{J. Parallel Distrib. Comput.}}
  \bibinfo{volume}{59}, \bibinfo{number}{2} (\bibinfo{date}{Nov.}
  \bibinfo{year}{1999}), \bibinfo{pages}{180--203}.
\newblock
\showISSN{0743-7315}
\urldef\tempurl%
\url{https://doi.org/10.1006/jpdc.1999.1574}
\showDOI{\tempurl}


\bibitem[Jeong et~al\mbox{.}(2019)]%
        {razzer2019}
\bibfield{author}{\bibinfo{person}{Dae~R. Jeong}, \bibinfo{person}{Kyungtae
  Kim}, \bibinfo{person}{Basavesh Shivakumar}, \bibinfo{person}{Byoungyoung
  Lee}, {and} \bibinfo{person}{Insik Shin}.} \bibinfo{year}{2019}\natexlab{}.
\newblock \showarticletitle{Razzer: Finding Kernel Race Bugs through Fuzzing}.
  In \bibinfo{booktitle}{\emph{2019 IEEE Symposium on Security and Privacy
  (SP)}}. \bibinfo{pages}{754--768}.
\newblock
\urldef\tempurl%
\url{https://doi.org/10.1109/SP.2019.00017}
\showDOI{\tempurl}


\bibitem[Kahlon et~al\mbox{.}(2005)]%
        {Kahlon05}
\bibfield{author}{\bibinfo{person}{Vineet Kahlon}, \bibinfo{person}{Franjo
  Ivan\v{c}i\'{c}}, {and} \bibinfo{person}{Aarti Gupta}.}
  \bibinfo{year}{2005}\natexlab{}.
\newblock \showarticletitle{Reasoning About Threads Communicating via Locks}.
  In \bibinfo{booktitle}{\emph{Proceedings of the 17th International Conference
  on Computer Aided Verification}} (Edinburgh, Scotland, UK)
  \emph{(\bibinfo{series}{CAV'05})}. \bibinfo{publisher}{Springer-Verlag},
  \bibinfo{address}{Berlin, Heidelberg}, \bibinfo{pages}{505--518}.
\newblock
\showISBNx{3-540-27231-3, 978-3-540-27231-1}
\urldef\tempurl%
\url{https://doi.org/10.1007/11513988_49}
\showDOI{\tempurl}


\bibitem[Kasikci et~al\mbox{.}(2013)]%
        {racemob}
\bibfield{author}{\bibinfo{person}{Baris Kasikci}, \bibinfo{person}{Cristian
  Zamfir}, {and} \bibinfo{person}{George Candea}.}
  \bibinfo{year}{2013}\natexlab{}.
\newblock \showarticletitle{{RaceMob: Crowdsourced Data Race Detection}}. In
  \bibinfo{booktitle}{\emph{Proceedings of the Twenty-Fourth ACM Symposium on
  Operating Systems Principles}} (Farminton, Pennsylvania)
  \emph{(\bibinfo{series}{SOSP '13})}. \bibinfo{publisher}{ACM},
  \bibinfo{address}{New York, NY, USA}, \bibinfo{pages}{406--422}.
\newblock


\bibitem[Kini et~al\mbox{.}(2017)]%
        {wcp2017}
\bibfield{author}{\bibinfo{person}{Dileep Kini}, \bibinfo{person}{Umang
  Mathur}, {and} \bibinfo{person}{Mahesh Viswanathan}.}
  \bibinfo{year}{2017}\natexlab{}.
\newblock \showarticletitle{Dynamic Race Prediction in Linear Time}. In
  \bibinfo{booktitle}{\emph{Proceedings of the 38th ACM SIGPLAN Conference on
  Programming Language Design and Implementation}} (Barcelona, Spain)
  \emph{(\bibinfo{series}{PLDI 2017})}. \bibinfo{publisher}{ACM},
  \bibinfo{address}{New York, NY, USA}, \bibinfo{pages}{157--170}.
\newblock
\showISBNx{978-1-4503-4988-8}
\urldef\tempurl%
\url{https://doi.org/10.1145/3062341.3062374}
\showDOI{\tempurl}


\bibitem[Kini et~al\mbox{.}(2018)]%
        {ziptrack2018}
\bibfield{author}{\bibinfo{person}{Dileep Kini}, \bibinfo{person}{Umang
  Mathur}, {and} \bibinfo{person}{Mahesh Viswanathan}.}
  \bibinfo{year}{2018}\natexlab{}.
\newblock \showarticletitle{Data Race Detection on Compressed Traces}. In
  \bibinfo{booktitle}{\emph{Proceedings of the 2018 26th ACM Joint Meeting on
  European Software Engineering Conference and Symposium on the Foundations of
  Software Engineering}} (Lake Buena Vista, FL, USA)
  \emph{(\bibinfo{series}{ESEC/FSE 2018})}. \bibinfo{publisher}{Association for
  Computing Machinery}, \bibinfo{address}{New York, NY, USA},
  \bibinfo{pages}{26–37}.
\newblock
\showISBNx{9781450355735}
\urldef\tempurl%
\url{https://doi.org/10.1145/3236024.3236025}
\showDOI{\tempurl}


\bibitem[Kulkarni et~al\mbox{.}(2021)]%
        {kulkarnifinegrained2021}
\bibfield{author}{\bibinfo{person}{Rucha Kulkarni}, \bibinfo{person}{Umang
  Mathur}, {and} \bibinfo{person}{Andreas Pavlogiannis}.}
  \bibinfo{year}{2021}\natexlab{}.
\newblock \showarticletitle{{Dynamic Data-Race Detection Through the
  Fine-Grained Lens}}. In \bibinfo{booktitle}{\emph{32nd International
  Conference on Concurrency Theory (CONCUR 2021)}}
  \emph{(\bibinfo{series}{Leibniz International Proceedings in Informatics
  (LIPIcs)}, Vol.~\bibinfo{volume}{203})},
  \bibfield{editor}{\bibinfo{person}{Serge Haddad} {and}
  \bibinfo{person}{Daniele Varacca}} (Eds.). \bibinfo{publisher}{Schloss
  Dagstuhl -- Leibniz-Zentrum f{\"u}r Informatik}, \bibinfo{address}{Dagstuhl,
  Germany}, \bibinfo{pages}{16:1--16:23}.
\newblock
\showISBNx{978-3-95977-203-7}
\showISSN{1868-8969}
\urldef\tempurl%
\url{https://doi.org/10.4230/LIPIcs.CONCUR.2021.16}
\showDOI{\tempurl}


\bibitem[Lamport(1978)]%
        {lamport1978time}
\bibfield{author}{\bibinfo{person}{Leslie Lamport}.}
  \bibinfo{year}{1978}\natexlab{}.
\newblock \showarticletitle{{Time, Clocks, and the Ordering of Events in a
  Distributed System}}.
\newblock \bibinfo{journal}{\emph{Commun. ACM}} \bibinfo{volume}{21},
  \bibinfo{number}{7} (\bibinfo{date}{July} \bibinfo{year}{1978}),
  \bibinfo{pages}{558--565}.
\newblock


\bibitem[Liao et~al\mbox{.}(2017)]%
        {liao2017dataracebench}
\bibfield{author}{\bibinfo{person}{Chunhua Liao}, \bibinfo{person}{Pei-Hung
  Lin}, \bibinfo{person}{Joshua Asplund}, \bibinfo{person}{Markus Schordan},
  {and} \bibinfo{person}{Ian Karlin}.} \bibinfo{year}{2017}\natexlab{}.
\newblock \showarticletitle{{DataRaceBench: A Benchmark Suite for Systematic
  Evaluation of Data Race Detection Tools}}. In
  \bibinfo{booktitle}{\emph{{Proceedings of the International Conference for
  High Performance Computing, Networking, Storage and Analysis}}} (Denver,
  Colorado) \emph{(\bibinfo{series}{SC '17})}. \bibinfo{publisher}{Association
  for Computing Machinery}, \bibinfo{address}{New York, NY, USA}, Article
  \bibinfo{articleno}{11}, \bibinfo{numpages}{14}~pages.
\newblock
\showISBNx{9781450351140}
\urldef\tempurl%
\url{https://doi.org/10.1145/3126908.3126958}
\showDOI{\tempurl}


\bibitem[{LLNL}(2022)]%
        {ecp}
\bibfield{author}{\bibinfo{person}{{LLNL}}.} \bibinfo{year}{2022}\natexlab{}.
\newblock \bibinfo{booktitle}{\emph{{ECP Proxy Applications}}}.
\newblock
\urldef\tempurl%
\url{https://proxyapps.exascaleproject.org}
\showURL{%
\tempurl}
\newblock
\shownote{Accessed: 2021-04-01}.


\bibitem[Lu et~al\mbox{.}(2008)]%
        {lpsz08}
\bibfield{author}{\bibinfo{person}{Shan Lu}, \bibinfo{person}{Soyeon Park},
  \bibinfo{person}{Eunsoo Seo}, {and} \bibinfo{person}{Yuanyuan Zhou}.}
  \bibinfo{year}{2008}\natexlab{}.
\newblock \showarticletitle{Learning from Mistakes: A Comprehensive Study on
  Real World Concurrency Bug Characteristics}. In
  \bibinfo{booktitle}{\emph{Proceedings of the 13th International Conference on
  Architectural Support for Programming Languages and Operating Systems}}
  (Seattle, WA, USA) \emph{(\bibinfo{series}{ASPLOS XIII})}.
  \bibinfo{publisher}{ACM}, \bibinfo{address}{New York, NY, USA},
  \bibinfo{pages}{329--339}.
\newblock
\showISBNx{978-1-59593-958-6}
\urldef\tempurl%
\url{https://doi.org/10.1145/1346281.1346323}
\showDOI{\tempurl}


\bibitem[Luo and Demsky(2021)]%
        {c11tester}
\bibfield{author}{\bibinfo{person}{Weiyu Luo} {and} \bibinfo{person}{Brian
  Demsky}.} \bibinfo{year}{2021}\natexlab{}.
\newblock \showarticletitle{C11Tester: A Race Detector for C/C++ Atomics}. In
  \bibinfo{booktitle}{\emph{Proceedings of the 26th ACM International
  Conference on Architectural Support for Programming Languages and Operating
  Systems}} (Virtual, USA) \emph{(\bibinfo{series}{ASPLOS 2021})}.
  \bibinfo{publisher}{Association for Computing Machinery},
  \bibinfo{address}{New York, NY, USA}, \bibinfo{pages}{630–646}.
\newblock
\showISBNx{9781450383172}
\urldef\tempurl%
\url{https://doi.org/10.1145/3445814.3446711}
\showDOI{\tempurl}


\bibitem[Marino et~al\mbox{.}(2009)]%
        {marino2009literace}
\bibfield{author}{\bibinfo{person}{Daniel Marino}, \bibinfo{person}{Madanlal
  Musuvathi}, {and} \bibinfo{person}{Satish Narayanasamy}.}
  \bibinfo{year}{2009}\natexlab{}.
\newblock \showarticletitle{LiteRace: Effective Sampling for Lightweight
  Data-race Detection}. In \bibinfo{booktitle}{\emph{Proceedings of the 30th
  ACM SIGPLAN Conference on Programming Language Design and Implementation}}
  (Dublin, Ireland) \emph{(\bibinfo{series}{PLDI '09})}.
  \bibinfo{publisher}{ACM}, \bibinfo{address}{New York, NY, USA},
  \bibinfo{pages}{134--143}.
\newblock
\showISBNx{978-1-60558-392-1}
\urldef\tempurl%
\url{https://doi.org/10.1145/1542476.1542491}
\showDOI{\tempurl}


\bibitem[Mathur et~al\mbox{.}(2018)]%
        {shb2018}
\bibfield{author}{\bibinfo{person}{Umang Mathur}, \bibinfo{person}{Dileep
  Kini}, {and} \bibinfo{person}{Mahesh Viswanathan}.}
  \bibinfo{year}{2018}\natexlab{}.
\newblock \showarticletitle{What Happens-after the First Race? Enhancing the
  Predictive Power of Happens-before Based Dynamic Race Detection}.
\newblock \bibinfo{journal}{\emph{Proc. ACM Program. Lang.}}
  \bibinfo{volume}{2}, \bibinfo{number}{OOPSLA}, Article
  \bibinfo{articleno}{145} (\bibinfo{date}{Oct.} \bibinfo{year}{2018}),
  \bibinfo{numpages}{29}~pages.
\newblock
\showISSN{2475-1421}
\urldef\tempurl%
\url{https://doi.org/10.1145/3276515}
\showDOI{\tempurl}


\bibitem[Mathur et~al\mbox{.}(2022)]%
        {MathurTreeClocks2022}
\bibfield{author}{\bibinfo{person}{Umang Mathur}, \bibinfo{person}{Andreas
  Pavlogiannis}, \bibinfo{person}{H\"{u}nkar~Can Tun\c{c}}, {and}
  \bibinfo{person}{Mahesh Viswanathan}.} \bibinfo{year}{2022}\natexlab{}.
\newblock \showarticletitle{A Tree Clock Data Structure for Causal Orderings in
  Concurrent Executions}. In \bibinfo{booktitle}{\emph{Proceedings of the 27th
  ACM International Conference on Architectural Support for Programming
  Languages and Operating Systems}} (Lausanne, Switzerland)
  \emph{(\bibinfo{series}{ASPLOS 2022})}. \bibinfo{publisher}{Association for
  Computing Machinery}, \bibinfo{address}{New York, NY, USA},
  \bibinfo{pages}{710–725}.
\newblock
\showISBNx{9781450392051}
\urldef\tempurl%
\url{https://doi.org/10.1145/3503222.3507734}
\showDOI{\tempurl}


\bibitem[Mathur et~al\mbox{.}(2020)]%
        {Mathur20}
\bibfield{author}{\bibinfo{person}{Umang Mathur}, \bibinfo{person}{Andreas
  Pavlogiannis}, {and} \bibinfo{person}{Mahesh Viswanathan}.}
  \bibinfo{year}{2020}\natexlab{}.
\newblock \showarticletitle{The Complexity of Dynamic Data Race Prediction}. In
  \bibinfo{booktitle}{\emph{Proceedings of the 35th Annual ACM/IEEE Symposium
  on Logic in Computer Science}} (Saarbr\"{u}cken, Germany)
  \emph{(\bibinfo{series}{LICS ’20})}. \bibinfo{publisher}{Association for
  Computing Machinery}, \bibinfo{address}{New York, NY, USA},
  \bibinfo{pages}{713–727}.
\newblock
\showISBNx{9781450371049}
\urldef\tempurl%
\url{https://doi.org/10.1145/3373718.3394783}
\showDOI{\tempurl}


\bibitem[Mathur et~al\mbox{.}(2021)]%
        {SyncP2021}
\bibfield{author}{\bibinfo{person}{Umang Mathur}, \bibinfo{person}{Andreas
  Pavlogiannis}, {and} \bibinfo{person}{Mahesh Viswanathan}.}
  \bibinfo{year}{2021}\natexlab{}.
\newblock \showarticletitle{Optimal Prediction of Synchronization-Preserving
  Races}.
\newblock \bibinfo{journal}{\emph{Proc. ACM Program. Lang.}}
  \bibinfo{volume}{5}, \bibinfo{number}{POPL}, Article \bibinfo{articleno}{36}
  (\bibinfo{date}{jan} \bibinfo{year}{2021}), \bibinfo{numpages}{29}~pages.
\newblock
\urldef\tempurl%
\url{https://doi.org/10.1145/3434317}
\showDOI{\tempurl}


\bibitem[Mathur and Viswanathan(2020)]%
        {MathurAtomicity20}
\bibfield{author}{\bibinfo{person}{Umang Mathur} {and} \bibinfo{person}{Mahesh
  Viswanathan}.} \bibinfo{year}{2020}\natexlab{}.
\newblock \showarticletitle{Atomicity Checking in Linear Time Using Vector
  Clocks}. In \bibinfo{booktitle}{\emph{Proceedings of the Twenty-Fifth
  International Conference on Architectural Support for Programming Languages
  and Operating Systems}} (Lausanne, Switzerland)
  \emph{(\bibinfo{series}{ASPLOS ’20})}. \bibinfo{publisher}{Association for
  Computing Machinery}, \bibinfo{address}{New York, NY, USA},
  \bibinfo{pages}{183–199}.
\newblock
\showISBNx{9781450371025}
\urldef\tempurl%
\url{https://doi.org/10.1145/3373376.3378475}
\showDOI{\tempurl}


\bibitem[Mattern(1988)]%
        {Mattern1988}
\bibfield{author}{\bibinfo{person}{Friedemann Mattern}.}
  \bibinfo{year}{1988}\natexlab{}.
\newblock \showarticletitle{{Virtual Time and Global States of Distributed
  Systems}}. In \bibinfo{booktitle}{\emph{Parallel and Distributed
  Algorithms}}. \bibinfo{publisher}{North-Holland}, \bibinfo{pages}{215--226}.
\newblock


\bibitem[M\"{u}ehlenfeld and Wotawa(2007)]%
        {helgrind}
\bibfield{author}{\bibinfo{person}{Arndt M\"{u}ehlenfeld} {and}
  \bibinfo{person}{Franz Wotawa}.} \bibinfo{year}{2007}\natexlab{}.
\newblock \showarticletitle{Fault Detection in Multi-threaded C++ Server
  Applications}. In \bibinfo{booktitle}{\emph{Proceedings of the 12th ACM
  SIGPLAN Symposium on Principles and Practice of Parallel Programming}} (San
  Jose, California, USA) \emph{(\bibinfo{series}{PPoPP '07})}.
  \bibinfo{publisher}{ACM}, \bibinfo{address}{New York, NY, USA},
  \bibinfo{pages}{142--143}.
\newblock
\showISBNx{978-1-59593-602-8}
\urldef\tempurl%
\url{https://doi.org/10.1145/1229428.1229457}
\showDOI{\tempurl}


\bibitem[Mukherjee et~al\mbox{.}(2020)]%
        {QL2020}
\bibfield{author}{\bibinfo{person}{Suvam Mukherjee}, \bibinfo{person}{Pantazis
  Deligiannis}, \bibinfo{person}{Arpita Biswas}, {and} \bibinfo{person}{Akash
  Lal}.} \bibinfo{year}{2020}\natexlab{}.
\newblock \showarticletitle{Learning-Based Controlled Concurrency Testing}.
\newblock \bibinfo{journal}{\emph{Proc. ACM Program. Lang.}}
  \bibinfo{volume}{4}, \bibinfo{number}{OOPSLA}, Article
  \bibinfo{articleno}{230} (\bibinfo{date}{nov} \bibinfo{year}{2020}),
  \bibinfo{numpages}{31}~pages.
\newblock
\urldef\tempurl%
\url{https://doi.org/10.1145/3428298}
\showDOI{\tempurl}


\bibitem[Musuvathi and Qadeer(2007)]%
        {ContextBounding2007}
\bibfield{author}{\bibinfo{person}{Madanlal Musuvathi} {and}
  \bibinfo{person}{Shaz Qadeer}.} \bibinfo{year}{2007}\natexlab{}.
\newblock \showarticletitle{Iterative Context Bounding for Systematic Testing
  of Multithreaded Programs}. In \bibinfo{booktitle}{\emph{Proceedings of the
  28th ACM SIGPLAN Conference on Programming Language Design and
  Implementation}} (San Diego, California, USA) \emph{(\bibinfo{series}{PLDI
  '07})}. \bibinfo{publisher}{Association for Computing Machinery},
  \bibinfo{address}{New York, NY, USA}, \bibinfo{pages}{446–455}.
\newblock
\showISBNx{9781595936332}
\urldef\tempurl%
\url{https://doi.org/10.1145/1250734.1250785}
\showDOI{\tempurl}


\bibitem[Musuvathi et~al\mbox{.}(2008)]%
        {heisenbugs}
\bibfield{author}{\bibinfo{person}{Madanlal Musuvathi}, \bibinfo{person}{Shaz
  Qadeer}, \bibinfo{person}{Thomas Ball}, \bibinfo{person}{Gerard Basler},
  \bibinfo{person}{Piramanayagam~Arumuga Nainar}, {and} \bibinfo{person}{Iulian
  Neamtiu}.} \bibinfo{year}{2008}\natexlab{}.
\newblock \showarticletitle{{Finding and Reproducing Heisenbugs in Concurrent
  Programs}}. In \bibinfo{booktitle}{\emph{Proceedings of the 8th USENIX
  Conference on Operating Systems Design and Implementation}} (San Diego,
  California) \emph{(\bibinfo{series}{OSDI'08})}. \bibinfo{publisher}{USENIX
  Association}, \bibinfo{address}{Berkeley, CA, USA},
  \bibinfo{pages}{267--280}.
\newblock


\bibitem[Naik et~al\mbox{.}(2006)]%
        {Naik:2006:ESR:1133255.1134018}
\bibfield{author}{\bibinfo{person}{Mayur Naik}, \bibinfo{person}{Alex Aiken},
  {and} \bibinfo{person}{John Whaley}.} \bibinfo{year}{2006}\natexlab{}.
\newblock \showarticletitle{Effective Static Race Detection for Java}. In
  \bibinfo{booktitle}{\emph{Proceedings of the 27th ACM SIGPLAN Conference on
  Programming Language Design and Implementation}} (Ottawa, Ontario, Canada)
  \emph{(\bibinfo{series}{PLDI '06})}. \bibinfo{publisher}{ACM},
  \bibinfo{address}{New York, NY, USA}, \bibinfo{pages}{308--319}.
\newblock
\showISBNx{1-59593-320-4}
\urldef\tempurl%
\url{https://doi.org/10.1145/1133981.1134018}
\showDOI{\tempurl}


\bibitem[Narayanasamy et~al\mbox{.}(2007)]%
        {Narayanasamy2007}
\bibfield{author}{\bibinfo{person}{Satish Narayanasamy},
  \bibinfo{person}{Zhenghao Wang}, \bibinfo{person}{Jordan Tigani},
  \bibinfo{person}{Andrew Edwards}, {and} \bibinfo{person}{Brad Calder}.}
  \bibinfo{year}{2007}\natexlab{}.
\newblock \showarticletitle{Automatically Classifying Benign and Harmful Data
  Races Using Replay Analysis}. In \bibinfo{booktitle}{\emph{Proceedings of the
  28th ACM SIGPLAN Conference on Programming Language Design and
  Implementation}} (San Diego, California, USA) \emph{(\bibinfo{series}{PLDI
  ’07})}. \bibinfo{publisher}{Association for Computing Machinery},
  \bibinfo{address}{New York, NY, USA}, \bibinfo{pages}{22–31}.
\newblock
\showISBNx{9781595936332}
\urldef\tempurl%
\url{https://doi.org/10.1145/1250734.1250738}
\showDOI{\tempurl}


\bibitem[O'Callahan and Choi(2003)]%
        {choi:2003:HDD:781498.781528}
\bibfield{author}{\bibinfo{person}{Robert O'Callahan} {and}
  \bibinfo{person}{Jong-Deok Choi}.} \bibinfo{year}{2003}\natexlab{}.
\newblock \showarticletitle{Hybrid Dynamic Data Race Detection}. In
  \bibinfo{booktitle}{\emph{Proceedings of the Ninth ACM SIGPLAN Symposium on
  Principles and Practice of Parallel Programming}} (San Diego, California,
  USA) \emph{(\bibinfo{series}{PPoPP '03})}. \bibinfo{publisher}{ACM},
  \bibinfo{address}{New York, NY, USA}, \bibinfo{pages}{167--178}.
\newblock
\showISBNx{1-58113-588-2}
\urldef\tempurl%
\url{https://doi.org/10.1145/781498.781528}
\showDOI{\tempurl}


\bibitem[Pavlogiannis(2019)]%
        {PavlogiannisPOPL20}
\bibfield{author}{\bibinfo{person}{Andreas Pavlogiannis}.}
  \bibinfo{year}{2019}\natexlab{}.
\newblock \showarticletitle{Fast, Sound, and Effectively Complete Dynamic Race
  Prediction}.
\newblock \bibinfo{journal}{\emph{Proc. ACM Program. Lang.}}
  \bibinfo{volume}{4}, \bibinfo{number}{POPL}, Article \bibinfo{articleno}{17}
  (\bibinfo{date}{Dec.} \bibinfo{year}{2019}), \bibinfo{numpages}{29}~pages.
\newblock
\urldef\tempurl%
\url{https://doi.org/10.1145/3371085}
\showDOI{\tempurl}


\bibitem[Pozniansky and Schuster(2003)]%
        {Pozniansky:2003:EOD:966049.781529}
\bibfield{author}{\bibinfo{person}{Eli Pozniansky} {and} \bibinfo{person}{Assaf
  Schuster}.} \bibinfo{year}{2003}\natexlab{}.
\newblock \showarticletitle{Efficient On-the-fly Data Race Detection in
  Multithreaded C++ Programs}. In \bibinfo{booktitle}{\emph{Proceedings of the
  Ninth ACM SIGPLAN Symposium on Principles and Practice of Parallel
  Programming}} (San Diego, California, USA) \emph{(\bibinfo{series}{PPoPP
  '03})}. \bibinfo{publisher}{ACM}, \bibinfo{address}{New York, NY, USA},
  \bibinfo{pages}{179--190}.
\newblock
\showISBNx{1-58113-588-2}
\urldef\tempurl%
\url{https://doi.org/10.1145/781498.781529}
\showDOI{\tempurl}


\bibitem[Raman et~al\mbox{.}(2012)]%
        {raman2012scalable}
\bibfield{author}{\bibinfo{person}{Raghavan Raman}, \bibinfo{person}{Jisheng
  Zhao}, \bibinfo{person}{Vivek Sarkar}, \bibinfo{person}{Martin Vechev}, {and}
  \bibinfo{person}{Eran Yahav}.} \bibinfo{year}{2012}\natexlab{}.
\newblock \showarticletitle{Scalable and Precise Dynamic Datarace Detection for
  Structured Parallelism}. In \bibinfo{booktitle}{\emph{Proceedings of the 33rd
  ACM SIGPLAN Conference on Programming Language Design and Implementation}}
  (Beijing, China) \emph{(\bibinfo{series}{PLDI '12})}.
  \bibinfo{publisher}{ACM}, \bibinfo{address}{New York, NY, USA},
  \bibinfo{pages}{531--542}.
\newblock
\showISBNx{978-1-4503-1205-9}
\urldef\tempurl%
\url{https://doi.org/10.1145/2254064.2254127}
\showDOI{\tempurl}


\bibitem[Rhodes et~al\mbox{.}(2017)]%
        {bigfoot}
\bibfield{author}{\bibinfo{person}{Dustin Rhodes}, \bibinfo{person}{Cormac
  Flanagan}, {and} \bibinfo{person}{Stephen~N. Freund}.}
  \bibinfo{year}{2017}\natexlab{}.
\newblock \showarticletitle{BigFoot: Static Check Placement for Dynamic Race
  Detection}. In \bibinfo{booktitle}{\emph{Proceedings of the 38th ACM SIGPLAN
  Conference on Programming Language Design and Implementation}} (Barcelona,
  Spain) \emph{(\bibinfo{series}{PLDI 2017})}. \bibinfo{publisher}{ACM},
  \bibinfo{address}{New York, NY, USA}, \bibinfo{pages}{141--156}.
\newblock
\showISBNx{978-1-4503-4988-8}
\urldef\tempurl%
\url{https://doi.org/10.1145/3062341.3062350}
\showDOI{\tempurl}


\bibitem[Roemer et~al\mbox{.}(2018)]%
        {Roemer18}
\bibfield{author}{\bibinfo{person}{Jake Roemer}, \bibinfo{person}{Kaan
  Gen\c{c}}, {and} \bibinfo{person}{Michael~D. Bond}.}
  \bibinfo{year}{2018}\natexlab{}.
\newblock \showarticletitle{High-coverage, Unbounded Sound Predictive Race
  Detection}. In \bibinfo{booktitle}{\emph{Proceedings of the 39th ACM SIGPLAN
  Conference on Programming Language Design and Implementation}} (Philadelphia,
  PA, USA) \emph{(\bibinfo{series}{PLDI 2018})}. \bibinfo{publisher}{ACM},
  \bibinfo{address}{New York, NY, USA}, \bibinfo{pages}{374--389}.
\newblock
\showISBNx{978-1-4503-5698-5}
\urldef\tempurl%
\url{https://doi.org/10.1145/3192366.3192385}
\showDOI{\tempurl}


\bibitem[Roemer et~al\mbox{.}(2020)]%
        {Roemer20}
\bibfield{author}{\bibinfo{person}{Jake Roemer}, \bibinfo{person}{Kaan
  Gen\c{c}}, {and} \bibinfo{person}{Michael~D. Bond}.}
  \bibinfo{year}{2020}\natexlab{}.
\newblock \showarticletitle{SmartTrack: Efficient Predictive Race Detection}.
  In \bibinfo{booktitle}{\emph{Proceedings of the 41st ACM SIGPLAN Conference
  on Programming Language Design and Implementation}} (London, UK)
  \emph{(\bibinfo{series}{PLDI 2020})}. \bibinfo{publisher}{Association for
  Computing Machinery}, \bibinfo{address}{New York, NY, USA},
  \bibinfo{pages}{747–762}.
\newblock
\showISBNx{9781450376136}
\urldef\tempurl%
\url{https://doi.org/10.1145/3385412.3385993}
\showDOI{\tempurl}


\bibitem[Said et~al\mbox{.}(2011)]%
        {Said2011}
\bibfield{author}{\bibinfo{person}{Mahmoud Said}, \bibinfo{person}{Chao Wang},
  \bibinfo{person}{Zijiang Yang}, {and} \bibinfo{person}{Karem Sakallah}.}
  \bibinfo{year}{2011}\natexlab{}.
\newblock \showarticletitle{{Generating Data Race Witnesses by an SMT-based
  Analysis}}. In \bibinfo{booktitle}{\emph{Proceedings of the Third
  International Conference on NASA Formal Methods}} (Pasadena, CA)
  \emph{(\bibinfo{series}{NFM'11})}. \bibinfo{publisher}{Springer-Verlag},
  \bibinfo{address}{Berlin, Heidelberg}, \bibinfo{pages}{313--327}.
\newblock


\bibitem[{Sandia National Laboratories}(2022)]%
        {mantevo}
\bibfield{author}{\bibinfo{person}{{Sandia National Laboratories}}.}
  \bibinfo{year}{2022}\natexlab{}.
\newblock \bibinfo{booktitle}{\emph{{Mantevo Project}}}.
\newblock
\newblock
\shownote{Accessed: 2022-04-11}.


\bibitem[Savage et~al\mbox{.}(1997)]%
        {savage1997eraser}
\bibfield{author}{\bibinfo{person}{Stefan Savage}, \bibinfo{person}{Michael
  Burrows}, \bibinfo{person}{Greg Nelson}, \bibinfo{person}{Patrick
  Sobalvarro}, {and} \bibinfo{person}{Thomas Anderson}.}
  \bibinfo{year}{1997}\natexlab{}.
\newblock \showarticletitle{Eraser: A Dynamic Data Race Detector for
  Multithreaded Programs}.
\newblock \bibinfo{journal}{\emph{ACM Trans. Comput. Syst.}}
  \bibinfo{volume}{15}, \bibinfo{number}{4} (\bibinfo{date}{Nov.}
  \bibinfo{year}{1997}), \bibinfo{pages}{391–411}.
\newblock
\showISSN{0734-2071}
\urldef\tempurl%
\url{https://doi.org/10.1145/265924.265927}
\showDOI{\tempurl}


\bibitem[Schmitz et~al\mbox{.}(2020)]%
        {schmitz2019dataraceonaccelerator}
\bibfield{author}{\bibinfo{person}{Adrian Schmitz}, \bibinfo{person}{Joachim
  Protze}, \bibinfo{person}{Lechen Yu}, \bibinfo{person}{Simon Schwitanski},
  {and} \bibinfo{person}{Matthias~S. M{\"u}ller}.}
  \bibinfo{year}{2020}\natexlab{}.
\newblock \showarticletitle{{DataRaceOnAccelerator -- A Micro-benchmark Suite
  for Evaluating Correctness Tools Targeting Accelerators}}. In
  \bibinfo{booktitle}{\emph{Euro-Par 2019: Parallel Processing Workshops}},
  \bibfield{editor}{\bibinfo{person}{Ulrich Schwardmann},
  \bibinfo{person}{Christian Boehme}, \bibinfo{person}{Dora B.~Heras},
  \bibinfo{person}{Valeria Cardellini}, \bibinfo{person}{Emmanuel Jeannot},
  \bibinfo{person}{Antonio Salis}, \bibinfo{person}{Claudio Schifanella},
  \bibinfo{person}{Ravi~Reddy Manumachu}, \bibinfo{person}{Dieter Schwamborn},
  \bibinfo{person}{Laura Ricci}, \bibinfo{person}{Oh~Sangyoon},
  \bibinfo{person}{Thomas Gruber}, \bibinfo{person}{Laura Antonelli}, {and}
  \bibinfo{person}{Stephen~L. Scott}} (Eds.). \bibinfo{publisher}{Springer
  International Publishing}, \bibinfo{address}{Cham},
  \bibinfo{pages}{245--257}.
\newblock
\urldef\tempurl%
\url{https://doi.org/10.1007/978-3-030-48340-1_19}
\showDOI{\tempurl}


\bibitem[Sen(2008)]%
        {Sen:2008:RDR:1375581.1375584}
\bibfield{author}{\bibinfo{person}{Koushik Sen}.}
  \bibinfo{year}{2008}\natexlab{}.
\newblock \showarticletitle{Race Directed Random Testing of Concurrent
  Programs}. In \bibinfo{booktitle}{\emph{Proceedings of the 29th ACM SIGPLAN
  Conference on Programming Language Design and Implementation}} (Tucson, AZ,
  USA) \emph{(\bibinfo{series}{PLDI '08})}. \bibinfo{publisher}{ACM},
  \bibinfo{address}{New York, NY, USA}, \bibinfo{pages}{11--21}.
\newblock
\showISBNx{978-1-59593-860-2}
\urldef\tempurl%
\url{https://doi.org/10.1145/1375581.1375584}
\showDOI{\tempurl}


\bibitem[Sen et~al\mbox{.}(2005)]%
        {sen2005detecting}
\bibfield{author}{\bibinfo{person}{Koushik Sen}, \bibinfo{person}{Grigore
  Ro\c{s}u}, {and} \bibinfo{person}{Gul Agha}.}
  \bibinfo{year}{2005}\natexlab{}.
\newblock \showarticletitle{{Detecting Errors in Multithreaded Programs by
  Generalized Predictive Analysis of Executions}}. In
  \bibinfo{booktitle}{\emph{Proceedings of the 7th IFIP WG 6.1 International
  Conference on Formal Methods for Open Object-Based Distributed Systems}}
  (Athens, Greece) \emph{(\bibinfo{series}{FMOODS'05})}.
  \bibinfo{publisher}{Springer-Verlag}, \bibinfo{address}{Berlin, Heidelberg},
  \bibinfo{pages}{211--226}.
\newblock


\bibitem[Serebryany and Iskhodzhanov(2009)]%
        {threadsanitizer}
\bibfield{author}{\bibinfo{person}{Konstantin Serebryany} {and}
  \bibinfo{person}{Timur Iskhodzhanov}.} \bibinfo{year}{2009}\natexlab{}.
\newblock \showarticletitle{{ThreadSanitizer: Data Race Detection in
  Practice}}. In \bibinfo{booktitle}{\emph{Proceedings of the Workshop on
  Binary Instrumentation and Applications}} (New York, New York, USA)
  \emph{(\bibinfo{series}{WBIA '09})}. \bibinfo{publisher}{ACM},
  \bibinfo{address}{New York, NY, USA}, \bibinfo{pages}{62--71}.
\newblock


\bibitem[Serebryany et~al\mbox{.}(2011)]%
        {threadsanitizerLLVM}
\bibfield{author}{\bibinfo{person}{Konstantin Serebryany},
  \bibinfo{person}{Alexander Potapenko}, \bibinfo{person}{Timur Iskhodzhanov},
  {and} \bibinfo{person}{Dmitry Vyukov}.} \bibinfo{year}{2011}\natexlab{}.
\newblock \bibinfo{booktitle}{\emph{Dynamic Race Detection with LLVM
  Compiler}}.
\newblock \bibinfo{type}{{T}echnical {R}eport}. \bibinfo{institution}{Google}.
\newblock


\bibitem[Smaragdakis et~al\mbox{.}(2012)]%
        {cp2012}
\bibfield{author}{\bibinfo{person}{Yannis Smaragdakis}, \bibinfo{person}{Jacob
  Evans}, \bibinfo{person}{Caitlin Sadowski}, \bibinfo{person}{Jaeheon Yi},
  {and} \bibinfo{person}{Cormac Flanagan}.} \bibinfo{year}{2012}\natexlab{}.
\newblock \showarticletitle{Sound Predictive Race Detection in Polynomial
  Time}. In \bibinfo{booktitle}{\emph{Proceedings of the 39th Annual ACM
  SIGPLAN-SIGACT Symposium on Principles of Programming Languages}}
  (Philadelphia, PA, USA) \emph{(\bibinfo{series}{POPL '12})}.
  \bibinfo{publisher}{ACM}, \bibinfo{address}{New York, NY, USA},
  \bibinfo{pages}{387--400}.
\newblock
\showISBNx{978-1-4503-1083-3}
\urldef\tempurl%
\url{https://doi.org/10.1145/2103656.2103702}
\showDOI{\tempurl}


\bibitem[{Smith} et~al\mbox{.}(2001)]%
        {JGF2001}
\bibfield{author}{\bibinfo{person}{L.~A. {Smith}}, \bibinfo{person}{J.~M.
  {Bull}}, {and} \bibinfo{person}{J. {Obdrizalek}}.}
  \bibinfo{year}{2001}\natexlab{}.
\newblock \showarticletitle{A Parallel Java Grande Benchmark Suite}. In
  \bibinfo{booktitle}{\emph{SC '01: Proceedings of the 2001 ACM/IEEE Conference
  on Supercomputing}}. \bibinfo{pages}{6--6}.
\newblock
\urldef\tempurl%
\url{https://doi.org/10.1145/582034.582042}
\showDOI{\tempurl}


\bibitem[Sorrentino et~al\mbox{.}(2010)]%
        {penelope2010}
\bibfield{author}{\bibinfo{person}{Francesco Sorrentino},
  \bibinfo{person}{Azadeh Farzan}, {and} \bibinfo{person}{P. Madhusudan}.}
  \bibinfo{year}{2010}\natexlab{}.
\newblock \showarticletitle{PENELOPE: Weaving Threads to Expose Atomicity
  Violations}. In \bibinfo{booktitle}{\emph{Proceedings of the Eighteenth ACM
  SIGSOFT International Symposium on Foundations of Software Engineering}}
  (Santa Fe, New Mexico, USA) \emph{(\bibinfo{series}{FSE '10})}.
  \bibinfo{publisher}{ACM}, \bibinfo{address}{New York, NY, USA},
  \bibinfo{pages}{37--46}.
\newblock
\showISBNx{978-1-60558-791-2}
\urldef\tempurl%
\url{https://doi.org/10.1145/1882291.1882300}
\showDOI{\tempurl}


\bibitem[Surendran and Sarkar(2016)]%
        {surendran2016dynamic}
\bibfield{author}{\bibinfo{person}{Rishi Surendran} {and}
  \bibinfo{person}{Vivek Sarkar}.} \bibinfo{year}{2016}\natexlab{}.
\newblock \showarticletitle{{Dynamic determinacy race detection for task
  parallelism with futures}}. In \bibinfo{booktitle}{\emph{International
  Conference on Runtime Verification}}. Springer, \bibinfo{pages}{368--385}.
\newblock


\bibitem[Thomson et~al\mbox{.}(2016)]%
        {ControlledConcurrency2016}
\bibfield{author}{\bibinfo{person}{Paul Thomson}, \bibinfo{person}{Alastair~F.
  Donaldson}, {and} \bibinfo{person}{Adam Betts}.}
  \bibinfo{year}{2016}\natexlab{}.
\newblock \showarticletitle{Concurrency Testing Using Controlled Schedulers: An
  Empirical Study}.
\newblock \bibinfo{journal}{\emph{ACM Trans. Parallel Comput.}}
  \bibinfo{volume}{2}, \bibinfo{number}{4}, Article \bibinfo{articleno}{23}
  (\bibinfo{date}{feb} \bibinfo{year}{2016}), \bibinfo{numpages}{37}~pages.
\newblock
\showISSN{2329-4949}
\urldef\tempurl%
\url{https://doi.org/10.1145/2858651}
\showDOI{\tempurl}


\bibitem[von Praun and Gross(2003)]%
        {vonPraun03}
\bibfield{author}{\bibinfo{person}{Christoph von Praun} {and}
  \bibinfo{person}{Thomas~R. Gross}.} \bibinfo{year}{2003}\natexlab{}.
\newblock \showarticletitle{Static Conflict Analysis for Multi-threaded
  Object-oriented Programs}. In \bibinfo{booktitle}{\emph{Proceedings of the
  ACM SIGPLAN 2003 Conference on Programming Language Design and
  Implementation}} (San Diego, California, USA) \emph{(\bibinfo{series}{PLDI
  '03})}. \bibinfo{publisher}{ACM}, \bibinfo{address}{New York, NY, USA},
  \bibinfo{pages}{115--128}.
\newblock
\showISBNx{1-58113-662-5}
\urldef\tempurl%
\url{https://doi.org/10.1145/781131.781145}
\showDOI{\tempurl}


\bibitem[Voung et~al\mbox{.}(2007)]%
        {voung2007relay}
\bibfield{author}{\bibinfo{person}{Jan~Wen Voung}, \bibinfo{person}{Ranjit
  Jhala}, {and} \bibinfo{person}{Sorin Lerner}.}
  \bibinfo{year}{2007}\natexlab{}.
\newblock \showarticletitle{{RELAY: Static Race Detection on Millions of Lines
  of Code}}. In \bibinfo{booktitle}{\emph{Proceedings of the the 6th Joint
  Meeting of the European Software Engineering Conference and the ACM SIGSOFT
  Symposium on The Foundations of Software Engineering}} (Dubrovnik, Croatia)
  \emph{(\bibinfo{series}{ESEC-FSE '07})}. \bibinfo{publisher}{ACM},
  \bibinfo{address}{New York, NY, USA}, \bibinfo{pages}{205--214}.
\newblock


\bibitem[\v{S}ev\v{c}\'{i}k(2011)]%
        {sevcik11}
\bibfield{author}{\bibinfo{person}{Jaroslav \v{S}ev\v{c}\'{i}k}.}
  \bibinfo{year}{2011}\natexlab{}.
\newblock \showarticletitle{Safe optimizations for shared-memory concurrent
  programs}.
\newblock \bibinfo{journal}{\emph{SIGPLAN Notices}} \bibinfo{volume}{46},
  \bibinfo{number}{6} (\bibinfo{year}{2011}), \bibinfo{pages}{306--316}.
\newblock


\bibitem[\v{S}ev\v{c}\'{i}k and Aspinall(2008)]%
        {sa08}
\bibfield{author}{\bibinfo{person}{Jaroslav \v{S}ev\v{c}\'{i}k} {and}
  \bibinfo{person}{David Aspinall}.} \bibinfo{year}{2008}\natexlab{}.
\newblock \showarticletitle{On the validity of program transformations in the
  {Java} memory model}. In \bibinfo{booktitle}{\emph{Proceedings of the
  European Conference on Object-Oriented Programming (ECOOP)}}.
  \bibinfo{pages}{27--51}.
\newblock


\bibitem[Wood et~al\mbox{.}(2017)]%
        {Wood2017}
\bibfield{author}{\bibinfo{person}{Benjamin~P. Wood}, \bibinfo{person}{Man
  Cao}, \bibinfo{person}{Michael~D. Bond}, {and} \bibinfo{person}{Dan
  Grossman}.} \bibinfo{year}{2017}\natexlab{}.
\newblock \showarticletitle{Instrumentation Bias for Dynamic Data Race
  Detection}.
\newblock \bibinfo{journal}{\emph{Proc. ACM Program. Lang.}}
  \bibinfo{volume}{1}, \bibinfo{number}{OOPSLA}, Article
  \bibinfo{articleno}{69} (\bibinfo{date}{Oct.} \bibinfo{year}{2017}),
  \bibinfo{numpages}{31}~pages.
\newblock
\showISSN{2475-1421}
\urldef\tempurl%
\url{https://doi.org/10.1145/3133893}
\showDOI{\tempurl}


\bibitem[Xu et~al\mbox{.}(2020)]%
        {krace2020}
\bibfield{author}{\bibinfo{person}{Meng Xu}, \bibinfo{person}{Sanidhya
  Kashyap}, \bibinfo{person}{Hanqing Zhao}, {and} \bibinfo{person}{Taesoo
  Kim}.} \bibinfo{year}{2020}\natexlab{}.
\newblock \showarticletitle{Krace: Data Race Fuzzing for Kernel File Systems}.
  In \bibinfo{booktitle}{\emph{2020 IEEE Symposium on Security and Privacy
  (SP)}}. \bibinfo{pages}{1643--1660}.
\newblock
\urldef\tempurl%
\url{https://doi.org/10.1109/SP40000.2020.00078}
\showDOI{\tempurl}


\bibitem[Zhivich and Cunningham(2009)]%
        {SoftwareErrors2009}
\bibfield{author}{\bibinfo{person}{M. Zhivich} {and} \bibinfo{person}{R.~K.
  Cunningham}.} \bibinfo{year}{2009}\natexlab{}.
\newblock \showarticletitle{The Real Cost of Software Errors}.
\newblock \bibinfo{journal}{\emph{IEEE Security and Privacy}}
  \bibinfo{volume}{7}, \bibinfo{number}{2} (\bibinfo{date}{March}
  \bibinfo{year}{2009}), \bibinfo{pages}{87–90}.
\newblock
\showISSN{1540-7993}
\urldef\tempurl%
\url{https://doi.org/10.1109/MSP.2009.56}
\showDOI{\tempurl}


\end{thebibliography}

\end{document}